\documentclass[a4paper,UKenglish,cleveref, autoref, thm-restate,numberwithinsect]{lipics-v2021}

\pdfoutput=1 %
\hideLIPIcs  %

\usepackage{amsmath}
\usepackage{amsthm}
\usepackage{amssymb}
\usepackage{hyperref}

\usepackage{xspace}
\usepackage{caption}
\usepackage{xcolor}
\usepackage{tikz}
\usetikzlibrary{positioning}
\usepackage{booktabs}
\usepackage[capitalise]{cleveref}

\usepackage{algorithm}
\usepackage{algpseudocode}
\algrenewcommand\algorithmicrequire{\textbf{Initialize:}}
\algrenewcommand\algorithmicensure{\textbf{At every step $t$:}}
\algnewcommand\algorithmicmatch{\textbf{The conditions that trigger a match and counter resetting:}}
\algnewcommand\MATCH{\item[\algorithmicmatch]}
\algnewcommand\algorithmicpriority{\textbf{Priority rule:}}
\algnewcommand\PRIORITY{\item[\algorithmicpriority]}

\newcommand{\mA}{\mathcal{A}}
\newcommand{\SOL}{\mathrm{SOL}}
\newcommand{\cost}{\mathrm{cost}}
\newcommand{\m}{\mathrm{m}}
\newcommand{\ms}{\mathrm{ms}}
\newcommand{\LEAF}{\mathsf{LEAF}}
\newcommand{\ROOT}{\mathsf{ROOT}}
\newcommand{\READY}{\mathsf{READY}}
\newcommand{\ALG}{\mathrm{ALG}}
\newcommand{\bu}{\mathbf{u}}
\newcommand{\kv}{\mathcal{K}^{V}}
\newcommand{\OPT}{\mathrm{OPT}}
\allowdisplaybreaks

\usepackage[appendix=append, bibliography=common]{apxproof}
\theoremstyle{plain}
\newtheoremrep{claimapp}[theorem]{Claim}
\theoremstyle{plain}
\newtheoremrep{theoremapp}[theorem]{Theorem}
\theoremstyle{plain}
\newtheoremrep{lemmaapp}[theorem]{Lemma}
\title{Online Matching with Size-Based and Convex Delays} 

\author{Junhao Gan}{School of Computing and Information Systems, The University of Melbourne, Australia}{junhao.gan@unimelb.edu.au}{https://orcid.org/0000-0001-9101-1503}{Supported in part by the Australian Government through the Australian Research Council ARC DP230102908.}

\author{Xiao Sun}{School of Computing and Information Systems, The University of Melbourne, Australia}{xiao.sun.1@student.unimelb.edu.au}{https://orcid.org/0009-0003-2635-8764}{Supported by the Australian Government through the Australian Research Council DP240101353 and by the University of Melbourne through the Melbourne Research Scholarship.}

\author{Seeun William Umboh}{School of Computing and Information Systems, The University of Melbourne, Australia \and ARC Training Centre in Optimisation Technologies, Integrated Methodologies, and Applications (OPTIMA), Australia}{william.umboh@unimelb.edu.au}{https://orcid.org/0000-0001-6984-4007}{Supported by the Australian Government through the Australian Research Council DP240101353.}

\authorrunning{J. Gan, X. Sun, S. W. Umboh}

\Copyright{Junhao Gan, Xiao Sun, Seeun William Umboh} %

\ccsdesc[500]{Theory of computation~Online algorithms}

\keywords{Online Algorithms with Delays, Competitive Analysis, Matching}

\nolinenumbers %

\begin{document}

\maketitle

\begin{abstract}
We study the online min-cost perfect matching with delay (MPMD) problem where $m$ requests arrive in a metric space of $n$ points. In MPMD, an algorithm can choose to match a request or to delay, and the objective is to minimise the sum of connection and delay costs. The connection cost of a match is the distance between the locations of two matched requests in the metric, and the increase of the delay cost is a function of the set of unmatched requests at every moment. In this paper, we study two different types of delay functions, size-based (MPMD-Size) and convex delays (MPMD-Convex). 

The study of MPMD-Size was initiated by Deryckere and Umboh (APPROX/RANDOM 2023) where the
instantaneous delay increment is a non-negative monotone function of the number of unmatched requests. We give an exponential improvement on the lower bounds for the deterministic and randomized competitive ratios. We also give improved upper bounds in terms of $n$, as opposed to Deryckere and Umboh's upper bounds that were functions of $m$, which can be much larger than $n$. Our results settle the deterministic competitive ratio (up to constants). At the heart of these results is a succinct encoding scheme of MPMD-Size on a given $n$-point metric as a metrical task system problem on a $2^{n-1}$-point metric. 

We also consider MPMD-Convex proposed by Liu et al. (ISAAC 2018) where the delay cost incurred by each request is a uniform convex delay function of the time difference between its arrival time and the moment that it is matched by the algorithm. They focused on delay functions $f$ that are unbounded, non-decreasing, continuous, and satisfy $f(0)=f'(0)=0$, and showed that the deterministic competitive ratio is  $\Omega(n)$ for $n$-point uniform metrics. We show that, surprisingly, when $f$ is a non-negative, monotone polynomial with $f'(0)>0$, there is an $O(1)$-competitive deterministic algorithm for uniform metrics. Our result completes our understanding of MPMD-Convex on uniform metrics for a broad class of functions. 
\end{abstract}

\section{Introduction}

In recent years, there has been much interest in revisiting classical online problems through the lens of online problems with delay (e.g.~\cite{DBLP:conf/esa/AzarCKT20,DBLP:journals/disopt/Epstein21,DBLP:conf/icalp/ChenKU22}). %
In such problems, instead of serving every request immediately, one can choose to delay the service of a request while accumulating a delay cost, and the total cost of a solution is the cost of serving all requests plus the accumulated delay costs. In such problems, the trade-off between service cost and delay cost is at the center of the decision making of an online algorithm: the algorithm can wait until many requests have arrived and then apply an offline algorithm to serve them cheaply as a batch, but this will incur a large delay cost.

In this paper, we study the Min-Cost Perfect Matching with Delay (MPMD) problem proposed by Emek et al.~\cite{DBLP:conf/stoc/EmekKW16}. In MPMD, there is an underlying metric space of $n$ points in which $m$ requests arrive over time at different points. At every moment when there is an available request, the online algorithm can choose to either match it with another request and incur a \emph{connection cost} equal to the distance between the locations of two requests, or to leave it unmatched and incur a \emph{delay cost}. 

MPMD has many intuitive interpretations and real-life applications. For example, suppose that one needs to design a mechanism for an online game platform which always needs to match two players to compete against each other. The mechanism wants to match two players who have similar levels to each other as this could reduce their dissatisfaction, which is a connection cost that the platform wants to minimize. The platform is allowed to choose not to match a player immediately but to wait for another player whose level is close enough to the first player to join the game and then to match them, so that the connection cost can be reduced. However, if one player has been waiting too long, they might lose patience and become more and more unsatisfied with this game platform, which will incur more delay cost instead. Therefore, the online algorithm needs to balance the trade-off between these two types of costs, which is at the center of studies in this area. This kind of problems are inherently online since we do not know what will come in the future, and once we have made a decision, it is irreversible.

\subparagraph{Delay costs.} While the connection cost of a match is the distance between the locations of the two requests that are matched, there are various delay models according to how the delay cost increases at every moment when there is at least one request pending. Some delay models apply a uniform delay function that the total delay of each request is a function of the time different between its arrival and being matched, e.g. linear~\cite{DBLP:conf/stoc/EmekKW16}, convex~\cite{DBLP:conf/isaac/LiuPWW18}, and concave delays~\cite{DBLP:conf/soda/AzarRV21}. The most general framework is that the delay increment is a function of the set of requests that are pending at every moment and the algorithm is non-clairvoyant that it does not know future delay functions, e.g. set and size-based delays~\cite{DBLP:conf/approx/DeryckereU23}. 

\subparagraph{Competitive ratios: $n$ vs $m$.} Prior studies on these delay models have achieved results that express competitive ratios as functions of $n$ or $m$. Note that $m$ can be incomparably larger than $n$ when the request sequence is sufficiently long, and such algorithms can perform badly even on a simple one-point metric when $m$ tends to infinity (e.g. \cite{DBLP:journals/mst/AzarF20}). Thus, we aim at designing algorithms whose competitive ratios depend only on $n$.

\subsection{Our Contributions}

In this paper, we make contributions to the study of two generalizations of linear delay functions mentioned above that are much less understood than the others: size-based and convex delays.

\subparagraph{Size-based delay (MPMD-Size).} In the size-based delay setting, at every step the delay increment is a monotone function of the number of currently pending requests (See \cref{section 3.1} for a formal definition). We prove bounds on the deterministic and randomized competitive ratios in terms of $n$. Previously, only upper bounds in terms of $m$ are known.

\begin{theorem}\label{det}
For every $n$-point metric, the deterministic competitive ratio of MPMD-Size is $\Theta(2^n)$.
\end{theorem}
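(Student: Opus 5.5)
The plan has two halves: an upper bound via a reduction to metrical task systems, and a matching lower bound via the reverse direction of the same encoding.

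\textbf{Upper bound: encoding.} The abstract points to a succinct encoding of MPMD-Size on an $n$-point metric $(V,d)$ as a metrical task system (MTS) on $2^{n-1}$ states. I would reduce to the normal form in which, at any moment, at most one pending request sits at each point. Two pending requests at the same point can be matched immediately at zero connection cost, and doing so only lowers the pending count. Since the delay rate is a monotone function of that count, this never hurts. So a configuration is a subset $S\subseteq V$ of points holding a pending request.

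The parity of $|S|$ is determined by the arrivals so far. Hence only $2^{n-1}$ configurations of the correct parity are reachable at any time. Configurations of different parity can be identified through a fixed pairing, for example by toggling a designated point, which gives a state space of size exactly $2^{n-1}$.

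\textbf{Upper bound: the metric and the tasks.} The distance between two configurations $S$ and $S'$ is the cost of a min-cost perfect matching on their symmetric difference $S\triangle S'$ in $(V,d)$. I would check that this is a metric, with the triangle inequality following from composing matchings. Moving from $S$ to $S'$ corresponds to matching requests, up to a factor 2 for routing through the arrival toggles.

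A time interval of length $\delta$ becomes a task with cost $g(|S|)\,\delta$ in state $S$. An arrival at point $v$ is a relabeling that toggles $v$. I would argue that this relabeling can be absorbed by the fixed identification, so the underlying MTS metric is unchanged. Then, within a constant factor, the online and offline costs of MPMD-Size equal those of the MTS instance.

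Applying the deterministic $O(N)$-competitive MTS algorithm of Borodin--Linial--Saks with $N=2^{n-1}$ states gives $O(2^n)$. The delicate point is showing that arrivals act as isometries of the state metric, so that they do not break the MTS framework. I expect this step to be the main obstacle, together with the continuous-time discretisation.

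\textbf{Lower bound.} I would show the reverse direction: any MTS instance on the encoded metric, or at least one known to force $\Omega(N)$ for deterministic algorithms, can be simulated by MPMD-Size with a suitable monotone $g$. The standard deterministic lower bound of $2N-1$ for MTS uses adversarial tasks that put infinite (large) cost on the algorithm's current state. I would use a uniform-like sub-metric: take the uniform metric on $n$ points, so that states at symmetric-difference distance 2 are at distance 1.

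To mimic the task "penalise the current state $S$", I would choose $g$ as a steep step function, with $g(k)=0$ for $k\le k_0$ and $g(k)$ huge beyond that. The adversary then injects requests at the points that make the algorithm's current configuration have too many pending requests, while every other parity-compatible configuration stays cheap. This is plausible but requires care, since arrivals change all states simultaneously.

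Failing that, I would argue directly with a one-point-perturbation adversary. It keeps the algorithm chasing among $\binom{n}{n/2}\approx 2^n/\sqrt n$ configurations, while OPT, knowing the sequence in advance, picks the configuration that survives longest, giving a ratio of order $2^n$ by the usual averaging over states.
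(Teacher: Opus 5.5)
Your overall architecture matches the paper's: a two-way reduction between MPMD-Size on $(V,d)$ and MTS on a $2^{n-1}$-point metric, combined with the $\Theta(N)$ deterministic MTS bounds of Borodin--Linial--Saks. However, both directions have genuine gaps.

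\textbf{Upper bound.} A \emph{fixed} identification of odd with even configurations does not absorb arrivals. With a designated point $v_0$, an arrival at $v\neq v_0$ shifts the server's label by $\{v,v_0\}$, which is a forced jump of length $d(v,v_0)$. The identification must be time-dependent. The right coordinates are $\mathbf{u}=S\oplus R_t$, where $R_t$ is the arrival-parity vector; equivalently, $\mathbf{u}$ records the parity of how often each point has been matched. These coordinates always have even weight and do not change on arrivals; an arrival only changes the task to $\rho_t(\mathbf{u})=f_t(|R_t\oplus\mathbf{u}|)$.

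More seriously, you never address the fact that MTS moves are unconstrained while MPMD moves are not: from pending set $S$ you can only reach some $S'\subseteq S$. An online MTS algorithm may move to a state $C_t$ whose induced matching uses a point with no pending request. Its trajectory therefore cannot simply be copied, and your claim that the online costs agree within a constant factor is unjustified.

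The paper handles this with a lagging simulation. It executes only those edges of a min-cost perfect matching of $M_{t-1}\oplus C_t$ that have pending requests at both ends. An injection argument then gives $|R_t\oplus M_t|\le|R_t\oplus C_t|$, so monotonicity of $f_t$ bounds the delay by the MTS service cost. The executed and unexecuted edges split that matching, so $D(M_{t-1},M_t)+D(M_t,C_t)\le D(M_{t-1},C_t)$. Together with the triangle inequality and the potential $D(M_t,C_t)$, this charges all connection cost to MTS movement cost.

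\textbf{Lower bound.} The step-function idea is the right one, but your specifics do not yield the theorem. A generic threshold $k_0$ penalises many states at once. You need $k_0=n-1$, that is, $f_t(N)=\varepsilon_t$ for $N\ge n$ and $f_t(N)=0$ otherwise. Arrivals should be chosen so that $R_t=\mathbf{u}_t\oplus\mathbf{1}$, where $\mathbf{u}_t$ is the state hit by the MTS-Single task. Then exactly one state has positive cost, and every MTS-Single instance on $(\mathcal{K}^V,D)$ is simulated exactly. An online MPMD algorithm sitting at $M_t$ is then an online MTS-Single algorithm of no larger cost.

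Restricting to the uniform metric on $n$ points says nothing about other metrics, whereas the theorem is stated for every $n$-point metric. Universality comes for free: the deterministic $2N-1$ MTS lower bound holds on every $N$-point metric, in particular on $(\mathcal{K}^V,D)$ for arbitrary $(V,d)$, and it transfers to MTS-Single via $c_{\mathrm{m}}\le c_{\mathrm{ms}}$. Your fallback counts only $\binom{n}{n/2}$ configurations, so it gives at best $\Omega(2^n/\sqrt{n})$.

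The offline side also needs an argument: the optimal MTS-Single solution must be realizable in MPMD. The paper gets this by taking that solution to be lazy, so it moves only at steps where it sits on $\mathbf{u}_t$. At exactly those steps all $n$ points carry pending requests, so any matching is feasible.
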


\begin{theorem}\label{ran}
For every $n$-point metric, the randomized competitive ratio (against an oblivious adversary) of MPMD-Size is $O(n^2)$ and $\Omega(n)$.
\end{theorem}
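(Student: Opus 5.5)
The plan rests on the encoding scheme mentioned in the abstract: a reduction from MPMD-Size on an $n$-point metric $(V,d)$ to a metrical task system (MTS) on a $2^{n-1}$-point metric, and back. The key observation is that, up to constant factors, a solution only needs to keep at most one pending request per point. Two pending requests at the same point can be matched at zero connection cost. By monotonicity of the size-based delay function, doing so never increases the delay. So the relevant state of an algorithm at any moment is the set $S\subseteq V$ of points holding an unmatched request. The total number of pending requests has the same parity as the number of arrivals so far, so $|S|$ has fixed parity at each time. Hence the number of admissible states is $2^{n-1}$.

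\textbf{Step 1: MPMD-Size reduces to MTS.} I would define the MTS state space as the subsets of a fixed parity, with distance $D(S,T)$ equal to the min-cost perfect matching on $S\triangle T$ in $d$. This is a metric, since symmetric differences compose and matchings can be combined via the triangle inequality. At each moment the MTS task assigns to state $S$ the cost $f(|S|)$, the instantaneous delay. An arrival at point $v$ toggles $v$, which flips parity; I would handle this by relabelling states via $S\mapsto S\triangle\{v\}$. Equivalently, arrivals act as forced moves of zero cost in a shifted coordinate system, so the state space stays the same up to an isometry that both the online and offline solutions undergo.

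Next I check that the cost of any MPMD solution equals, up to a factor of $2$, the MTS cost of the induced state trajectory, in both directions. A match of $u,v$ costs $d(u,v)\ge D(S,S\setminus\{u,v\})$. Conversely, an MTS move from $S$ to $T$ can be realised by matching along the optimal matching of $S\triangle T$. The points in $T\setminus S$ require creating pending requests, which cannot be done. So I would restrict to monotone, "removal only" moves, and argue that an optimal MTS trajectory can be assumed to make only removals, because additions only increase delay. This restriction step is where the factor loss needs care.

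\textbf{Step 2: upper bounds.} Given the reduction, the deterministic upper bound follows from the $O(N)$-competitive deterministic work-function algorithm for MTS on $N=2^{n-1}$ states. The randomized bound of $O(n^2)$ follows from the $O(\log^2 N)$-competitive randomized MTS algorithm of Bubeck--Cohen--Lee--Lee, since $\log^2 2^{n-1}=O(n^2)$. The online simulation is clean: the algorithm maintains the MTS state and performs the corresponding matches. Whenever the MTS algorithm moves to a state that would require an addition, we instead move to the closest feasible state and pay the difference, which is charged by the triangle inequality.

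\textbf{Step 3: lower bounds.} For the $\Omega(2^n)$ deterministic bound and the $\Omega(n)$ randomized bound, I would embed a uniform-ish MTS on about $2^{n-1}$ states into MPMD-Size. The natural choice is a uniform metric on $n$ points, where $D(S,T)$ is proportional to $|S\triangle T|$, i.e.\ a hypercube metric. I would then adapt the classical MTS lower bounds ($2N-1$ deterministic, $\Omega(\log N)$ randomized) so that each adversarial "task" that penalises the algorithm's current state is simulated by request arrivals and a steep choice of $f$. The arrivals place new requests so that the algorithm's current configuration is costly while a cheap alternative exists for the offline solution.

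I expect this simulation to be the main obstacle. Size-based delay sees only $|S|$, not $S$ itself, so penalising a specific subset has to be done through arrival patterns: for example, inserting a request at a point in $S$ can be matched for free, while one outside $S$ raises the count. The hypercube structure also has to give a gap of $2^{n}$ rather than just $n$. I would first try a deterministic adversary that always drops a request at a point the algorithm currently leaves unmatched, combined with a threshold $f$ that is zero up to some size and huge above it. I would then count how many distinct configurations the algorithm is forced to cycle through before OPT pays once.
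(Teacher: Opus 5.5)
\textbf{Lower bound (Step 3).} This is the main gap. It sits at exactly the obstacle you flag: you never find a way for a size-based delay to charge one specific state, and the substitutes you propose cannot give the statement.

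The missing idea is to make the state you want to penalize the unique configuration of maximum size. Work in the coordinates $M_t$ (the parity of the number of times each point has been matched), so the state space is the fixed set $\kv$. Given an MTS task that charges $\varepsilon_t>0$ only at $\bu_t\in\kv$, release requests so that the arrival parity becomes $R_t=\bu_t\oplus\mathbf{1}$. Set $f_t(N)=\varepsilon_t$ if $N\ge n$ and $f_t(N)=0$ otherwise. A configuration then pays iff $R_t\oplus M_t=\mathbf{1}$, i.e.\ iff $M_t=\bu_t$. Hence any online MPMD-Size algorithm yields an MTS algorithm of no larger cost.

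For the offline direction you need laziness. A lazy optimal MTS solution only moves when it currently sits at $\bu_t$. At that moment every point holds a pending request, so any transition is realizable by matches of cost $D$.

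This only simulates single-point tasks, so you also need $c_{\m}\le c_{\ms}$, obtained by splitting each task into many tiny single-point tasks. Then the universal $\Omega(\log N)$ randomized MTS lower bound on every $N$-point metric, applied to $(\kv,D)$ with $N=2^{n-1}$, gives $\Omega(n)$ on every metric.

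Your plan falls short in three ways:
\begin{itemize}
\item It fixes a uniform metric, so at best it proves the bound for uniform metrics, not for every $n$-point metric.
\item It invokes the classical $\Omega(\log N)$ bound, which is for uniform MTS. Your $(\kv,D)$ is a Hamming-type metric, so you would need a universal bound anyway.
\item It falls back on an adversary that drops requests where the algorithm currently has none. That adversary is adaptive, so it says nothing about randomized algorithms against an oblivious adversary.
\end{itemize}

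\textbf{Upper bound (Step 2).} Your state space and relabelling by arrivals are right; this is the paper's $M_t=S\oplus R_t$, with task $\bu\mapsto f_t(|R_t\oplus\bu|)$. The citation to the $O(\log^2 N)$ randomized MTS algorithm is also right.

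However, ``move to the closest feasible state and charge by the triangle inequality'' does not yield a telescoping bound. Let $P$ be the pending set, $T$ the MTS target and $S^*$ the closest feasible state. The triangle inequality only gives $D(P,S^*)+D(S^*,T)\le 3D(P,T)$, and the factor $3$ multiplies the previous discrepancy at every step. The one-step inequality can genuinely fail: take $d(a,b)=d(a,c)=1$, $d(b,c)=0.9$, $P=\{a,b\}$, $T=\{b,c\}$. The closest feasible state is $\emptyset$, and it costs $1+0.9>1=D(P,T)$.

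What works is to execute exactly those edges of a min-cost perfect matching of $M_{t-1}\oplus C_t$ whose endpoints both hold pending requests. Then:
\begin{itemize}
\item Every point pending for you but not for the MTS state is paired, by an unexecuted edge, with a distinct point pending for the MTS state but not for you. This injection gives $|R_t\oplus M_t|\le|R_t\oplus C_t|$, so your delay is at most the service cost by monotonicity.
\item The executed and unexecuted edges form perfect matchings on $M_{t-1}\oplus M_t$ and $M_t\oplus C_t$ respectively. Hence $D(M_{t-1},M_t)+D(M_t,C_t)\le D(M_{t-1},C_t)$, and $\phi_t=D(M_t,C_t)$ telescopes.
\end{itemize}

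Finally, your claim that an optimal MTS trajectory may be taken removal-only ``because additions only increase delay'' is unneeded and unjustified as stated. The upper bound only needs $\OPT_{\m}\le\OPT_{\mathrm{size}}$, which is immediate because any matching schedule induces an MTS trajectory of no larger cost.
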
    

\noindent We emphasize that the lower bounds in \cref{det,ran} are universal lower bounds: they apply to \emph{every} metric space.

\begin{table}[h]
\centering
\begin{tabular}{lcc}
\toprule
Competitive Ratio & Results from~\cite{DBLP:conf/approx/DeryckereU23}  & Our Results \\
\midrule
Deterministic & $O(2^m)$ and $\Omega(n)$& $\Theta(2^n )$ \\
Randomized & $O(m^4)$ and $\Omega(\log n)$ & $O(n^2)$ and $\Omega(n)$ \\
\bottomrule
\end{tabular}
\caption{Comparisons between the results from~\cite{DBLP:conf/approx/DeryckereU23} and our new results for MPMD-Size. The lower bounds of \cite{DBLP:conf/approx/DeryckereU23} only apply to uniform metrics while ours apply to every metric.}\label{table}
\end{table}

\cref{det,ran} improve upon previous results on size-based delays by Deryckere and Umboh~\cite{DBLP:conf/approx/DeryckereU23} in two ways. First, the previous upper bounds for both deterministic and randomized algorithms are expressed as functions of $m$, the number of requests, while our results are expressed as functions of $n$, the number of points in the given metric space, and thus is an improvement in the more interesting regime where $m \gg n$. Second, we also provide a reduction from MTS to MPMD-Size while theirs is only from MPMD-Size to MTS, which exponentially improves lower bounds for both deterministic and randomized algorithms on \emph{every} $n$-point metric. In contrast, their lower bound only holds for uniform metrics. Our results settle the deterministic competitive ratio, up to a constant. See \cref{table} for a detailed comparison between our results and those by Deryckere and Umboh.

\subparagraph{Convex delay (MPMD-Convex).} We also consider the convex delays on $n$-point uniform metrics, first studied by Liu et al.~\cite{DBLP:conf/isaac/LiuPWW18}. Convex delay is applicable to a lot of real-life scenarios as well, e.g., a gaming platform where players think it is acceptable to wait for a short period yet a longer delay can be more and more frustrating and even affect the player's willingness to join the platform again. In this problem, there is a non-decreasing delay function $f$ associated with every request and the delay cost of request $r$ incurred until $t$ is $f(t-a(r))$, where $a(r)$ is its arrival time, and the total delay cost is the sum of every request's individual delay. The previous study by Liu et al.~focused on the setting where $f$ is non-decreasing, continuous, unbounded and satisfy $f(0)=f'(0)=0$. They gave a deterministic lower bound of $\Omega(n)$ and a matching upper bound when $f$ is also a monomial.  

We consider polynomials $f$ such that $f'(0)>0$. Prior to our work, the only known result is that when $f$ is linear, the deterministic competitive ratio is $O(1)$ \cite{DBLP:conf/soda/AzarCK17}. Our second main result fills this gap and completes our understanding of the landscape of the competitive ratio of MPMD under uniform convex monotone polynomial delays on uniform metrics.  

\begin{theorem}[Informal; see \cref{thm-convex}]\label{convex intro}
For any fixed convex monotone polynomial delay $f$ such that $f'(0)>0$, there is an $O(1)$-competitive deterministic algorithm for MPMD under delay $f$ on all $n$-point uniform metrics. 
\end{theorem}

Surprisingly, for different delay functions, their local derivatives at $t=0$ can totally change the competitiveness, even when they have the same superlinear asymptotic behavior when $t\rightarrow\infty$. For example, the deterministic competitive ratio is $\Omega(n)$ for $f(t)=t^2$, yet a minor perturbation to $f(t)=(t+\varepsilon)^2-\varepsilon^2$ by any small $\varepsilon>0$ completely changes it to $O(1)$. We will see in the proof that it is due to the approximability between $f(t)$ and the linear delay $g(t)=t$: as long as $f(0)=f'(0)=0$, the ratio between $f$ and $g$ tends to $0$ when $t$ tends to $0$ and so we cannot use $g$ to approximate $f$ within positive constant factors. The $O(1)$-competitive result applies to a broader class of functions beyond polynomials as well, even some exponential functions; see \cref{morefunctions} for further discussion.

\subsection{Our Techniques}\label{technique}

\subparagraph{MPMD-Size.}

We employ a similar high-level approach as Deryckere and Umboh~\cite{DBLP:conf/approx/DeryckereU23} of encoding MPMD-Size as a \emph{metrical task system} (MTS) problem (defined in \cref{MTS}). Our key contribution is a significantly more efficient encoding.  
 
The main idea is to define the MTS states so that matching requests in the MPMD-Size instance corresponds to a state transition in the MTS instance, and at each timestep, the instantaneous delay increment in the MPMD-Size instance can be realized by a cost vector in the MTS instance. Deryckere and Umboh used the obvious encoding in which there is an MTS state for each even-sized subset of the requests that have arrived so far, and the current MTS state is the set of requests that have been matched so far. While this makes it trivial to encode the instantaneous delay increment as an MTS cost vector---the cost of a state is exactly the instantaneous delay increment of the requests that have arrived but are not matched---the total number of MTS states is the number of all possible even-sized subsets of the requests, which is exponential in $m$. It also complicates matters when applying MTS algorithms as the MTS state space grows as new requests arrive.

Our key insight that enables a more succinct encoding of MPMD-Size as MTS is the following. It never hurts to match two pending requests if they are co-located and so the number of unmatched requests at each point is either $0$ or $1$. Thus, to determine the number of unmatched requests (which in turn determines the instantaneous delay increment), it suffices to keep track for each point, whether there is an unmatched request at that point. This, in turn, can be done by keeping track of the parity of the number of times a request on that point has been previously matched (see the discussion following \cref{smart}). Therefore, we can simply express the current state of an MPMD-Size algorithm using an $n$-dimensional Boolean vector. Thus, the total number of MTS states is $O(2^n)$.

Finally, our reductions between MTS and MPMD-Size go in both directions. This lets us improve the lower bounds for both deterministic and randomized algorithms exponentially, and in particular, settle the deterministic competitive ratio (up to constants).

\subparagraph{MPMD-Convex on uniform metrics.}

Different from the linear delay, a convex delay function has an unbounded increase rate when it has been pending unmatched for enough long time. As a result, a reasonable algorithm should avoid this case, otherwise even a short period of waiting may lead to huge delay increase. Therefore, we introduce the concept of $T$-impatient algorithms, where a request that has been pending for time $T$ must be matched immediately once there is another available request anywhere in the metric space. In this way, we naturally divide the time axis into two parts according to whether there is a request that has been pending for longer than time $T$ at the moment. If there is no request that has been pending for longer than time $T$, then the delay increase rate of every request is upper and lower bounded by two positive constants (to guarantee the positive lower bound, the property that $f'(0)>0$ is necessary) and we can use the linear delay to approximate it up to a constant factor. As for the other case, based on the $T$-impatient property of the algorithm, the only reason that a request has been pending that long is because there is neither another pending request nor the arrival of a new request to be matched with, which means that there must be a request continuously pending throughout this time period under the optimal solution as well, which provides a lower bound of the optimal cost. Luckily, for all monotone polynomials, such a lower bound is within a constant factor of the delay incurred under the algorithm.

Therefore, to design an algorithm for monotone polynomial delays such that $f'(0)>0$ on uniform metrics, it can be reduced to designing a $T$-impatient algorithm whose competitive ratio is a constant for linear delay on uniform metrics. We show that the seminal algorithm for MPMD on tree metrics by Azar et al.~\cite{DBLP:conf/soda/AzarCK17} does not satisfy the $T$-impatient property while being implemented on uniform metrics, and thus propose a new algorithm and prove it has the desired property.

\subsection{Related Works}

\subparagraph{Linear delay.}When MPMD was first introduced by Emek et al.~\cite{DBLP:conf/stoc/EmekKW16}, the delay function they studied is the uniform linear delay. They proposed a randomized algorithm with competitive ratio $O(\log^2 n+\log \Delta)$, where $n$ is the number of points in the metric space and $\Delta$ is its aspect ratio. Azar et al.~\cite{DBLP:conf/soda/AzarCK17} improved the upper bound of the randomized competitive ratio for linear delays to $O(\log n)$ by designing an algorithm based on the randomized tree embedding technique into weighted HSTs from~\cite{DBLP:journals/jacm/BansalBMN15}. The current best lower bound for the uniform linear delay function is $\Omega(\log n/\log\log n)$ of any randomized algorithm from~\cite{DBLP:conf/approx/AshlagiACCGKMWW17}, which shows a gap between known upper and lower bounds even for the most classical delay setting.

The above results that express competitive ratios in terms of $n$ and $\Delta$ usually require the algorithm to know the metric space beforehand, and such competitive ratios are independent of the length of the request sequence. However, there are also results where the competitive ratio is expressed in terms of $m$, the number of requests in the input sequence, while the algorithm does not need to know the metric space prior to its execution: the first such result was proposed by Bienkowski et al.~\cite{DBLP:conf/waoa/BienkowskiKS17} and their deterministic algorithm is $O(m^{2.46})$-competitive. Bienkowski et al.~\cite{DBLP:conf/waoa/BienkowskiKLS18} later improved the result to $O(m)$ by an adaptation of the deterministic primal-dual moat growing technique, and independently and  concurrently, Azar and Fanani~\cite{DBLP:journals/mst/AzarF20} devised a sublinear $O(m^{0.59})$-competitive deterministic algorithm. So far, the best result that expresses the competitive ratio as a function of $m$ is an $O(\log^5 m)$-competitive deterministic algorithm by Dufay and Wattenhofer~\cite{DBLP:conf/soda/DufayW26}.

\subparagraph{Concave delay.}Azar et al.~\cite{DBLP:conf/soda/AzarRV21} considered the MPMD problem with uniform concave delays. They designed an $O(1)$-competitive deterministic algorithm for the single point and an $O(\log n)$-competitive randomized algorithm for general $n$-point metric spaces. Deryckere and Umboh~\cite{DBLP:conf/approx/DeryckereU23} generalized the primal-dual framework from~\cite{DBLP:conf/waoa/BienkowskiKLS18} and proposed an $O(m)$-competitive deterministic algorithm for uniform concave delays.

\subparagraph{Convex delay.} Liu et al.~\cite{DBLP:conf/isaac/LiuPWW18} considered the MPMD problem with uniform convex delays on $n$-point uniform metric spaces. They studied monomial functions $f(t)=t^{\alpha}$ such that $\alpha>1$ (where $t$ is the time difference between arrival and being matched for each request).
They provided a deterministic $O(n)$-competitive algorithm and showed a matching deterministic lower bound of $\Omega(n)$ for all uniform non-decreasing unbounded continuous delay functions such that $f(0)=f'(0)=0$. This result demonstrates a gap between the solutions for the cases with linear and convex delays, as the algorithm  proposed by Azar et al.~\cite{DBLP:conf/soda/AzarCK17} achieves constant competitive ratio on uniform metrics because uniform metrics can be seen as trees of height $1$. They also provided a randomized $\Omega(\log n)$ lower bound for delay functions in the same class. There is no result for convex delays beyond uniform metrics yet.

\subparagraph{Set and size-based delays.}In all of the above studies, there is one uniform delay function $f(t)$ known to the algorithm in advance that applies to every request from the input, suggesting how its delay cost increases as time goes by, and the total delay cost is the sum of delay costs incurred by each request. Deryckere and Umboh~\cite{DBLP:conf/approx/DeryckereU23} studied another model of delay function that is not accumulated request-wise but step-wise: at every time step $t$, the instantaneous delay cost is incurred as a function of the set of unmatched requests during this step. This set-delay model studies MPMD problem in the least restrictive way, as there is essentially no restriction on the structure of the delay functions and the setting is non-clairvoyant, i.e. the algorithm does not know future delays. Similar non-clairvoyant delay settings have been studied in other online problems with delays, e.g., TCP Acknowledgment Problem \cite{bhore2026onlinetcpacknowledgmentgeneral} and Subadditive Joint Replenishment Problem \cite{LeUX23,DBLP:conf/approx/Ezra0PRU24}. Unfortunately, under the most general set-delay setting, every deterministic algorithm has competitive ratio of at least $\Omega(\Delta)$, where $\Delta$ is the aspect ratio of the metric space. Instead, Deryckere and Umboh further restricted to the case of size-based delays, which have natural applications in practical settings with service-level agreements such as cloud
computing. They showed that there is an $O(2^{m})$-competitive deterministic algorithm and an $O(m^{4})$-competitive randomized algorithm by a reduction to the classical MTS problem introduced by Borodin et al.~\cite{DBLP:conf/stoc/BorodinLS87} and applying the best upper bounds for both deterministic and randomized algorithms by Borodin et al. and Bubeck et al.~\cite{DBLP:journals/siamcomp/BubeckCLL21} respectively. They also provided lower bounds in terms of $n$, which is $\Omega(n)$ for deterministic algorithms and $\Omega(\log n)$ for randomized algorithms. These upper and lower bounds show significant gaps between $m$ and $n$ in the scenarios where a request sequence of unbounded length is generated on a given finite metric.

\subparagraph{Paper organization.} We start by introducing the formal definition of MTS in \cref{MTS}. In \cref{MPMD-Size and MTS}, we show the reductions between MPMD-Size and MTS on both directions and hence complete the proof of Theorems \ref{det} and \ref{ran}. We study MPMD-Convex on uniform metrics and prove \cref{convex intro} in \cref{MPMD-convex}. We conclude by some discussion on the results as well as open problems. 

\section{Preliminaries}\label{MTS}

In this section, we introduce the definition and properties of the Metrical Task System problem, which will be used in \cref{MPMD-Size and MTS} to prove the main results of our paper on MPMD-Size. The formal definitions and notations of MPMD-Size and MPMD-Convex, two different delay models of MPMD that we study, are in Sections \ref{MPMD-Size and MTS} and \ref{MPMD-convex} respectively.

The metrical task system (MTS) problem, introduced by Borodin et al.~\cite{DBLP:conf/stoc/BorodinLS87}, can be defined as a sequence of requests $\{\rho_{t}\}_{t\geq 1}$ on a given finite metric space $(V,d)$ and an initial state $s_{0}\in V$ of the server. Each request $\rho_{t}: V\rightarrow \mathbb{R}_{\geq 0}$ is a non-negative cost function on the state space $V$, illustrating the cost of the server to process request $\rho_{t}$ in different points from $V$. Upon the arrival of request $\rho_{t}$ at every step $t\geq 1$, the server can be moved to some state $s_{t}\in V$ to process $\rho_{t}$, and this incurs \emph{service cost} of $\rho_{t}(s_{t})$. Since MTS is an online problem, the state $s_{t}$ at every step $t$ must be output without knowing future requests. The total cost of a solution $\{s_{t}\}_{t\geq 1}$ that expresses the state $s_{t}$ where the server processes request $\rho_{t}$ at every step $t\geq 1$ consists of service cost and \emph{transition cost}, where the transition cost at every step $t$ is the distance $d(s_{t-1},s_{t})$ of moving to $s_{t}$ from its previous state $s_{t-1}$. As a result, the cost of a solution $\{s_{t}\}_{t\geq 1}$ is defined as:
\begin{equation*}
    \sum_{t\geq 1}[\rho_{t}(s_{t})+d(s_{t-1},s_{t})].
\end{equation*}

For MTS problems, we say that an online algorithm $\mathcal{A}$ is \emph{lazy} if: at every step $t$ when it moves from $s_{t-1}$ to a different state $s_{t}$ to process $\rho_{t}$, it must hold that $\rho_{t}(s_{t-1})>0$. Intuitively speaking, a lazy algorithm $\mA$ only makes a move when its current state incurs positive cost for the upcoming request while always staying still if the service cost of the new request in the current state is zero.

\begin{lemmaapprep}\label{lazy}
    Given any online algorithm $\mathcal{A}$ for MTS, there is an online mechanism that transforms $\mathcal{A}$ to a lazy algorithm without increasing the cost.
\end{lemmaapprep}

\begin{proof}
    Suppose that the state of $\mathcal{A}$ to process request $\rho_{t}$ is $s_{t}$ for every $t\geq 1$, and we use $s'_{t}$ to denote the state of a lazy algorithm $\mathcal{A}'$ to process $\rho_{t}$. We now show how to transform any given $\mathcal{A}$ to a lazy algorithm $\mathcal{A}'$. Let $s_{0}=s'_{0}$ be the same initial state of the MTS instance. At every step $t$ when a new request $\rho_{t}$ arrives, if $\rho_{t}(s'_{t-1})=0$, then $s'_{t}=s'_{t-1}$; otherwise, $s'_{t}=s_{t}$.

The laziness of $\mathcal{A}'$ is trivial. We now show that the cost of $\mathcal{A}'$ is no more than that of $\mathcal{A}$. For every request $\rho_{t}$, the service cost of $\mathcal{A}'$ during step $t$ is either zero if $\rho_{t}(s'_{t-1})=0$ and $s'_{t}=s'_{t-1}$, or $\rho_{t}(s_{t})$ if $\rho_{t}(s'_{t-1})>0$ and $s'_{t}=s_{t}$. In both cases, $\rho_{t}(s'_{t})\leq \rho_{t}(s_{t})$ holds. Let $\{t_{i}\}_{i\geq 1}$ be the sequence of all steps such that $s'_{t_{i}}\neq s'_{t_{i}-1}$, then $\mathcal{A}'$ stays in the same state from step $t_{i-1}$ to $t_{i}-1$ that incurs no transition cost for every $i$, and every time when it incurs transition cost from $s'_{t_{i-1}}$ to $s'_{t_{i}}$ ($t_{0}=0$), $s'_{t_{i-1}}=s_{t_{i-1}}$ and $s'_{t_{i}}=s_{t_{i}}$. As a result, the trajectory of $\mathcal{A}'$ is a shortcut of that of $\mathcal{A}$, and the transition cost of $\mathcal{A}'$ is no more than that of $\mathcal{A}$ because of triangle inequalities.
\end{proof}

\noindent From now on, we can assume that every MTS algorithm that we use is lazy.

Let MTS-Single denote the subclass of MTS problem where every request $\rho_{t}$ incurs positive cost at exactly one point $v_{t}\in V$, i.e. $\rho_{t}(v_{t})>0$ and $\rho_{t}(v)=0$ for every $v\in V\setminus \{v_{t}\}$. When a metric space is given, we use $c_{\m}$ to denote the minimum possible worst-case competitive ratio that can be achieved by an algorithm for MTS problem on it, and $c_{\ms}$ to denote that can be achieved for MTS-Single problem. The following result is folklore for both deterministic and randomized algorithms:

\begin{lemmaapprep}\label{single}
    $c_{\m}\leq c_{\ms}$.
\end{lemmaapprep}

\begin{proof}
Given a request $\rho_{t}$ in an MTS instance, let $V_{t}:=\{v\in V\mid \rho_{t}(v)>0\}$. For each $\rho_{t}$, we construct a sequence $\tau_t$ of $|V_{t}|$ requests corresponding to $|V_{t}|$ points in $V_{t}$, where every request has cost $\rho_{t}(v)/M$ at its corresponding $v\in V_{t}$ and $0$ everywhere else and the order of these requests in the sequence does not matter. Given any $\varepsilon >0$, we choose $M$ to be a large integer such that $\rho_{t}(v)/M\leq \varepsilon\cdot \min_{v_1, v_2 \in V, v_1\neq v_2}d(v_1, v_2)$ for every $v\in V_{t}$. At every step $t$ when a new request $\rho_{t}$ of the MTS instance is received, $\tau_t$ is repeated for $M$ times. This is an online transformation from an MTS instance to an MTS-Single instance, and it is easy to see that the optimal cost of the MTS-Single instance is no more than that of the MTS instance, since every solution of the MTS instance incurs the same cost when it is applied to the MTS-Single instance.

Now suppose that we have an online algorithm $\mathcal{A}_{\ms}$ for MTS-Single. For the $M$ repetitions of $\tau_t$ in the MTS-Single instance constructed from $\rho_{t}$ in the original MTS instance, let $S_{t}$ be the set of points that $\mathcal{A}_{\ms}$ visits in this process and $v_{t}$ be the point that $\mathcal{A}_{\ms}$ stays when this process ends. The algorithm $\mathcal{A}_{\m}$ for MTS moves from $v_{t-1}$ to a point $v'\in \arg \min_{v\in S_{t}}\rho_{t}(v)$ to process $\rho_{t}$, and then moves to $v_{t}$. The transition cost of $\mathcal{A}_{\m}$ during step $t$ is no more than that of $\mathcal{A}_{\ms}$ during repetitions of $\tau_t$, as the trajectory of $\mathcal{A}_{\m}$ can be seen as a shortcut as that of $\mathcal{A}_{\ms}$. On the other hand, during each repetition of $\tau_t$, $\mathcal{A}_{\ms}$ incurs either service cost of at least $\rho_t(v')/M$ if it stays at the same point, or transition cost between different points of at least $\rho_{t}(v')/(\varepsilon \cdot M)$ according to how we choose the value of $M$. If we use $\cost_{1}(\mathcal{A})$ and $\cost_{2}(\mathcal{A})$ to denote the transition and service cost of an algorithm $\mathcal{A}$ respectively, then:
\begin{align*}
    \cost_{1}(\mathcal{A}_{\m})&\leq \cost_{1}(\mathcal{A}_{\ms});\\
    \cost_{2}(\mathcal{A}_{\m})&\leq \cost_{2}(\mathcal{A}_{\ms})+\varepsilon\cdot \cost_{1}(\mathcal{A}_{\ms}).
\end{align*}

As a result, the total cost of $\mathcal{A}_{\m}$ is no more than $(1+\varepsilon)$ times that of $\mathcal{A}_{\ms}$. $c_{\m}\leq c_{\ms}$ follows by setting $\varepsilon$ to an arbitrarily small amount.
\end{proof}

\noindent For completeness, we include the proofs of Lemmas~\ref{lazy} and~\ref{single} in the Appendix. They will be used in \cref{MPMD-Size and MTS} to establish the reductions between MPMD-Size and MTS.

\section{Reduction between MPMD-Size and MTS}\label{MPMD-Size and MTS}
In this section, we first formally define the MPMD problem with size-based delays (MPMD-Size), and then describe the reduction between it and MTS.
\subsection{Definition and Property of MPMD-Size}\label{section 3.1}

 The MPMD problem is defined on a metric space $(V,d)$, where $V$ is the set of points and $d:V\times V\rightarrow \mathbb{R}_{\geq 0}$ is the distance function. An online input consists of a sequence of requests that arrive over time. Every request $r$ is defined by two characteristics: its arrival time $a(r)$ and location $\ell(r)\in V$. We view the problem in the discrete way that any request can only arrive at a time step $t\in\mathbb{N}_{>0}$, yet it is possible that multiple requests arrive at the same time step $t$. This model captures the original definition by Emek et al.~\cite{DBLP:conf/stoc/EmekKW16} when we divide the continuous time axis into discrete steps where a step covers an interval of infinitesimal length.

Suppose $Q_t$ is the set of requests that have arrived up to and including time step $t$. At every step, the instantaneous delay function $f_{t}:2^{Q_t}\rightarrow \mathbb{R}_{\geq 0}$ specifies the increment of delay cost at this step for every possible $P_t\subseteq Q_t$, where $P_{t}$ is the set of requests that are pending during time step $t$. At every step $t$, the algorithm first sees the newly-arriving requests (if any) and the instantaneous delay function $f_{t}$, then makes a decision on matching some requests. A match between two requests $r_{1}$ and $r_{2}$ incurs a connection cost of $d(\ell(r_{1}), \ell(r_{2}))$. This process results in an updated $P_{t}$, and the instantaneous delay $f_{t}(P_{t})$ is then incurred after the matching performed at step $t$. Observe that the definition is consistent with metrical task systems (defined in \cref{MTS}) in the sense that at every step, the algorithm first sees the new input, then decides which state to move to (possibly the same state), and incurs the cost of the resulting state. The goal is to minimize the sum of its connection cost and delay cost incurred among all steps.

In the MPMD-Size problem, the instantaneous delay function incurred at time step $t$ can be expressed as $f_{t}(|P_{t}|)$, where $f_{t}:\mathbb{N}_{\geq0}\rightarrow \mathbb{R}_{\geq0}$ is a monotone function. We note that the well-studied linear delay cost is a special case of the size-based delay where $f_{t}(|P_{t}|)=|P_{t}|$ at every moment $t$.

We say that an online algorithm $\mathcal{A}$ is \emph{smart} if no two requests are pending at the same point during any time step. One important property of sized-based delay functions is that we can without loss of generality assume that every algorithm or solution is smart, because of the following lemma:

\begin{lemma}\label{smart}
    For size-based delays, given any online algorithm $\mathcal{A}$, there is an online mechanism that transforms $\mathcal{A}$ to a smart algorithm without increasing the cost.
\end{lemma}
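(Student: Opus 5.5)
Our plan is to simulate $\mathcal{A}$ online and maintain a smart algorithm $\mathcal{A}'$ coupled to it by an invariant. Because both algorithms see the same requests, the simulation is online. The invariant is this: at every step $t$, for each point $v\in V$, the number of requests pending at $v$ under $\mathcal{A}'$ equals the parity of the number pending at $v$ under $\mathcal{A}$. Hence $|P'_t|\le |P_t|$. Since each $f_t$ is monotone, $f_t(|P'_t|)\le f_t(|P_t|)$, so the delay cost of $\mathcal{A}'$ is at most that of $\mathcal{A}$ step by step. What remains is to show that the invariant can be maintained while the connection cost of $\mathcal{A}'$ is bounded by that of $\mathcal{A}$.

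We define $\mathcal{A}'$ in terms of the per-point parity vector $b_t\in\{0,1\}^V$ of $\mathcal{A}$. A new request arriving at $v$ flips the parity at $v$. For $\mathcal{A}'$, two cases arise:
\begin{itemize}
\item if $v$ already had a pending request in $\mathcal{A}'$, match the two immediately at zero cost;
\item otherwise, the new request becomes pending, which is consistent with the parity.
\end{itemize}
A match by $\mathcal{A}$ of requests at $u$ and $w$ costs $d(u,w)$. If $u=w$, the parity at $u$ is unchanged and $\mathcal{A}'$ does nothing. If $u\neq w$, the parities at both $u$ and $w$ flip, and we must mirror this in $\mathcal{A}'$:
\begin{itemize}
\item if $\mathcal{A}'$ has pending requests at both $u$ and $w$, it matches them at cost $d(u,w)$;
\item if it has a pending request at neither, it cannot create requests. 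Instead it records a ``debt'' edge $\{u,w\}$ of cost $d(u,w)$.
\end{itemize}
This leads to the cleaner formulation in the next paragraph.

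The cleaner formulation, which we would actually use, is a potential or token argument. Think of the matching edges of $\mathcal{A}$ so far as a multigraph on $V$. Pending requests of $\mathcal{A}'$ at points of odd parity are then exactly the endpoints of a parity-defect structure. At each step $t$, let $\mathcal{A}'$ process the matches of $\mathcal{A}$ at that step sequentially. When $\mathcal{A}$ matches a request at $u$ with one at $w$, the parity at $u$ and at $w$ each flips. Since $\mathcal{A}$ had a request pending at $u$ before the match, its parity at $u$ was odd or at least one. The key observation is that a request pending at $u$ in $\mathcal{A}$ and a parity bit of $0$ at $u$ means $\mathcal{A}$ has at least $2$ pending requests there. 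So the case analysis reduces to three situations:
\begin{itemize}
\item both parities are $1$: $\mathcal{A}'$ matches its requests at $u$ and $w$, paying $d(u,w)$;
\item exactly one parity is $1$, say at $u$: $\mathcal{A}'$ moves its pending status from $u$ to $w$. Physically it cannot move a request, so we instead keep a virtual relocation;
\item neither parity is $1$: $\mathcal{A}'$ would need to create requests at $u$ and $w$.
\end{itemize}

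The main obstacle is precisely these last two cases: $\mathcal{A}'$ cannot physically relocate or create requests. I would handle this by weakening the invariant. Instead of an exact parity match, $\mathcal{A}'$ only matches when $\mathcal{A}$'s behaviour allows it, and otherwise stays put, and the invariant becomes that the pending set of $\mathcal{A}'$ has at most one request per point and size at most $|P_t|$. A more robust choice is to define $\mathcal{A}'$ directly: it imitates $\mathcal{A}$ exactly, except that whenever two requests are simultaneously pending at the same point in the imitation, it matches them immediately at cost $0$. Then $\mathcal{A}'$ is smart. The requests that $\mathcal{A}$ later matches but that $\mathcal{A}'$ already matched are handled by ``rerouting": if $\mathcal{A}$ matches $r$ with $s$ and $r$ was already consumed in $\mathcal{A}'$ by a co-located partner $r'$, we follow the chain of $r$'s substitutes to a still-pending request at the same location and match that instead. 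Co-location gives the same cost $d(\ell(r),\ell(s))$, and if the chain is exhausted we skip the match at zero cost. Connection cost then never exceeds that of $\mathcal{A}$. For delay, I would show by induction that for every point $v$, the number of requests pending at $v$ in $\mathcal{A}'$ is at most $\min(1,\#\text{ pending at } v \text{ in } \mathcal{A})$. Verifying this induction through the rerouting chains, and checking that skipped matches never make the count exceed that of $\mathcal{A}$, is the delicate step; I expect it to follow by tracking the pending counts at each point modulo the zero-cost co-located matches.
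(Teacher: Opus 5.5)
\textbf{There is a genuine gap, and it is in the construction you finally settle on.} Your rerouting rule only substitutes a still-pending request \emph{at the same location}, and otherwise skips the match at zero cost. Skipping strands the other endpoint of the match.

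Concretely, let $r_1,r_2$ arrive at $u$, and let $s$ and $q$ arrive at distinct points $w$ and $x$, all at step $1$. Let $\mathcal{A}$ match $r_1$ with $s$ at step $2$ and $r_2$ with $q$ at step $3$. Your $\mathcal{A}'$ cancels $r_1,r_2$ at step $1$. At step $2$ nothing is pending at $u$, so it skips, and it skips again at step $3$. After step $3$, $\mathcal{A}$ has nothing pending, while $\mathcal{A}'$ still has $s$ and $q$ pending. Since $\mathcal{A}$ never touches them again, $\mathcal{A}'$ never matches them. So $\mathcal{A}'$ is not even a feasible perfect matching, and its delay $f_t(2)$ can exceed $f_t(0)$ at every later step.

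The per-point invariant you intended to prove by induction (at each point, $\mathcal{A}'$ has at most $\min(1,\cdot)$ of $\mathcal{A}$'s pending count) is false here. It is also false for any correct construction, as shown below, so the ``delicate step'' cannot be carried out as stated.

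\textbf{The missing idea.} Cancelling a co-located pair $r_1,r_2$ must be paid for by pairing their eventual $\mathcal{A}$-partners with each other. Suppose $\mathcal{A}$ matches $r_1$ with $r_1'$ at time $t_m$ and $r_2$ with $r_2'$ at time $t'\ge t_m$. Then $\mathcal{A}'$ should match $r_1'$ with $r_2'$ at time $t'$. Since $\ell(r_1)=\ell(r_2)$, the triangle inequality gives $d(\ell(r_1'),\ell(r_2'))\le d(\ell(r_1),\ell(r_1'))+d(\ell(r_2),\ell(r_2'))$, so connection cost does not increase.

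During $[t_m,t')$, $\mathcal{A}'$ has $r_1'$ pending, possibly at a point where $\mathcal{A}$ has nothing. However, $\mathcal{A}$ has $r_2$ pending and $\mathcal{A}'$ does not, i.e.\ $P_t=(P'_t\setminus\{r_1'\})\cup\{r_2\}$. So only the \emph{global} inequality $|P'_t|\le|P_t|$ holds, and that suffices because the delay is size-based. This is the paper's exchange step: apply it to the first violation and iterate.

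In your language, a substitute for a consumed request must be allowed to live at a different location. In the example, $s$ becomes the stand-in for $r_2$, and at step $3$ you match $s$ with $q$ at cost $d(w,x)\le d(w,u)+d(u,x)$. The delay comparison must then go through this global count (an injection of $\mathcal{A}'$'s pending set into $\mathcal{A}$'s), not per-point domination.
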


\begin{proof}
    If $\mathcal{A}$ is not smart, suppose that $r_{1}$ and $r_{2}$ are the first appearance of two pending requests such that $\ell(r_{1})=\ell(r_{2})$ and $t_{0}\leq t<t_{m}$ is the maximal time interval where both $r_{1}$ and $r_{2}$ are pending. We construct an online algorithm $\mathcal{A}'$ that behaves the same as $\mathcal{A}$ before $t_{0}$ (since there is no violation of the smart property before $t_{0}$) and matches $r_{1}$ with $r_{2}$ at $t_{0}$. After that, the behavior of $\mathcal{A}'$ will depend on how $\mathcal{A}$ matches each of $r_{1}$ and $r_{2}$ after $t_{m}$, and we need to discuss different cases. At every time step $t$, we use $P_{t}$ to denote the set of requests pending at $t$ under $\mathcal{A}$ and $P'_{t}$ to denote the set of requests pending at $t$ under $\mathcal{A}'$, for the convenience of comparing the costs of $\mathcal{A}$ and $\mathcal{A}'$.

    If $\mathcal{A}$ matches $r_{1}$ with $r_{2}$ at time $t_{m}$, then all other matches made by $\mathcal{A}'$ are the same as that by $\mathcal{A}$. In this case, $P'_{t}=P_{t}$ for all $t<t_{0}$ and $t\geq t_{m}$ and $P_{t}=P'_{t}\cup\{r_{1}, r_{2}\}$ for $t_{0}\leq t <t_{m}$, and it is easy to see that the delay cost of $\mathcal{A}'$ is at most that of $\mathcal{A}$ while the connection cost of two algorithms is the same, given the monotonicity of the size-based delay cost. If $\mathcal{A}$ matches $r_{1}$ (this assumption is without loss of generality) with $r'_{1}$ at $t_{m}$ and $r_{2}$ with $r'_{2}$ at some time step $t'\geq t_{m}$, then $\mathcal{A}'$ will match $r'_{1}$ with $r'_{2}$ at $t'$ instead, and all other matches made by $\mathcal{A}$ and $\mathcal{A}'$ are the same. In this case, the connection cost of $\mathcal{A}$ minus that of $\mathcal{A}'$ is $d(\ell(r_{1}), \ell(r_{1}'))+d(\ell(r_{2}), \ell(r_{2}'))-d(\ell(r_{1}'), \ell(r_{2}'))$, which is non-negative by the fact that $\ell(r_{1})=\ell(r_{2})$ and the triangle inequality. As for the delay cost incurred by pending requests, it holds that
    \[
P_{t} = \begin{cases}
     P'_{t}& \text{when } t<t_{0} \text{ or }t\geq t';\\
    P'_{t}\cup\{r_{1},r_{2}\} & \text{when } t_{0}\leq t<t_{m};\\
    P'_{t}-\{r'_{1}\}+\{r_{2}\} & \text{when } t_{m}\leq t<t'.
\end{cases}
\]

Given that $r'_{1}\in P'_{t}$ when $t_{m}\leq t<t'$, we can conclude that $|P'_{t}|\leq |P_{t}|$ for every step $t$ and as a consequence, the delay cost under $\mathcal{A}'$ is no more than that under $\mathcal{A}$. After the transformation from $\mathcal{A}$ to $\mathcal{A}'$, we obtain an online algorithm where the smart property is no longer violated at $t_{0}$ and the first appearance of possible violations is delayed to a later time step. Such online transformation process and argument can be repeated and terminated within finite steps until all the occurrences of the violations are eliminated, and finally we get a smart algorithm whose cost is no more than the original $\mathcal{A}$.
\end{proof}

\cref{smart} provides us with not only a nice property of the optimal solution, but also a mechanism of online transformations that allows us to always automatically match two requests at the same point immediately at any moment.
As a result, we do not need to explicitly describe the a match of two requests at the same point in the metric anymore. Based on this, we can assume that at every point $v\in V$ there is either one or zero pending request at every time step $t$, and there are at most $n$ requests pending in the whole metric space $V$. Moreover, at every given time step, we can refer to a match between two requests at two different points as a match between these two points without ambiguity. Throughout the paper, we may use expressions such as ``the algorithm matches point $v_{1}$ with point $v_{2}$'' and ``point $v$ has been matched for an odd number of times'', where the former refers to a match between requests at these two points $v_{1}$ and $v_{2}$, and the latter refers to the total number of times where a request at $v$ has been matched with some request at another point. As we have discussed in \cref{technique}, this allows us to encode the state of an algorithm with a Boolean vector that has an entry over every point in $V$, which builds the foundation of our reductions.

\subsection{Construction of MTS Metric Space}

Before we formally describe the reduction between two problems, we first specify the corresponding metric space $(\kv,D)$ on which the request sequence for both MTS and MTS-Single problems is based, given a metric space $(V,d)$ for an MPMD-Size instance.

\begin{definition}
    Given any finite set $V$, define $\mathcal{K}^{V}:=\{\mathbf{u}\in \{0,1\}^{V}\mid\sum_{v\in V}\mathbf{u}(v)=0\mod 2\}$, where $\mathbf{u}(v)$ is the entry of $\mathbf{u}$ on a given $v\in V$ for every $\bu\in\{0,1\}^{V}$.
\end{definition}
\noindent Observe that $|\kv| = 2^{n-1}$.

\subparagraph{Intuition for $(\kv,D)$.} To have an intuitive understanding of the structure of $(\kv,D)$ defined on a metric $(V,d)$ and its addition operation, we can give a Boolean variable to every point in $V$, and see every match between two points in $V$ as an operation that flips the Boolean variable associated with these two points simultaneously. If the default setting is that the Boolean variable associated with every point in $V$ is $0$, then $\kv$ is exactly the set of states that can be attained by applying the matching operations, where each state is a Boolean vector that specifies the value of the Boolean variable associated with every point in $V$. With this intuition, if the cost of a match is exactly the distance of the two points involved in it, then the distance between two states in $\kv$ should be the minimum cost to transform one state into another by applying matching operations.

As a result, we define the concept of the induced matching of every element in $\kv$ (\cref{inducedmatching}) and use it to express the distance $D$ on $\kv$. We will see that this definition of distance indeed satisfies the min-cost transformation property between every two states (\cref{matchingcost}).

\begin{definition}(Addition operation on $\{0,1\}^{V}$)
 Let $\oplus$ denote the XOR Boolean operation over $\{0,1\}$, i.e. $a\oplus b=0$ if $a=b$ and $a\oplus b=1$ if $a\neq b$, for $a,b\in\{0,1\}$. For every $\bu_{1},\bu_{2}\in\{0,1\}^{V}$, $\bu_{1}\oplus\bu_{2}:=(\bu_{1}(v)\oplus \bu_{2}(v))_{v\in V}$.
\end{definition}

We use $\mathbf{0}$ to denote $(0)_{v\in V}$ and $\mathbf{1}$ to denote $(1)_{v\in V}$. We first note that $\mathcal{K}^{V}$ is a closed subgroup of $\{0,1\}^{V}$ under this addition operation.

\begin{claim}
For every $\bu_{1},\bu_{2}\in\kv$, we have $\bu_{1}\oplus\bu_{2}\in\kv$.  
\end{claim}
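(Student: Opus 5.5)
The claim asks us to show that $\kv$ is closed under $\oplus$. We would prove this by tracking the parity of the total weight $\sum_{v\in V}\mathbf{u}(v)$ of a Boolean vector. The key identity is that for bits $a,b\in\{0,1\}$, $a\oplus b = a+b-2ab$, so $a\oplus b\equiv a+b \pmod 2$. This can be checked directly on the four cases $(a,b)\in\{0,1\}^2$.

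Fix $\bu_1,\bu_2\in\kv$. We apply the identity coordinatewise at each $v\in V$ and sum over $v$, using linearity of summation modulo $2$:
\begin{equation*}
\sum_{v\in V}(\bu_1\oplus\bu_2)(v)=\sum_{v\in V}\bigl(\bu_1(v)\oplus\bu_2(v)\bigr)\equiv\sum_{v\in V}\bu_1(v)+\sum_{v\in V}\bu_2(v)\equiv 0+0\equiv 0 \pmod 2 .
\end{equation*}
The last step uses the membership conditions $\bu_1,\bu_2\in\kv$.

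Since $\bu_1\oplus\bu_2\in\{0,1\}^{V}$ trivially, this parity computation gives exactly the defining condition of $\kv$, so $\bu_1\oplus\bu_2\in\kv$.

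If one also wants the subgroup statement mentioned before the claim, two further observations suffice. First, $\mathbf{0}\in\kv$ because its weight is $0$. Second, every $\bu\in\kv$ is its own inverse, since $\bu\oplus\bu=\mathbf{0}$. Associativity is inherited from $\{0,1\}^V$.

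There is no real obstacle here. The only point requiring care is stating the mod-$2$ identity for XOR cleanly, rather than arguing informally about the supports of the two vectors.

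An equivalent phrasing uses supports. Writing $S_i=\{v:\bu_i(v)=1\}$, the support of $\bu_1\oplus\bu_2$ is the symmetric difference $S_1\triangle S_2$. Its size is $|S_1|+|S_2|-2|S_1\cap S_2|$, which is even whenever $|S_1|$ and $|S_2|$ are.
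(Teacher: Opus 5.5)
Your proof is correct and follows the same route as the paper. Both sum the coordinatewise congruence $a\oplus b\equiv a+b \pmod 2$ over $V$ and use that $\bu_1$ and $\bu_2$ each have even weight; your justification via $a\oplus b=a+b-2ab$ simply spells out a step the paper leaves implicit.
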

\begin{proof}
     If $\sum_{v\in V}\bu_{1}(v)=\sum_{v\in V}\bu_{2}(v)=0 \mod 2$, then $\sum_{v\in V}(\bu_{1}\oplus \bu_{2})(v)=\sum_{v\in V}(\bu_{1}(v)\oplus \bu_{2}(v))=\sum_{v\in V}\bu_{1}(v)+\sum_{v\in V}\bu_{2}(v)=0\mod 2$.
\end{proof}

\begin{definition}(Induced matching of an element in $\mathcal{K}^{V}$)\label{inducedmatching}
    Given $\bu\in\mathcal{K}^{V}$, let $T_{\bu}=\{v\in V\mid \bu(v)=1\}$. Define the \emph{induced matching} of $\bu$ to be an arbitrary
     min-cost perfect matching of points in $T_{\bu}$ in the metric space $(V,d)$.\footnote{A perfect matching exists since $|T_{\bu}|$ is even, by the definition of $\kv$.} For every min-cost perfect matching $M$ of $T_{\bu}$, we associate the vector $\bu$ with $M$.
\end{definition}

\begin{definition}(Distance $D$ on $\kv$)\label{distance}
Given $\bu_{1}, \bu_{2}\in\kv$, define their distance $D(\bu_{1},\bu_{2})$ to be the weight of a min-cost perfect matching induced by $\bu_{1}\oplus \bu_{2}$.
\end{definition}

We defer the proof of the following claim to the Appendix.

\begin{claimapprep}
    $(\kv, D)$ is a metric space.
\end{claimapprep}

\begin{proof}
    We verify all the axioms of a metric space. Symmetry follows immediately from the commutativity of the $\oplus$ operation on $\mathcal{K}^{V}$.

        \subparagraph{Positivity.} For any $\bu_{1}$ and $\bu_{2}$ from $\mathcal{K}^{V}$, $\bu_{1}\oplus \bu_{2}\in \kv$ always induces a matching of positive edge weights. This matching has zero weight if and only if it is empty, which is equivalent to that $\bu_{1}\oplus \bu_{2}=\mathbf{0}$. By the definition of the Boolean addition operation, this only happens when $\bu_{1}=\bu_{2}$.

\subparagraph{Triangle inequality.} Let $\bu_{1}, \bu_{2}$ and $\bu_{3}$ be three elements from $\mathcal{K}^{V}$. Suppose $M_{12}$ is an induced matching of $\bu_{1}\oplus \bu_{2}$ and $M_{23}$ is an induced matching of $\bu_{2}\oplus \bu_{3}$. We consider the symmetric difference $M_{13}'=M_{12} \operatorname{\Delta} M_{23}=(M_{12}\setminus M_{23})\cup (M_{23}\setminus M_{12})$. For each point $v\in V$, let $\delta(v)$ denote the set of edges incident to $v$, then 
\begin{equation*}
    |M'_{13}\cap\delta(v)|\equiv |M_{12}\cap\delta(v)|+|M_{23}\cap\delta(v)| \mod 2
\end{equation*}

It is easy to see that for every $v\in V$, both $|M_{12}\cap\delta(v)|$ and $|M_{23}\cap\delta(v)|$ have value either $0$ or $1$, because $M_{12}$ and $M_{23}$ are both perfect matchings on some even-sized subset of $V$. As a result, the equation suggests $|M'_{13}\cap\delta(v)|=1$ if and only if $|M_{12}\cap\delta(v)|$ is not equal to $|M_{23}\cap\delta(v)|$, if and only if $(\bu_{1}\oplus \bu_{2})(v)\neq (\bu_{2}\oplus \bu_{3})(v)$. Moreover, this is equivalent to that $(\bu_{1}\oplus\bu_{3})(v)=1$, because $\bu_{1}\oplus\bu_{3}=(\bu_{1}\oplus \bu_{2})\oplus (\bu_{2}\oplus \bu_{3})$.

Because $|M_{13}'\cap\delta(v)|\leq 2$ for every $v\in V$, $M_{13}'$ is a disjoint union of paths and cycles. Let $M_{13}=\{(u,v)\mid \text{There is a path in }M'_{13}\text{ with }u,v\text{ as two endpoints}\}$, then $M_{13}$ is a perfect matching over $T_{\bu_{1}\oplus \bu_{3}}=\{v\in V\mid (\bu_{1}\oplus \bu_{3})(v)=1\}$. If we use $\omega(\cdot)$ to denote the sum of all edge weights in one set, then $\omega(M_{13})\leq \omega(M'_{13})$ by triangle inequalities, and $\omega(M'_{13})\leq \omega(M_{12})+\omega(M_{23})=D(\bu_{1},\bu_{2})+D(\bu_{2},\bu_{3})$. Since $M_{13}$ is a perfect matching over $T_{\bu_{1}\oplus \bu_{3}}$, $D(\bu_{1},\bu_{3})\leq \omega(M_{13})$. As a result of all these inequalities, $D(\bu_{1},\bu_{3})\leq D(\bu_{1},\bu_{2})+D(\bu_{2},\bu_{3}) $.

Therefore, all axioms that define a metric are verified and $(\kv,D)$ is a metric space.
\end{proof}

\subsection{From MPMD-Size to MTS}

Given any instance of MPMD-Size on the metric space $(V,d)$, we first describe the reduction from it to an instance of MTS on the corresponding $(\mathcal{K}^{V},D)$, as well as how to translate an online algorithm for MTS to one for the original MPMD-Size.

We introduce some notations to express the state of an MPMD-Size instance under the execution of a given algorithm at step $t$, including: the total arrivals of requests at different points, and the effect of all matches made by the algorithm so far.

\begin{definition}\label{RM notation}
 Given an algorithm for MPMD-Size on the metric space $(V,d)$, for every step $t\geq 1$, let $R_{t}\in\{0,1\}^V$ be the Boolean vector such that $R_{t}(v)=1$ if and only if the parity of the number of requests that have arrived at $v\in V$ until (including) step $t$ is odd, $M_{t}\in\kv$ be the Boolean vector such that $M_{t}(v)=1$ if and only if $v$ has been matched for an odd number of times until (including) step $t$. The initial setting ($t=0$) is that $R_{0}=M_0=\mathbf{0}$.
\end{definition}

In MPMD-Size, at every step $t$, new requests arrive and the instantaneous delay function $f_{t}$ is revealed at the same time, and an online algorithm needs to make decisions to match according to this information. At each step $t$, we need to handle two situations: when new requests have arrived at $t$ and before the algorithm has made new decisions yet, and after the algorithm has made decisions at time $t$. The former determines the matchings that the algorithm can make, and the latter determines the instantaneous cost incurred at the end of step $t$. With notations introduced in \cref{RM notation}, it is easy to see that the former state can be expressed as $R_{t}\oplus M_{t-1}$ and the latter state can be expressed as $R_{t}\oplus M_{t}$. With the smart algorithm property and the mechanism shown in \cref{smart}, it is not hard to see that we can assume that requests at the same point cancel out each other automatically for state $R_{t}\oplus M_{t-1}$ as well. We summarize the above discussion in the following observation that expresses the number of request at every point with a Boolean vector:

\begin{observation}\label{matchandarrival}
    Upon the arrivals of new requests at time step $t$ (before new matches are made by the algorithm at step $t$), the number of pending requests at every point $v$ is equal to $M_{t-1}(v)\oplus R_{t}(v)$; once the algorithm has made decisions at step $t$, the number of pending requests at $v$ is equal to $M_{t}(v)\oplus R_{t}(v)$ when the delay cost is incurred at step $t$.
\end{observation}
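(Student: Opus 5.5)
We prove \cref{matchandarrival} by induction on $t$, tracking for each point $v$ the parity of the number of pending requests there. We first reduce to parity: by \cref{smart}, every algorithm may be taken to be smart, and co-located pending requests are matched automatically and at no cost. After each round of automatic cancellations, the number of pending requests at $v$ is therefore $0$ or $1$, so it equals its own parity. It thus suffices to show that this parity equals $M_{t-1}(v)\oplus R_t(v)$ right after arrivals at step $t$, and $M_t(v)\oplus R_t(v)$ after the algorithm's decisions at step $t$.

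The parity claim comes from a counting identity. At any moment, the number of pending requests at $v$ equals the number of requests that have arrived at $v$ minus the number of requests at $v$ that have been matched. Under the convention of the paper, a co-located match removes two requests from $v$, so it does not change the parity and is not counted in $M_t(v)$. A match of $v$ with a different point removes exactly one request from $v$ and flips $M(v)$. Reducing the identity modulo $2$ therefore gives
\[
\#\text{pending at } v \equiv \#\text{arrivals at } v - \#\text{non-co-located matches at } v \pmod 2 .
\]

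We then evaluate this at the two moments of step $t$. Right after the arrivals and before any new matches, the arrival count has parity $R_t(v)$ and the match count has parity $M_{t-1}(v)$. After the algorithm's decisions at step $t$, the match count has parity $M_t(v)$. Since subtraction modulo $2$ is $\oplus$, we get the two claimed expressions. The base case is $R_0=M_0=\mathbf 0$ with no pending requests. Combining this parity statement with the reduction to parity in the first paragraph finishes the proof.

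The only delicate point is the bookkeeping convention: we must check that co-located cancellations are indeed excluded from $M$, as the paper's wording ``matched with some request at another point'' indicates. We must also check that the automatic cancellation performed upon arrival, before the algorithm acts, legitimately makes the count in $\{0,1\}$ at the intermediate moment. This second point follows from the online transformation in \cref{smart}, which may be applied at any moment, including immediately upon arrivals.
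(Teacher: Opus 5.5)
Your argument is correct and is essentially the paper's own: the paper simply asserts that the definitions of $R_t$ and $M_t$ give the two expressions. These are the parities of arrivals and of matches with other points. The paper then invokes the smart property and automatic cancellation of co-located requests from \cref{smart} to get $R_t(v)\oplus M_{t-1}(v)$ before and $R_t(v)\oplus M_t(v)$ after the algorithm's decisions. Your mod-$2$ counting identity just makes that ``easy to see'' step explicit, and the induction framing is unnecessary because the identity holds directly at every moment.
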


If $(M_{t-1}\oplus R_t)(v)=0$, then there is no request pending at $v$ after new arrivals at $t$, and of course there cannot be any request at $v$ after new matches are performed at $t$ as well, thus leading to the following observation:
\begin{observation}\label{no new request}
    $(R_t\oplus M_{t-1})(v)\leq (R_t\oplus M_t)(v)$ for every $v\in V$ and time step $t$.
\end{observation}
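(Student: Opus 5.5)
The plan is a pointwise case analysis on how many times $v$ is matched during step $t$, using only \cref{RM notation} and \cref{matchandarrival}. By \cref{RM notation}, $M_t(v)=M_{t-1}(v)\oplus k_t(v)$, where $k_t(v)\in\{0,1\}$ is the parity of the number of matches at step $t$ that involve a request at $v$. Since $\oplus$ is associative,
\begin{equation*}
(R_t\oplus M_t)(v)=(R_t\oplus M_{t-1})(v)\oplus k_t(v).
\end{equation*}
So the comparison in the statement reduces to a comparison of two bits that differ exactly when $k_t(v)=1$.

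\emph{Case $k_t(v)=0$.} Both sides are the same bit, so the stated inequality holds with equality. This covers every point untouched by the algorithm at step $t$.

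\emph{Case $k_t(v)=1$ with $(R_t\oplus M_{t-1})(v)=0$.} The right-hand side is $1$ and the left-hand side is $0$, so the stated inequality holds. I would argue that this case cannot occur. By \cref{matchandarrival}, $(R_t\oplus M_{t-1})(v)=0$ means there is no pending request at $v$ after the arrivals of step $t$. A match involving $v$ at step $t$ needs such a request, since matches are between currently available requests and co-located requests are cancelled automatically by \cref{smart}. So this case is vacuous, and in any event it satisfies the inequality.

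\emph{Case $k_t(v)=1$ with $(R_t\oplus M_{t-1})(v)=1$.} This is the main obstacle. Here a request pending at $v$ is matched at step $t$, so the right-hand side is $0$ while the left-hand side is $1$. In this case the stated inequality $(R_t\oplus M_{t-1})(v)\le(R_t\oplus M_t)(v)$ fails as a bit comparison. My attempt would be to rule the case out from the model, but I do not see how: matching a request that is available at $v$ at step $t$ is exactly what the algorithm is allowed to do. Under \cref{matchandarrival}, such a request is then no longer pending when the delay is charged.

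So the plan establishes the statement exactly when no request that is pending at $v$ after the arrivals is matched at step $t$. That condition is equivalent to $k_t(v)=0$, because the other $k_t(v)=1$ case is vacuous. I expect the final case to be the point where the argument breaks down. That is also where the justifying sentence before the statement (``there cannot be any request at $v$ after new matches'') supports only the zero-implies-zero relation $(R_t\oplus M_{t-1})(v)=0\Rightarrow(R_t\oplus M_t)(v)=0$. It does not support the inequality in the direction written.
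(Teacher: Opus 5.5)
Your analysis is correct, and you should state it as a verdict rather than as the point where your argument ``breaks down'': the inequality as printed is false, and your third case is a genuine counterexample, not a case you failed to exclude. Concretely, let one request arrive at $v$ and one at $w\neq v$ at step $1$, and let the algorithm match them at step $1$. Then $R_1(v)=1$, $M_0(v)=0$ and $M_1(v)=1$, so $(R_1\oplus M_0)(v)=1>0=(R_1\oplus M_1)(v)$. This is not exotic behaviour. Every match that $\mathcal{A}_{\mathrm{size}}$ in Algorithm~\ref{transformation} performs joins points $u,v$ with $(R_t\oplus M_{t-1})(u)=(R_t\oplus M_{t-1})(v)=1$, and both endpoints then violate the printed inequality.

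The statement is a typo for the reverse inequality $(R_t\oplus M_t)(v)\le (R_t\oplus M_{t-1})(v)$. That is exactly what the paper's one-sentence justification establishes: if no request is pending at $v$ after the arrivals, none is pending after the matches. It is also the direction used both times the observation is invoked in the proof of \cref{delay}. The first use concludes $(R_t\oplus M_{t-1})(v)=1$ from $v\in P'_t$, and the second is the explicit chain $(R_t\oplus M_t)(u)\le (R_t\oplus M_{t-1})(u)=0$.

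Your case analysis already proves the corrected statement. By your identity $(R_t\oplus M_t)(v)=(R_t\oplus M_{t-1})(v)\oplus k_t(v)$, the reversed inequality can fail only when $k_t(v)=1$ and $(R_t\oplus M_{t-1})(v)=0$. That is precisely your second case, and your argument that it is vacuous (via \cref{matchandarrival} and the automatic cancellation of co-located requests from \cref{smart}) is the paper's argument. This vacuity genuinely depends on the co-location convention. Without it, two fresh requests at $v$ give $(R_t\oplus M_{t-1})(v)=0$, yet one of them could still be matched elsewhere at step $t$.
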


For every $\bu\in \{0,1\}^{V}$, we use $|\bu|$ to denote the number of $1$'s among all entries of $\bu$, i.e. $|\bu|=\sum_{v\in V}\bu(v)$. During every step $t$ when the delay cost is incurred, $R_{t}\oplus M_{t}$ is the Boolean vector that denotes the state of the input under the algorithm as per \cref{matchandarrival}. Based on these notations and the property of size-based delays, the delay cost incurred during step $t$ can be expressed as $f_{t}(|R_{t}\oplus M_{t}|)$.

\subparagraph{Construction of MTS instance.} Now we are ready to define the MTS instance that we reduce MPMD-Size to. We set the server's initial position of the MTS instance to be $C_{0}=\mathbf{0}$. Upon every time step $t\geq 1$, we assign a request vector $(\rho_{t}({\bu}))_{\bu\in\mathcal{K}^{V}}$ to the MTS instance according to $R_{t}$, the Boolean vector that we have defined to denote the parity of the number of requests that have arrived at each point in $V$ until (and including) $t$, and $f_{t}$, the instantaneous delay cost function of the MPMD-Size instance incurred at step $t$. More specifically, the cost to process $\rho_{t}$ at $\bu$ is $\rho_{t}({\bu})=f_{t}(|R_{t}\oplus \bu|)$ for every $\bu\in\kv$.

 Let $\OPT_{\m}$  denote the value of the optimal solution of the MTS instance and $\OPT_{\mathrm{size}}$ denote that of the MPMD-Size instance.
\begin{lemma}\label{corres}
$\OPT_{\mathrm{size}}\geq \OPT_{\m}$.
\end{lemma}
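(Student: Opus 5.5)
We take an optimal solution $\SOL$ of the MPMD-Size instance and map it to an MTS solution of no larger cost; since $\OPT_{\m}$ is at most the cost of any MTS solution, the inequality follows. By \cref{smart}, applied offline, we may assume $\SOL$ is smart, because the transformation does not increase the cost. We then use the vectors $M_t$ of \cref{RM notation} computed for $\SOL$, and the MTS solution places the server at $C_t := M_t$ at every step $t\geq 0$. This starts at $C_0=M_0=\mathbf{0}$, as required. Each $M_t$ lies in $\kv$: every match between two distinct points flips exactly two entries, so the number of ones in $M_t$ stays even. Matches between co-located requests are handled automatically and, under our convention, are not counted in $M_t$.

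\textbf{Service cost.} By \cref{matchandarrival}, the number of pending requests of $\SOL$ during step $t$ is $|R_t\oplus M_t|$. Hence the delay cost of $\SOL$ at step $t$ is $f_t(|R_t\oplus M_t|)=\rho_t(C_t)$. Summed over all steps, the MTS service cost equals the delay cost of $\SOL$ exactly.

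\textbf{Transition cost.} Let $E_t$ be the set of matches (edges between distinct points) that $\SOL$ performs at step $t$. Each point is flipped once for each incident edge in $E_t$, so $M_{t-1}\oplus M_t$ is the indicator of the vertices with odd degree in $E_t$. We must show $D(M_{t-1},M_t)\le \sum_{(u,v)\in E_t} d(u,v)$. By smartness there is at most one pending request at each point, so $E_t$ is itself a matching on distinct points: every vertex has degree $0$ or $1$. Thus $E_t$ is a perfect matching of $T_{M_{t-1}\oplus M_t}$, and by \cref{distance} its weight is at least $D(M_{t-1},M_t)$. If smartness within a single step is in doubt (for example, a newly arrived request at $v$ is matched while $v$ already had a pending request), we fall back on a more robust argument. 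View $E_t$ as a multigraph; its odd-degree vertices are exactly $T_{M_{t-1}\oplus M_t}$. Decompose it into edge-disjoint paths pairing those vertices, together with cycles. Shortcutting each path by the triangle inequality, exactly as in the proof that $(\kv,D)$ is a metric, yields a perfect matching of $T_{M_{t-1}\oplus M_t}$ of no larger weight. Summing over $t$ shows that the transition cost is at most the connection cost of $\SOL$.

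\textbf{Main obstacle.} The only delicate point is this bookkeeping: confirming that the parity vector $M_t$ together with $R_t$ really captures the pending set, and that the step-$t$ matches of $\SOL$ dominate $D(M_{t-1},M_t)$. Both follow from \cref{matchandarrival} and the parity and shortcutting argument above. Adding the two parts, the cost of the MTS solution is at most the cost of $\SOL$, which equals $\OPT_{\mathrm{size}}$, so $\OPT_{\m}\le\OPT_{\mathrm{size}}$.
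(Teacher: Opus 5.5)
Your proposal is correct and follows the paper's proof: you map the MPMD-Size solution to the MTS trajectory $C_t=M_t$, identify the delay cost with the service cost via $\rho_t(M_t)=f_t(|R_t\oplus M_t|)$, and bound each step's transition cost $D(M_{t-1},M_t)$ by that step's connection cost. The paper gets this last bound by citing \cref{matchingcost}, which you reprove inline. Your multigraph shortcutting fallback, for the case of two requests at one point within a single step, is only an extra safeguard and does not change the route.
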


\begin{proof}
    Given any solution $\SOL_{\mathrm{size}}$ of MPMD-Size, we convert it to a solution $\SOL_{\m}$ of MTS without increasing the cost. Upon every step $t$, let $M_{t}$ be the sum of Boolean vectors of matches made by $\SOL_{\mathrm{size}}$ so far, then $M_{t}$ is an element of $\mathcal{K}^{V}$, and $\SOL_{\m}$ moves from state $M_{t-1}$ to $M_{t}$ (this is feasible at step $1$ since at the beginning $M_{0}=\mathbf{0}$ and the initial configuration of the MTS instance is given as $C_{0}=\mathbf{0}$ as well).

    We now compare the cost of two solutions. At every step $t$, the number of pending requests in the MPMD-Size solution is $|R_{t}\oplus M_{t}| $. As a result, the delay cost incurred during step $t$ is $f_{t}(|R_{t}\oplus M_{t}|)$, which is equal to $\rho_{t}({M_{t}})$, the cost to process the task $\rho_{t}$ at $M_{t}$, based on our setting of $(\rho_{t}({\bu}))_{\bu\in\mathcal{K}^{V}}$. By Lemma \ref{matchingcost}, the connection cost of $\SOL_{\mathrm{size}}$ at step $t$ is at least the distance $D(M_{t-1}, M_{t})$, which equals  the cost of MTS to move from $M_{t-1}$ to $M_{t}$.
    \end{proof}

\subparagraph{Translating MTS algorithm to MPMD-Size.}
We now describe how to transform an algorithm $\mathcal{A}_{\m}$ for the MTS instance to one $\mathcal{A}_{\mathrm{size}}$ for the MPMD-Size instance in an online way. As we see from the proof of \cref{corres}, there is naturally a one-to-one correspondence between the current MTS state and the current total matching Boolean vector $M_t$ of MPMD-Size, which can be seen as the state of an MPMD-Size algorithm. Therefore, given an online algorithm $\mathcal{A}_{\m}$ for MTS problem on $(\kv, D)$ that processes request $\rho_{t}$ at state $C_{t}$, intuitively, we want to transit MPMD-Size algorithm to state $M_{t}=C_t$ as well. However, such a transition is not always feasible, as there is no guarantee that there is an available request at every point that the algorithm needs to match to transit to $C_{t}$. If this happens, then we want to update $M_{t-1}$ to $M_{t}$ that is the closest possible to $C_{t}$ among all feasible matchings that can be made: we check every pair of match in the min-cost perfect matching induced by $M_{t-1}\oplus C_{t}$, and match every feasible pair of requests. See the description of Algorithm~\ref{transformation} for more details.

\begin{algorithm}[hbt!]
\caption{The translation algorithm of an online algorithm $\mathcal{A}_{\m}$ for MTS on $(\mathcal{K}^{V}, D)$ to $\mathcal{A}_{\mathrm{size}}$ for MPMD-Size on $(V,d)$}\label{transformation}
\begin{algorithmic}
\Require $M_0=\mathbf{0}$ for MPMD-Size, and the initial configuration of MTS is $C_{0}=\mathbf{0}$.
\Ensure
\State Given $R_{t}$, the arrivals of requests so far in the MPMD-Size instance and $f_{t}$ the instantaneous size-based delay cost, task vector $\rho_{t}$ of MTS is constructed such that $\rho_t(\bu)=f_t(|R_t\oplus \bu|)$ for every $\bu\in\kv$, and we are given $\mathcal{A}_{\m}$'s state $C_{t}$ at step $t$.

Upon arrivals of new requests at $t$, the Boolean vector of current available requests in $V$ is $R_{t}\oplus M_{t-1}$ (Observation \ref{matchandarrival}).

Calculate $M_{t-1}\oplus C_{t}$, the Boolean vector of the matching that $\mathcal{A}_{\mathrm{size}}$ intends to make to transit to state $M_{t}$. For every edge $(u,v)$ in the min-cost perfect matching induced by $M_{t-1}\oplus C_{t}$, if $(M_{t-1}\oplus R_t)(u)=(M_{t-1}\oplus R_t)(v)=1$, then there is a pending request at both $u$ and $v$ after new arrivals at step $t$, and we match $u$ and $v$ in this case. 

After all matches have been performed, the new vector $M_t$ is updated accordingly.

\end{algorithmic}
\end{algorithm}

\begin{claim}\label{intended change}
    For every $v\in V$, if $M_{t}(v)\neq C_{t}(v)$, then $(M_{t-1}\oplus C_{t})(v)=1$.
\end{claim}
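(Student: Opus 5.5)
The plan is to prove the contrapositive: if $(M_{t-1}\oplus C_t)(v)=0$, then $M_t(v)=C_t(v)$. Everything rests on how $M_t$ is obtained from $M_{t-1}$ in Algorithm~\ref{transformation}. The only matches made at step $t$ are edges $(u,w)$ of the min-cost perfect matching induced by $M_{t-1}\oplus C_t$. Each such edge flips exactly the two entries $u$ and $w$. Because a perfect matching uses each vertex at most once, a point $v$ is flipped at step $t$ at most once, and only if $v$ is an endpoint of some edge of that induced matching.

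First, I would record the update formally: $M_t = M_{t-1}\oplus \mathbf{e}_S$, where $S$ is the set of endpoints of the edges that are actually matched. Here $\mathbf{e}_S$ is the indicator vector of $S$, and $S\subseteq T_{M_{t-1}\oplus C_t}$. The claim that each $v$ lies in at most one chosen edge comes straight from the matching being a matching. One small point needs care: the automatic cancellation of co-located requests from \cref{smart} does not change $M$. Such cancellations are not counted as matches in the sense of \cref{RM notation}; they are absorbed into $R_t$ through parity. So $M_t$ changes only through the edges chosen by the algorithm.

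Second, I would take $v$ with $(M_{t-1}\oplus C_t)(v)=0$. Then $v\notin T_{M_{t-1}\oplus C_t}$, so $v\notin S$ and $M_t(v)=M_{t-1}(v)$. Moreover, $(M_{t-1}\oplus C_t)(v)=0$ means exactly that $M_{t-1}(v)=C_t(v)$. Combining the two gives $M_t(v)=C_t(v)$, which is the contrapositive.

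There is no real obstacle here. The only thing to be careful about is the bookkeeping point above: the matches performed at step $t$ are a sub-collection of the edges of the induced matching, each vertex is touched at most once, and same-point cancellations do not alter $M_t$.
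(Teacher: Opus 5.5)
Your proposal is correct and follows the paper's own argument: prove the contrapositive by noting that a point $v$ with $(M_{t-1}\oplus C_t)(v)=0$ is not an endpoint of any edge of the induced matching. Hence $\mathcal{A}_{\mathrm{size}}$ does not match $v$ at step $t$, and $M_t(v)=M_{t-1}(v)=C_t(v)$. Your version also correctly says only that these values are equal, avoiding the paper's slip of asserting they are $0$, and your remark that same-point cancellations leave $M_t$ unchanged is the right bookkeeping.
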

\begin{proof}
    We prove the contrapositive. If $(M_{t-1}\oplus C_{t})(v)=0$, then $\mathcal{A}_{\m}$ does not make any match that involves $v$ at step $t$, and so $M_{t}(v)=M_{t-1}(v)=C_{t}(v)=0$.
\end{proof}

The following claim guarantees the local optimality of $\mathcal{A}_{\m}$ as per the transition from $M_{t-1}$ to $M_t$ by matches at step $t$:
\begin{claim}\label{local optimality}
    The connection cost of $\mathcal{A}_{\m}$ at step $t$ is equal to $D(M_{t-1}, M_t)$.
\end{claim}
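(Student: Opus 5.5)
The claim says that the connection cost of the translated algorithm at step $t$ equals $D(M_{t-1},M_t)$. (It is phrased as the cost of $\mathcal{A}_{\m}$, but the matches are those made by $\mathcal{A}_{\mathrm{size}}$.) The plan is to prove two inequalities.

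\textbf{Setup.} Let $E$ be the min-cost perfect matching induced by $M_{t-1}\oplus C_t$ that Algorithm~\ref{transformation} uses. Let $E'\subseteq E$ be the edges actually matched at step $t$, namely those whose two endpoints both carry a pending request. Each matched edge flips the two entries of its endpoints, and the edges of $E$ are vertex-disjoint. Hence $M_t = M_{t-1}\oplus \bigoplus_{(u,v)\in E'}(\mathbf{e}_u\oplus\mathbf{e}_v)$, so $T_{M_{t-1}\oplus M_t}$ is exactly the vertex set covered by $E'$. Thus $E'$ is a perfect matching of $T_{M_{t-1}\oplus M_t}$, and the connection cost of the step is $\omega(E')$.

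\textbf{Lower bound.} Since $E'$ is a perfect matching on $T_{M_{t-1}\oplus M_t}$, Definition~\ref{distance} immediately gives $D(M_{t-1},M_t)\le \omega(E')$.

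\textbf{Upper bound.} This is the main point. Suppose some perfect matching $F$ of $T_{M_{t-1}\oplus M_t}$ has $\omega(F)<\omega(E')$. Then $F\cup(E\setminus E')$ covers exactly $T_{M_{t-1}\oplus C_t}$, because the vertex sets of $E'$ and $E\setminus E'$ partition that set and $F$ covers the same vertices as $E'$. This gives a perfect matching of $T_{M_{t-1}\oplus C_t}$ of weight $\omega(F)+\omega(E\setminus E')<\omega(E)$, contradicting the minimality of $E$. Hence $\omega(E')=D(M_{t-1},M_t)$. The subadditivity argument is the only real step: a sub-matching of a min-cost perfect matching is itself min-cost on its vertex set.

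\textbf{Remaining checks.} Two bookkeeping points need to be confirmed. First, smartness ensures that at most one request is pending at a point, so matching along $E'$ flips exactly those coordinates. Second, no other matches occur at step $t$ except automatic co-located cancellations, which cost zero and do not change $M_t$ as defined.
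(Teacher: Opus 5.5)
Your proposal is correct and uses the same exchange argument as the paper. The executed edges are a subset of the min-cost perfect matching induced by $M_{t-1}\oplus C_t$, so a cheaper perfect matching on $T_{M_{t-1}\oplus M_t}$ could be swapped in, contradicting that matching's minimality; you also state explicitly the easy direction (the executed edges form a perfect matching of $T_{M_{t-1}\oplus M_t}$) and the $\mathcal{A}_{\mathrm{size}}$ versus $\mathcal{A}_{\m}$ naming slip, both of which the paper leaves implicit.
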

\begin{proof}
Let $J_1$ be the set of edges matched by $\mathcal{A}_{\m}$ at step $t$ and $J_2$ be the set of edges in the min-cost perfect matching induced by $M_{t-1}\oplus C_{t}$, then $J_1\subseteq J_2$ by the mechanism of $\mathcal{A}_{\m}$. If $J'_{1}$ is a another perfect matching on $T_{M_{t-1}\oplus M_{t}}=\{v\in V\mid (M_{t-1}\oplus M_t)(v)=1\}$ such that the cost of $J'_1$ is smaller than $J_1$, then $(J_2\setminus J_1)\cup J'_1$ is a perfect matching on $T_{M_{t-1}\oplus C_{t}}=\{v\in V\mid (M_{t-1}\oplus C_t)(v)=1\}$ whose cost is smaller than $J_2$, a contradiction to the assumption that $J_2$ is a min-cost perfect matching!
\end{proof}

We prove that the cost of $\mathcal{A}_{\mathrm{size}}$ is no more than that of $\mathcal{A}_{\m}$ by the following lemmas. First, we compare the instantaneous delay cost of $\mathcal{A}_{\mathrm{size}}$ incurred at step $t$ with the cost of processing task $\rho_{t}$ by $\mathcal{A}_{\m}$, and this is where the size-based property of the delay function $f_{t}$ plays a vital role.

\begin{lemma}\label{delay}
    The delay cost at step $t$ by $\mathcal{A}_{\mathrm{size}}$ is no more than the service cost to process request $\rho_{t}$ by $\mathcal{A}_{\m}$, given that $f_{t}$ is a size-based delay.
\end{lemma}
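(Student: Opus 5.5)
Since $f_t$ is monotone, it suffices to show $|R_t\oplus M_t|\le |R_t\oplus C_t|$. The delay cost of $\mathcal{A}_{\mathrm{size}}$ at step $t$ is $f_t(|R_t\oplus M_t|)$ by \cref{matchandarrival}, and the service cost of $\mathcal{A}_{\m}$ is $\rho_t(C_t)=f_t(|R_t\oplus C_t|)$ by construction. I would prove the stronger pointwise statement
\[
(R_t\oplus M_t)(v)\le (R_t\oplus C_t)(v)\quad\text{for every } v\in V,
\]
and then sum over $v$.

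Fix $v$. If $M_t(v)=C_t(v)$ there is nothing to show. Otherwise, by \cref{intended change}, $(M_{t-1}\oplus C_t)(v)=1$, so $v$ is an endpoint of some edge $(v,w)$ in the min-cost perfect matching induced by $M_{t-1}\oplus C_t$. Since $M_t(v)\neq C_t(v)$ while $C_t(v)\neq M_{t-1}(v)$, we get $M_t(v)=M_{t-1}(v)$. So the algorithm did not perform the match $(v,w)$ at step $t$. By the rule in Algorithm~\ref{transformation}, this means $(M_{t-1}\oplus R_t)(v)=0$ or $(M_{t-1}\oplus R_t)(w)=0$.

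\textbf{Main obstacle.} If $(M_{t-1}\oplus R_t)(v)=0$, then $(R_t\oplus M_t)(v)=(R_t\oplus M_{t-1})(v)=0$ and the inequality holds trivially. The troublesome case is when $v$ had a pending request but its partner $w$ did not. Then $v$ keeps a pending request, $(R_t\oplus M_t)(v)=1$, and $(R_t\oplus C_t)(v)=1\oplus 1=0$, so the pointwise inequality fails at $v$. I would therefore abandon pointwise domination and use a charging argument on each unexecuted edge $(v,w)$.

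For an unexecuted edge $(v,w)$, let $a=(M_{t-1}\oplus R_t)(v)$ and $b=(M_{t-1}\oplus R_t)(w)$, with not both equal to $1$. Both endpoints keep $M_t=M_{t-1}$, so $\mathcal{A}_{\mathrm{size}}$ has pending counts $a,b$ at $v,w$. The state $C_t$ flips both, giving counts $1-a,1-b$. If $a=b=0$, then $\mathcal{A}_{\mathrm{size}}$ contributes $0$ while $C_t$ contributes $2$. If exactly one of $a,b$ is $1$, both contribute exactly $1$. In every case the pair contributes no more to $|R_t\oplus M_t|$ than to $|R_t\oplus C_t|$.

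The remaining vertices are handled separately. Executed edges give $M_t=C_t$ on both endpoints. Every other vertex has $(M_{t-1}\oplus C_t)(v)=0$, so \cref{intended change} gives $M_t(v)=C_t(v)$ there as well. Summing over these disjoint groups yields $|R_t\oplus M_t|\le|R_t\oplus C_t|$, and monotonicity of $f_t$ then gives the lemma.
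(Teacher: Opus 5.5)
Your argument is correct and is essentially the paper's proof. The paper builds an injection from the points pending under $\mathcal{A}_{\mathrm{size}}$ but not under $C_t$ into the points pending under $C_t$ but not under $\mathcal{A}_{\mathrm{size}}$, sending each such point to its partner in the min-cost perfect matching induced by $M_{t-1}\oplus C_t$; this is exactly your per-edge accounting on unexecuted edges, with executed edges and points outside $T_{M_{t-1}\oplus C_t}$ contributing equally to both sides. In the final write-up, drop the opening pointwise claim, which, as you observed yourself, is false.
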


\begin{proof}
    The delay cost of $\mathcal{A}_{\mathrm{size}}$ at step $t$ is $f_{t}(|R_{t}\oplus M_{t}|)$ and the service cost to process request $\rho_{t}$ by $\mathcal{A}_{\m}$ is $f_{t}(|R_{t}\oplus C_{t}|)$. By the monotonicity of the size-based delay function, it suffices to show that $|R_{t}\oplus M_{t}|\leq |R_{t}\oplus C_{t}|$. We do this by constructing an injective map $h_{t}: P'_{t}\setminus P_t\rightarrow P_{t}\setminus P'_t$, where
 \begin{align*}
          P'_{t}=\{v\in V\mid (R_{t}\oplus M_{t})(v)=1 \};\qquad
          P_{t}=\{v\in V\mid (R_{t}\oplus C_{t})(v)=1\}.
 \end{align*}

For every $v\in P'_{t}\setminus P_t$, it holds that $M_{t}(v)\neq C_{t}(v)$ according to the definition of $P'_t$ and $P_t$. By \cref{intended change}, it must be that $(M_{t-1}\oplus C_{t})(v)=1$. Let $u$ be the request that is matched with $u$ in the min-cost perfect matching induced by $M_{t-1}\oplus C_{t}$. We now prove that $u\in P_t\setminus P'_t$, and hence we can define $h_t(v)=u$.

By \cref{no new request}, $(R_t\oplus M_{t-1})(v)=1$ for every $v\in P'_t$. By the construction of $\mathcal{A}_{\mathrm{size}}$, if $(R_t\oplus M_{t-1})(v)=1$ and $M_{t}(v)\neq C_{t}(v)$ both hold, then it must be that $(R_{t}\oplus M_{t-1})(u)=0$ making $v$ unmatched at step $t$. In this case, it holds that $(R_t\oplus M_{t})(u)\leq (R_t\oplus M_{t-1})(u)=0$ according to \cref{no new request}, which implies that $u
\notin P'_t$ by definition. On the other hand, $(M_{t-1}\oplus C_t)(u)=1$ yet $u$ is unmatched at step $t$, which implies that $(R_t\oplus C_t)(u)=(R_t\oplus M_{t-1})(u)\oplus 1=1$, and so $u\in P_t$. As a result, $u\in P_t\setminus P'_t$ and $h_t$ is well-defined on $P'_t\setminus P_t$.

We now prove that $h_{t}$ is injective. For any $v_1\neq v_2$ from $P'_t\setminus P_t$, $h_t(v_1)$ is matched with $v_1$ and $h_t(v_2)$ is matched with $v_2$ in the min-cost perfect matching induced by $M_{t-1}\oplus C_t$. As a result, $h_t(v_1)\neq h_t(v_2)$ by the matching property, and $h$ is an injection.
\end{proof}

The next lemma compares the connection cost of $\mathcal{A}_{\mathrm{size}}$ (which is $D(M_{t-1}, M_t)$ at every step $t$ according to \cref{local optimality}) with the transition cost of $\mathcal{A}_{\m}$.
\begin{lemma}\label{transition}
    $\sum_{t=1}^{T}D(M_{t-1}, M_{t})\leq \sum_{t=1}^{T}D(C_{t-1}, C_{t})$, where $T$ is the final step of both of the corresponding MTS and MPMD-Size instances.
\end{lemma}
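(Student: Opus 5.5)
The plan is a potential-function (amortization) argument with potential $\Phi_t := D(M_t, C_t)$, the distance in $(\kv,D)$ between the state of $\mathcal{A}_{\mathrm{size}}$ and the state of $\mathcal{A}_{\m}$. Since $M_0 = C_0 = \mathbf{0}$, we have $\Phi_0 = 0$, and $\Phi_T \ge 0$ always. It therefore suffices to prove the per-step inequality
\[
D(M_{t-1},M_t) + \Phi_t - \Phi_{t-1} \le D(C_{t-1},C_t)
\]
for every $t$. Summing this over $t = 1,\dots,T$ and telescoping then gives the lemma.

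To prove the per-step inequality, I would fix $t$ and use the notation of \cref{local optimality}. Let $J_2$ be the min-cost perfect matching induced by $M_{t-1}\oplus C_t$ that Algorithm~\ref{transformation} inspects, so $\omega(J_2) = D(M_{t-1},C_t)$. Let $J_1 \subseteq J_2$ be the set of edges actually matched at step $t$. The argument then proceeds in four steps.

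\begin{enumerate}
\item Each matched edge flips exactly its two endpoints, and the edges of $J_1$ are vertex-disjoint. Hence $M_{t-1}\oplus M_t$ is precisely the indicator vector of the vertex set covered by $J_1$.
\item By \cref{local optimality}, $D(M_{t-1},M_t) = \omega(J_1)$.
\item Write $M_t\oplus C_t = (M_{t-1}\oplus C_t)\oplus(M_{t-1}\oplus M_t)$. The support of $M_{t-1}\oplus C_t$ is the vertex set of $J_2$, and the support of $M_{t-1}\oplus M_t$ is the vertex set of $J_1 \subseteq J_2$. So the support of $M_t\oplus C_t$ is exactly the set of vertices covered by $J_2\setminus J_1$. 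This makes $J_2\setminus J_1$ a perfect matching on $T_{M_t\oplus C_t}$, and by \cref{distance} we get $\Phi_t = D(M_t,C_t) \le \omega(J_2\setminus J_1)$.
\item Combining the previous steps and applying the triangle inequality in $(\kv,D)$ (the metric claim proved above) gives
\[
D(M_{t-1},M_t)+\Phi_t \le \omega(J_1)+\omega(J_2\setminus J_1) = \omega(J_2) = D(M_{t-1},C_t) \le D(M_{t-1},C_{t-1}) + D(C_{t-1},C_t).
\]
The first term on the right is $\Phi_{t-1}$, which is exactly the per-step inequality.
\end{enumerate}

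The only delicate step is step 3, identifying $M_t\oplus C_t$ with the uncovered part of $J_2$. This relies on Algorithm~\ref{transformation} only ever matching pairs that are edges of $J_2$, and on each point being flipped at most once per step (the edges of $J_2$ are vertex-disjoint). I expect this to be a short but careful bookkeeping check; everything else is telescoping.
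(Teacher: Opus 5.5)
Your proposal is correct and is essentially the paper's proof. It uses the same potential $D(M_t,C_t)$, the same triangle-inequality step $D(M_{t-1},C_t)\le \Phi_{t-1}+D(C_{t-1},C_t)$, and the same split of the min-cost matching induced by $M_{t-1}\oplus C_t$ into the executed edges and the remaining ones; the paper phrases this via the disjoint supports $T_1,T_2$ and \cref{intended change}, while you read the disjointness off directly from $J_1\subseteq J_2$. One small economy: in step 2 you only need $D(M_{t-1},M_t)\le\omega(J_1)$, which follows from step 1 and the definition of $D$ without invoking \cref{local optimality}.
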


\begin{proof}
    We define a potential function $\phi_{t}=D(M_{t}, C_{t})$ for each step $t$. It is easy to see that $\phi_{0}=0$ and $\phi_{t}\geq 0$ for every $t$. We prove the telescopic inequalities
    \begin{equation*}
        D(M_{t-1}, M_{t})\leq D(C_{t-1}, C_{t})-(\phi_{t}-\phi_{t-1})
    \end{equation*}
    for every step $t\geq 1$ to draw the conclusion.

    First, by the triangle inequality on the metric space $(\mathcal{K}^{V}, D)$, it holds that
    \begin{equation*}
        D(M_{t-1}, C_{t})\leq D(M_{t-1}, C_{t-1})+D(C_{t-1}, C_{t})=D(C_{t-1}, C_{t})+\phi_{t-1}.
    \end{equation*}
    
    We now show that $D(M_{t-1}, M_{t})+\phi_{t}=D(M_{t-1}, M_{t})+D(C_{t}, M_{t})\leq D(M_{t-1}, C_{t})$, then we get the desired inequality by summing up these two.

    Define
    \begin{align*}
          T_{1}=\{v\in V\mid (M_{t-1}\oplus M_{t})(v)=1\};
          \qquad T_{2}=\{v\in V\mid (M_{t}\oplus C_{t})(v)=1\}.
 \end{align*}
We first observe that $T_{M_{t-1}\oplus C_{t}}=\{v\in V\mid (M_{t-1}\oplus C_{t})(v)=1\}=T_{1}\Delta T_{2}$, according to the definition of $\oplus$ operation of Boolean vectors. By \cref{intended change}, for every $v\in T_2$, it must hold that $v\in T_{M_{t-1}\oplus C_{t}} $ and hence $v\notin T_1$, which implies that $T_1\cap T_2=\varnothing$. As a result, we can express $T_{M_{t-1}\oplus C_{t}}=T_1\sqcup T_2$\footnote{$T_1\sqcup T_2$ is the disjoint union of $T_1$ and $T_2$.} . Moreover, for each pair $(u,v)$ in the min-cost perfect matching induced by $M_{t-1}\oplus C_{t}$, either $u,v$ are both in $T_{1}$, or $u,v$ are both in $T_{2}$, according to the mechanism of $\mathcal{A}_{\mathrm{size}}$ to update $M_{t}$ at every step. Therefore, we can divide all matches in the min-cost perfect matching induced by $M_{t-1}\oplus C_{t}$ into two parts: $J_{1}$ that forms a matching on $T_{1}$, and $J_{2}$ that forms a matching on $T_{2}$. Using $w(\cdot)$ to denote the weight of a matching, we have $w(J_{1})+w(J_{2})=D(M_{t-1}, C_{t})$ since $J_{1}\cup J_{2}$ forms the min-cost perfect matching induced by $M_{t-1}\oplus C_{t}$. Moreover, $D(M_{t-1}, M_{t})\leq \omega(J_{1})$ and $D(M_{t}, C_{t})\leq\omega (J_{2})$ by the definition of distance $D$. This proves the desired inequality.
\end{proof}

Combining \cref{delay,transition}, we conclude that the total cost of $\mathcal{A}_{\mathrm{size}}$ on the MPMD-Size instance on $(V,d)$ is no more than the total cost of $\mathcal{A}_{\m}$ on the MTS instance on $(\mathcal{K}^{V},D)$, yielding the following theorem of the reduction.

\begin{theorem}
Let $(V,d)$ be a metric, $c_{\m}$ be the competitive ratio of MTS on $(\kv,D)$ and $c_{\mathrm{size}}$ be that of MPMD-Size on $(V,d)$. Then, $c_{\mathrm{size}}\leq c_{\m}$.
\end{theorem}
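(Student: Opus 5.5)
The plan is to compose the pieces already established: the reduction of an MPMD-Size instance to an MTS instance on $(\kv,D)$, and the translation of MTS algorithms back via Algorithm~\ref{transformation}. Fix any $c$-competitive online algorithm $\mathcal{A}_{\m}$ for MTS on $(\kv,D)$; by \cref{lazy} we may take it lazy, though this is not essential. Given an MPMD-Size instance on $(V,d)$, we assume without loss of generality that all algorithms are smart (\cref{smart}). We build online, at each step $t$, the task vector $\rho_t(\bu)=f_t(|R_t\oplus\bu|)$. This is possible since $R_t$ and $f_t$ are revealed at step $t$. We feed $\rho_t$ to $\mathcal{A}_{\m}$, obtain its state $C_t$, and run $\mathcal{A}_{\mathrm{size}}$ as in Algorithm~\ref{transformation}. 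Since every quantity used at step $t$ depends only on the input up to step $t$, $\mathcal{A}_{\mathrm{size}}$ is online.

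Next we bound the cost of $\mathcal{A}_{\mathrm{size}}$ by that of $\mathcal{A}_{\m}$.
\begin{itemize}
\item By \cref{local optimality}, the connection cost at step $t$ equals $D(M_{t-1},M_t)$. Summing over $t$ and applying \cref{transition}, the total connection cost is at most the total transition cost $\sum_t D(C_{t-1},C_t)$ of $\mathcal{A}_{\m}$.
\item By \cref{delay}, the delay cost at each step is at most $\rho_t(C_t)$.
\end{itemize}
Adding the two gives $\cost(\mathcal{A}_{\mathrm{size}})\le\cost(\mathcal{A}_{\m})$. Competitiveness of $\mathcal{A}_{\m}$ gives $\cost(\mathcal{A}_{\m})\le c\cdot\OPT_{\m}(+\text{additive constant, if allowed})$. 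Then \cref{corres} gives $\OPT_{\m}\le\OPT_{\mathrm{size}}$. Chaining these yields $\cost(\mathcal{A}_{\mathrm{size}})\le c\cdot\OPT_{\mathrm{size}}$, so $c_{\mathrm{size}}\le c$. Taking the infimum over $c$ gives $c_{\mathrm{size}}\le c_{\m}$.

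For randomized algorithms against an oblivious adversary, the same argument applies pathwise. The MTS instance is a deterministic function of the MPMD-Size input, hence also oblivious. For every fixed random string of $\mathcal{A}_{\m}$, the per-step inequalities hold, so taking expectations gives $\mathbb{E}[\cost(\mathcal{A}_{\mathrm{size}})]\le\mathbb{E}[\cost(\mathcal{A}_{\m})]\le c\,\OPT_{\m}\le c\,\OPT_{\mathrm{size}}$.

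The main point needing care is that the two instances really are synchronized. The MTS task at step $t$ must not depend on $\mathcal{A}_{\mathrm{size}}$'s choices; it depends only on $R_t$ and $f_t$, and this is what makes the adversary for MTS oblivious. The initial states must also agree, $C_0=M_0=\mathbf{0}$. A second point is that \cref{corres} relies on the connection cost of any MPMD solution between consecutive steps being at least $D(M_{t-1},M_t)$, i.e.\ the min-cost transformation property of $D$ (\cref{matchingcost}). That property can be verified by the same symmetric-difference/shortcutting argument used in the triangle inequality proof. Everything else is bookkeeping.
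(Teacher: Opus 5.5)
Your proposal is correct and follows the paper's own argument. The paper likewise combines \cref{local optimality}, \cref{transition} and \cref{delay} to get $\cost(\mathcal{A}_{\mathrm{size}})\le\cost(\mathcal{A}_{\m})$, then uses \cref{corres} to compare the optima. The only small difference is how you justify \cref{matchingcost}: you would use shortcutting, whereas the paper uses smartness to note that the step-$t$ matches already form a perfect matching on the points where $M_{t-1}\oplus M_t$ equals $1$. Your pathwise argument for the randomized case is a useful explicit addition that the paper leaves implicit.
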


Applying the upper bound of $O(N)$ for deterministic algorithms from \cite{DBLP:conf/stoc/BorodinLS87} and that of $O(\log ^{2}N)$ for randomized algorithms from \cite{DBLP:journals/siamcomp/BubeckCLL21} of MTS on every $N$-point metric gives the upper bounds in Theorems~\ref{det} and~\ref{ran}.

\begin{corollary}
There is an $O(2^{n})$-competitive deterministic algorithm and an $O(n^2)$-competitive randomized algorithm for MPMD-Size on every $n$-point metric.  
\end{corollary}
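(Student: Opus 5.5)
The corollary follows by composing the reduction theorem just proved, $c_{\mathrm{size}}\leq c_{\m}$ on $(\kv,D)$, with known upper bounds for MTS. First I would check that $(\kv,D)$ is a legitimate finite metric. This holds by the earlier claim that $(\kv,D)$ is a metric space. I would also record its size, $N=|\kv|=2^{n-1}$.

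In the deterministic case, I would invoke the Borodin--Linial--Saks work function algorithm. It is $(2N-1)$-competitive on every $N$-point metric, which gives $c_{\m}\leq 2\cdot 2^{n-1}-1<2^n$ on $(\kv,D)$. Applying \cref{lazy} if desired makes it lazy without increasing cost. I would then feed it into the translation of Algorithm~\ref{transformation}. By \cref{delay}, \cref{local optimality} and \cref{transition}, the resulting $\mathcal{A}_{\mathrm{size}}$ pays at most the MTS algorithm's cost on the constructed instance. That cost is at most $c_{\m}\cdot\OPT_{\m}$ plus any additive constant, and by \cref{corres} this is at most $c_{\m}\cdot\OPT_{\mathrm{size}}$ plus the same constant. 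The translation is online because $\rho_t$ depends only on $R_t$ and $f_t$, which are known at step $t$. This gives the $O(2^n)$ bound.

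In the randomized case, I would use the Bubeck--Cohen--Lee--Lee bound of $O(\log^2 N)$ against an oblivious adversary on any $N$-point metric, which gives $O((n-1)^2)=O(n^2)$. Here I need to check that the reduction survives randomization. For each fixed random string the MTS algorithm is deterministic and its translation is a valid MPMD-Size algorithm. The pointwise inequality $\cost(\mathcal{A}_{\mathrm{size}})\leq\cost(\mathcal{A}_{\m})$ therefore holds for each realisation, and taking expectations preserves it. Obliviousness is preserved as well: the adversary fixes the MPMD-Size input, and this determines the MTS task sequence independently of the algorithm's coins, because $\rho_t$ depends only on the input and not on the algorithm's state.

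The main point needing care is this randomized step, together with the fact that the MTS task sequence never depends on the algorithm's choices. Both follow from the construction of $\rho_t$. Everything else is direct substitution of $N=2^{n-1}$ into the cited bounds.
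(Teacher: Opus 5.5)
Your proposal is correct and follows the paper's route exactly: the paper gets the corollary by combining the reduction $c_{\mathrm{size}}\leq c_{\m}$ on the $2^{n-1}$-point metric $(\kv,D)$ with the $O(N)$ deterministic MTS bound of Borodin et al.\ and the $O(\log^2 N)$ randomized MTS bound of Bubeck et al. Your extra check that the reduction survives randomization against an oblivious adversary (fix the coins, then use that $\rho_t$ depends only on the input and not on the algorithm's state) is a valid elaboration of a step the paper leaves implicit.
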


\subsection{From MTS-Single to MPMD-Size}

In this part, we establish the reduction from any given instance of MTS-Single on $(\mathcal{K}^{V},D)$ to one of MPMD-Size on $(V,d)$, where $V$ is a given metric space and $|V|=n$. 

\subparagraph{MPMD-Size delay function.} We first state the structure of the size-based delay function $f_{t}$ of the MPMD-Size instance for every step $t\geq 1$:
\[
f_{t}(N) = \begin{cases}
    0 & \text{if } N<n ;\\
    \varepsilon_t & \text{if } N\geq n.
\end{cases}
\]
Here $\varepsilon_t$ denotes a positive amount in our construction that will depend on the request of the MTS-Single instance at step $t$. More specifically, at every step $t$ there is exactly one point that is ``hit'' with a positive cost by the request in MTS-Single, and we use $\varepsilon_t$ to denote the amount of this positive cost.

\subparagraph{Requests of MPMD-Size instance.} We assume that the initial setting of the MTS server is $C_0=\mathbf{0}$, otherwise we set $R_t\oplus\bu_t=C_0\oplus \mathbf{1}$ and show that the property maintained in the proof of \cref{MSSIZE} is $M_t\oplus C_t=C_0$ for every step $t$ instead (it is intuitively like ``shifting" both $R_t$ and $M_t\oplus C_t$ by $C_0$). At every step $t$ when the new request $\rho_{t}$ arrives in the MTS-Single instance on $(\kv,D)$, suppose that $\rho_{t}(\bu_{t})=\varepsilon_t>0$ and $\rho_{t}(\bu)=0$ for every $\bu\in \kv\setminus \{\bu_{t}\}$, then we set $R_t=\bu_t\oplus \mathbf{1}$ for the MPMD-Size instance. To have a more intuitive understanding of how new requests arrive in the metric space at step $t$, we use the difference between $R_{t-1}$ and $R_{t}$, i.e. $R_{t-1}\oplus R_{t}$, to denote the new arrivals at step $t$, such that $(R_{t-1}\oplus R_{t})(v)=1$ if and only if a new request arrives at step $t$ at point $v$, for every $v\in V$. Moreover, we can express this amount as $R_{t-1}\oplus R_{t}=R_{t-1}\oplus \bu_t\oplus \mathbf{1}$, which can be computed from $R_{t-1}$ that has been well-defined from previous steps, and $\bu_t$ defined by the new request of MTS-Single instance at step $t$. As a result, the MPMD-Size instance that we reduce MTS-Single to is well-defined.

The following lemma provides a lower bound of a solution's connection cost at step $t$, given $M_{t-1}$ and $M_{t}$, the two consecutive states of matches that have been made so far.

\begin{lemma}\label{matchingcost}
    The cost of matches made at step $t$ is at least $D(M_{t-1}, M_{t})$.
\end{lemma}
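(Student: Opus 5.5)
Let $E_t$ be the multiset of matches made at step $t$. Each match is a pair of requests at two distinct points $u\neq v$; matches of co-located requests cost zero and do not change $M$, so we discard them. We view $E_t$ as a multigraph on $V$ with edge weights $d(u,v)$. The connection cost at step $t$ is $\omega(E_t)=\sum_{(u,v)\in E_t} d(u,v)$. Each match flips the parity of the match-count at both of its endpoints. Hence, for every $v\in V$,
\[
(M_{t-1}\oplus M_t)(v) \equiv \deg_{E_t}(v) \pmod 2 .
\]
So the odd-degree vertices of $E_t$ are exactly the set $T := T_{M_{t-1}\oplus M_t}$. This set has even size, consistent with $M_{t-1}\oplus M_t\in\kv$ by the closure claim.

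The task is therefore to show that any multigraph whose odd-degree vertex set is $T$ has total weight at least the weight of a min-cost perfect matching on $T$. That matching weight is exactly $D(M_{t-1},M_t)$ by \cref{distance}. This is a standard T-join argument, and it mirrors the path-shortcutting step in the proof that $(\kv,D)$ is a metric.

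\textbf{Step 1 (reduce to a forest of paths).} Repeatedly remove any cycle, and also any pair of parallel edges, from $E_t$. Doing so does not change the parity of any degree and does not increase the weight. We end with a forest $F$ whose odd-degree vertices are still exactly $T$, and $\omega(F)\le \omega(E_t)$.

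\textbf{Step 2 (decompose into paths).} In any forest, the edge set can be split into $|T|/2$ edge-disjoint paths, each joining a pair of vertices of $T$, with every vertex of $T$ used as an endpoint exactly once. To obtain this, in each tree pick a path between two odd-degree vertices, remove its edges, and induct. Removing the path flips only the parity at its two endpoints, so the odd set shrinks by exactly these two vertices.

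\textbf{Step 3 (shortcut).} Replace each path by a single edge joining its two endpoints. By the triangle inequality in $(V,d)$, this does not increase the weight. The resulting edges form a perfect matching on $T$ of weight at most $\omega(F)\le\omega(E_t)$. Since $D(M_{t-1},M_t)$ is the minimum weight over all perfect matchings on $T$, the claimed bound follows.

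The main obstacle is Step 2, the path decomposition. The cleanest route is to induct on the number of edges. Alternatively, one can avoid the forest reduction and use an Euler-type argument: add a dummy pairing of $T$ so that every degree becomes even, decompose the resulting graph into closed trails, and read off trails between $T$-vertices. I would present the direct induction.
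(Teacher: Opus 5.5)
Your proof is correct, but it takes a more general route than the paper.

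The paper's proof relies on a normalization: co-located requests cancel automatically on arrival, via smartness plus the remark preceding Observation~\ref{matchandarrival}. So after arrivals at step $t$, each point holds at most one available request and is matched at most once during step $t$. The step-$t$ matches therefore already form a perfect matching on $T_{M_{t-1}\oplus M_t}$, and the bound follows directly from Definition~\ref{distance}.

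Your $T$-join argument never needs degrees to be at most one. It uses degree parity, cycle removal, a path decomposition of the resulting forest, and shortcutting via the triangle inequality. It therefore works for an arbitrary multiset of cross-point matches. This includes the case where two requests arriving at the same point in the same step are matched to two different points. Smartness as literally defined, which constrains only the pending sets after matching, does not forbid this; the paper excludes it only through the extra cancellation assumption.

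What your route buys is robustness: the lemma holds for any solution without first invoking the online normalization, which is convenient where it is applied to an arbitrary solution in Lemma~\ref{corres}. What the paper's route buys is brevity: under its normalization your Steps 1--3 are vacuous. Your argument is essentially the same shortcutting the paper uses to prove that $D$ satisfies the triangle inequality.
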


\begin{proof}
Let $M(v)$ denote the number of times that the algorithm matches $v$ with another point at step $t$, then $M(v)$ is either $0$ or $1$ based on the assumption that there is at most one available request at every $v\in V$ and so we naturally regard $M$ as a Boolean vector. For every $v\in V$, it holds that $M_{t}(v)=M_{t-1}(v)\oplus M(v)$, according to the definition of $M_{t}$ and $M_{t-1}$ and the addition operation on Boolean vectors. As a result, $M$ is a perfect matching of $T_{M_{t-1}\oplus M_{t}}=\{v\in V\mid (M_{t-1}\oplus M_{t})(v)=1\}$ (since we only consider matches between different points). The statement follows by \cref{distance}.
\end{proof}

Let $\OPT_{\ms}$ denote the value of the optimal solution of the MTS-Single instance and $\OPT_{\mathrm{size}}$ denote the that of the MPMD-Size instance.

\begin{lemma}\label{MSSIZE}
    $\OPT_{\ms}\geq \OPT_{\mathrm{size}}$. 
\end{lemma}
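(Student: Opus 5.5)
Given any solution of the MTS-Single instance, i.e.\ a sequence of states $C_0=\mathbf{0},C_1,C_2,\dots$ in $\kv$, I will build an MPMD-Size solution whose cost is at most that of the MTS solution. The natural choice is to keep the invariant $M_t=C_t$ for every $t$. Equivalently, at step $t$ the MPMD-Size solution performs the matches of the min-cost perfect matching induced by $C_{t-1}\oplus C_t$. The connection cost at step $t$ is then exactly $D(C_{t-1},C_t)$, which is the transition cost of the MTS solution. I therefore need to verify two things: that these matches are feasible, and that the delay cost at step $t$ is at most $\rho_t(C_t)$.

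\textbf{Delay cost.} By \cref{matchandarrival}, after the matches at step $t$ the pending vector is $R_t\oplus M_t=\bu_t\oplus\mathbf{1}\oplus C_t$. Its size is $n$ if and only if $C_t=\bu_t$, and in that case the delay is $\varepsilon_t=\rho_t(\bu_t)$. Otherwise the size is less than $n$, so the delay is $0=\rho_t(C_t)$. Hence the delay cost of every step equals the MTS service cost at that step.

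\textbf{Feasibility.} This is the main obstacle. Upon arrival at step $t$, the available requests are given by $R_t\oplus M_{t-1}=\bu_t\oplus\mathbf{1}\oplus C_{t-1}$. A point $v$ is unavailable exactly when $\bu_t(v)\neq C_{t-1}(v)$. The matching induced by $C_{t-1}\oplus C_t$ may use such unavailable points, so the naive invariant can fail. My plan is to first apply \cref{lazy} and restrict to lazy MTS solutions, which lose nothing for $\OPT_{\ms}$. A lazy solution moves at step $t$ only when $\rho_t(C_{t-1})>0$, that is, only when $C_{t-1}=\bu_t$. In that case $R_t\oplus M_{t-1}=\mathbf{1}$, so every point has a pending request and any matching on $T_{C_{t-1}\oplus C_t}$ is feasible. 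When the solution does not move, no matches are needed and feasibility is trivial.

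\textbf{Putting it together.} Summing over all steps, the MPMD-Size solution has cost at most the cost of the lazy MTS solution. Taking that solution to be optimal gives $\OPT_{\mathrm{size}}\le\OPT_{\ms}$. Two details remain to check. First, requests left pending at the end must also be handled: the final state has $|R_T\oplus M_T|$ even, and the instance can be extended or a final matching argued for, or it suffices that the paper treats the horizon as ending with zero delay. Second, the case of a general $C_0$ follows by the shifting remark in the construction.
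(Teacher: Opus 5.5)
Your proposal is correct and is essentially the paper's own proof. You pass to a lazy solution via \cref{lazy} and play the min-cost matching induced by $C_{t-1}\oplus C_t$ to keep $M_t=C_t$; laziness gives $C_{t-1}=\bu_t$ whenever the solution moves, so $R_t\oplus M_{t-1}=\mathbf{1}$ and the matching is feasible; and the delay is $\varepsilon_t$ exactly when $C_t=\bu_t$. One small correction to your end-of-horizon aside: since $R_t=\bu_t\oplus\mathbf{1}$ with $|\bu_t|$ even, $|R_T\oplus M_T|\equiv n \pmod 2$, so it is not even in general; the paper's step-wise cost accounting simply imposes no final perfect-matching requirement, so this does not affect the lemma.
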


\begin{proof}
Given any solution $\SOL_{\ms}$ of MTS-Single, we convert it to a solution $\SOL_{\mathrm{size}}$ of MPMD-Size without increasing the cost as follows. In the solution $\SOL_{\mathrm{size}}$, a match only happens at time step $t$ when $\SOL_{\ms}$ transitions from $C_{t-1}$ to a different state $C_{t}\neq C_{t-1}$. By \cref{lazy}, we can always assume that $\bu_t=C_{t-1}$ is the only non-zero entry of the task vector $\rho_{t}$ of time step $t$. As for the correspondence between $\SOL_{\ms}$ and $\SOL_{\mathrm{size}}$, $\SOL_{\mathrm{size}}$ would make a min-cost perfect matching induced by $C_{t-1}\oplus C_{t}$ with cost equal to $D(C_{t-1}, C_{t})$. We need to show that such a matching is always feasible upon the arrival of new requests at time step $t$, i.e. that there is an unmatched request on each point $v$ with $(C_{t-1} \oplus C_t)(v)=1$. 

We claim that the value of $M_t\oplus C_t$ does not change, i.e., the equality $M_{t}=C_{t}$ is maintained for every $t$ because $M_0=C_0=\mathbf{0}$ as per the initial setting. If this holds, then at every step $t$ when $\rho_{t}$'s non-zero entry $\bu_t$ is the same as $C_{t-1}$, it holds that $\bu_t=C_{t-1}=M_{t-1}$ and hence $R_{t}\oplus M_{t-1}=R_{t}\oplus \bu_t=\mathbf{1}$ according to our construction of the MPMD-Size instance. As a result, upon the arrival of new requests at step $t$, there is a request at every point in $V$, therefore every matching is feasible at step $t$.

The claim trivially holds for $t=0$ because $M_{0}=C_{0}$ in the initial setting. Upon a new step $t$, if $\SOL_{\ms}$ does not change its state at step $t$, i.e.~$C_{t}=C_{t-1}$, then $\SOL_{\mathrm{size}}$ does not make any match at step $t$ as well, and $M_{t}=M_{t-1}=C_{t-1}=C_{t}$ is maintained.  If $\SOL_{\ms}$ moves to a different state $C_{t}\neq C_{t-1}$, then $\bu_t=C_{t-1}$ according to our previous assumption. Based on the definition of the MPMD-Size instance, $R_{t}\oplus \bu_t=\mathbf{1}$, which means that $R_{t}\oplus M_{t-1} =R_{t}\oplus C_{t-1}=R_{t}\oplus \bu_t=\mathbf{1}$ because of the inductive hypothesis that $C_{t-1}=M_{t-1}$. As a result, the matching induced by $C_{t-1}\oplus C_{t}$ is feasible when new requests arrive at step $t$, and $M_{t}=M_{t-1}\oplus (C_{t-1}\oplus C_{t})=(M_{t-1}\oplus C_{t-1})\oplus C_{t}=C_{t}$. Therefore, the statement is maintained and this concludes the proof of the claim.

    Based on the discussion above, the connection cost of $\SOL_{\mathrm{size}}$ is exactly equal to the cost of moving between different states of $\SOL_{\ms}$. Moreover, $\SOL_{\mathrm{size}}$ incurs a positive cost $\varepsilon_t$ at step $t$ if and only if $R_{t}\oplus M_{t}=\mathbf{1}$ due to the structure of the delay function $f_{t}$, and this is equivalent to $R_t\oplus C_t=\mathbf{1}$ because $M_{t}=C_{t}$. Since we construct $R_{t}$ for MPMD-Size following the rule that $R_{t}\oplus \bu_t=\mathbf{1}$ where $\bu_t$ is the only non-zero entry of request $\rho_{t}$ of MTS-Single, and $\SOL_{\ms}$ incurs a positive cost $\varepsilon_t$ at step $t$ if and only if $\bu_t=C_{t}$, we get that $\SOL_{\ms}$ incurs the cost $\varepsilon_t$ if and only if $R_{t}\oplus C_{t}=\mathbf{1}$ as well. As a result, the delay cost of $\SOL_{\mathrm{size}}$ is equal to the cost of processing tasks of $\SOL_{\ms}$, and therefore $\SOL_{\mathrm{size}}=\SOL_{\ms}$.
\end{proof}

The next step is to convert an online algorithm $\mathcal{A}_{\mathrm{size}}$ for the MPMD-Size instance to one $\mathcal{A}_{\ms}$ for the MTS-Single instance without increasing the cost. At every step $t$ after new matches have been made by $\mathcal{A}_{\mathrm{size}}$ and $M_t$ is updated, $\mathcal{A}_{\ms}$ processes request $\rho_{t}$ in state $M_{t}$.

\begin{lemma}
    The cost of $\mathcal{A}_{\ms}$ is no more than that of $\mathcal{A}_{\mathrm{size}}$.
\end{lemma}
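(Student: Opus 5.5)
We compare the two costs step by step, splitting each into a movement part and a per-step part. By \cref{smart} we may assume $\mathcal{A}_{\mathrm{size}}$ is smart, so $M_t\in\kv$ is well defined at every step. The MTS state of $\mathcal{A}_{\ms}$ at step $t$ is $C_t:=M_t$, and the initial states agree because $C_0=M_0=\mathbf{0}$.

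\textbf{Transition cost.} At step $t$, $\mathcal{A}_{\ms}$ pays $D(C_{t-1},C_t)=D(M_{t-1},M_t)$. By \cref{matchingcost}, the connection cost that $\mathcal{A}_{\mathrm{size}}$ incurs at step $t$ is at least $D(M_{t-1},M_t)$. Summing over $t$, the total transition cost of $\mathcal{A}_{\ms}$ is at most the total connection cost of $\mathcal{A}_{\mathrm{size}}$.

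\textbf{Service cost.} At step $t$, the request $\rho_t$ is nonzero only at $\bu_t$, with value $\varepsilon_t$, and by construction $R_t=\bu_t\oplus\mathbf{1}$. So $\mathcal{A}_{\ms}$ pays $\varepsilon_t$ exactly when $M_t=\bu_t$, and pays $0$ otherwise. If $M_t=\bu_t$, then $R_t\oplus M_t=\bu_t\oplus\mathbf{1}\oplus\bu_t=\mathbf{1}$. By \cref{matchandarrival}, this means there is a pending request at every one of the $n$ points when the delay is charged at step $t$. Hence $|R_t\oplus M_t|=n$, and $\mathcal{A}_{\mathrm{size}}$ pays $f_t(n)=\varepsilon_t$. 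In every other case the service cost of $\mathcal{A}_{\ms}$ is $0$, which is at most the nonnegative delay of $\mathcal{A}_{\mathrm{size}}$. So the service cost at each step is at most the delay cost at that step.

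Adding the two comparisons over all steps shows that the cost of $\mathcal{A}_{\ms}$ is at most the cost of $\mathcal{A}_{\mathrm{size}}$.

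\textbf{Remaining checks.} The main point to verify is that the pending set is encoded faithfully by $R_t\oplus M_t$, and this relies on the smart property. It also needs the convention that $R_t$ is the parity vector of arrivals. That convention holds because the arrivals at step $t$ are defined to be $R_{t-1}\oplus R_t$. Finally, if the MTS initial state $C_0$ is not $\mathbf{0}$, we use the shifted construction noted in the paper and follow the same argument with states $M_t\oplus C_0$.
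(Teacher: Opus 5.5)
Your proof is correct and follows the paper's argument. The transition cost is bounded step by step via \cref{matchingcost}, and the service cost is compared to the delay cost through $R_t\oplus\bu_t=\mathbf{1}$, so $M_t=\bu_t$ forces $|R_t\oplus M_t|=n$ and a delay of exactly $\varepsilon_t$. You prove only the one-sided inequality for the service cost, which is all that is needed, whereas the paper asserts equality; you also make explicit the smartness assumption and the shifted case $C_0\neq\mathbf{0}$, which the paper leaves implicit.
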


\begin{proof}
    It is easy to see the transition cost of $\mathcal{A}_{\ms}$ is no more than the connection cost of $\mathcal{A}_{\mathrm{size}}$. This is because the matching made by $\mathcal{A}_{\mathrm{size}}$ at $t$ must be at least the value of a min-cost perfect matching induced by $M_{t}\oplus M_{t-1}$ according to Lemma~\ref{matchingcost}, which is equal to $D(M_{t-1}, M_{t})$, the cost of moving from the previous state $M_{t-1}$ to a new state $M_{t}$ at step $t$ in the MTS-Single instance.

On the other hand, the delay cost incurred for MPMD-Size during step $t$ is equal to $\varepsilon_t=\rho_{t}(\bu_t)$ (recall that $\bu_t$ is the only state in $\kv$ where $\rho_{t}$ incurs positive cost) if and only if $R_t\oplus M_t=\mathbf{1}$, which is equivalent to that $M_{t}=\bu_t$ based on our setting of the MPMD-Size instance that $R_{t}\oplus \bu_t=\mathbf{1}$. Thus, the service cost of $\mathcal{A}_{\ms}$ is equal to the delay cost of $\mathcal{A}_{\mathrm{size}}$.
\end{proof}

Now the reduction from MTS-Single to MPMD-Size has been established as the following:
\begin{theorem}
Let $(V,d)$ be a metric, $c_{\ms}$ be the competitive ratio of MTS-Single on $(\kv,D)$ and $c_{\mathrm{size}}$ be that of MPMD-Size on $(V,d)$. Then, $c_{\ms}\leq c_{\mathrm{size}}$.
\end{theorem}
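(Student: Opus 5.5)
The theorem follows by combining \cref{MSSIZE} with the lemma that converts an online $\mathcal{A}_{\mathrm{size}}$ into an online $\mathcal{A}_{\ms}$. Fix any online algorithm $\mathcal{A}_{\mathrm{size}}$ for MPMD-Size on $(V,d)$ that is $c$-competitive, up to an additive constant if the definition of competitiveness allows one. By \cref{smart} we may assume it is smart, so the encoding via $M_t$ and $R_t$ is valid.

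\textbf{Constructing the MTS-Single algorithm.} Given an arbitrary MTS-Single instance on $(\kv,D)$ with requests $\rho_t$, we build the MPMD-Size instance online as described: delay functions $f_t$ with threshold $n$ and value $\varepsilon_t$, and arrivals chosen so that $R_t=\bu_t\oplus\mathbf{1}$. This is well-defined online, since $R_{t-1}\oplus R_t$ depends only on $R_{t-1}$ and $\bu_t$. If $C_0\neq\mathbf{0}$, we use the shifted version indicated in the text. We run $\mathcal{A}_{\mathrm{size}}$ on this instance and let $\mathcal{A}_{\ms}$ sit in state $M_t$ at step $t$.

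\textbf{Chaining the bounds.} By the preceding lemma, $\cost(\mathcal{A}_{\ms})\le\cost(\mathcal{A}_{\mathrm{size}})$. Competitiveness gives $\cost(\mathcal{A}_{\mathrm{size}})\le c\cdot\OPT_{\mathrm{size}}$, plus an additive term if allowed. \cref{MSSIZE} gives $\OPT_{\mathrm{size}}\le\OPT_{\ms}$. Chaining these yields $\cost(\mathcal{A}_{\ms})\le c\cdot \OPT_{\ms}$, so $\mathcal{A}_{\ms}$ is $c$-competitive for MTS-Single on $(\kv,D)$. Taking the infimum over algorithms gives $c_{\ms}\le c_{\mathrm{size}}$.

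\textbf{Randomized case and oblivious adversary.} For randomized algorithms, the same argument is applied per random string. The MPMD-Size instance is a deterministic function of the MTS-Single sequence and does not depend on the algorithm's coins, so an oblivious adversary for MTS-Single induces an oblivious adversary for MPMD-Size. Taking expectations preserves the inequality.

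\textbf{Main obstacle.} The one delicate point is that $\mathcal{A}_{\ms}$ must start at the MTS initial state; this holds because $M_0=\mathbf{0}=C_0$, or by the shifting argument otherwise. It also requires that \cref{MSSIZE} applies to the optimal MTS-Single solution, which it does: by \cref{lazy} we may take the optimum to be lazy without increasing its cost, and laziness is exactly the hypothesis used in that proof.
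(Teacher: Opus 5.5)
Your proposal is correct and follows the paper's argument. The paper also obtains the theorem by chaining the simulation lemma ($\mathcal{A}_{\ms}$ sitting at $M_t$ while running $\mathcal{A}_{\mathrm{size}}$ on the constructed instance costs no more than $\mathcal{A}_{\mathrm{size}}$) with \cref{MSSIZE}, applied to a lazy optimal MTS-Single solution. Your extra remarks on smartness via \cref{smart}, the initial state, and the algorithm-independence of the constructed instance (for the oblivious randomized case) agree with the paper and make explicit what it leaves implicit.
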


Combined with \cref{single}, we can use $c_{\m}$, the competitive ratio of the general MTS problem, to provide lower bounds for the MPMD-Size problem. Therefore, we obtain the following corollary when we apply the universal lower bound of $\Omega(N)$ for deterministic algorithms from \cite{DBLP:conf/stoc/BorodinLS87} and that of $\Omega(\log N)$ for randomized algorithms from \cite{DBLP:conf/stoc/BubeckCR23} of MTS on every $N$-point metric.

\begin{corollary}
No deterministic algorithm can achieve competitive ratio better than $\Omega(2^{n})$ and no randomized algorithm can achieve competitive ratio better than $\Omega(n)$ for MPMD-Size on any $n$-point metric.    
\end{corollary}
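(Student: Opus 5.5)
The plan is to chain the reductions already established. By the preceding theorem, on the $2^{n-1}$-point metric $(\kv,D)$ we have $c_{\ms}\le c_{\mathrm{size}}$, where $c_{\mathrm{size}}$ is the competitive ratio of MPMD-Size on $(V,d)$. By \cref{single}, $c_{\m}\le c_{\ms}$ on the same metric. Together these give $c_{\m}(\kv,D)\le c_{\mathrm{size}}(V,d)$. Both inequalities hold separately for deterministic algorithms and for randomized algorithms against an oblivious adversary. It remains to plug in universal MTS lower bounds for an $N$-point metric with $N=|\kv|=2^{n-1}$.

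\emph{Deterministic case.} Borodin, Linial and Saks show that every deterministic MTS algorithm on any $N$-point metric has competitive ratio at least $2N-1$. Setting $N=2^{n-1}$ gives $c_{\mathrm{size}}\ge 2^n-1=\Omega(2^n)$.

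\emph{Randomized case.} The universal $\Omega(\log N)$ lower bound of Bubeck, Coester and Rabani applies to every $N$-point metric. Setting $N=2^{n-1}$ gives $\Omega(n-1)=\Omega(n)$.

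The main point to verify is that the chain preserves each notion of competitiveness. Two things need checking.
\begin{itemize}
\item The reduction from MTS-Single to MPMD-Size builds the MPMD-Size instance online from $\rho_t$ alone. The conversion of $\mathcal{A}_{\mathrm{size}}$ into $\mathcal{A}_{\ms}$ is pointwise, with the state equal to $M_t$. So a randomized algorithm converts into a randomized algorithm, expected costs compare pathwise, and an oblivious adversary for MTS-Single yields an oblivious adversary for MPMD-Size.
\item \cref{single} is a deterministic online simulation. It carries a $(1+\varepsilon)$ loss and an instance blow-up that does not depend on the algorithm's coins, so it also preserves oblivious-adversary randomized ratios. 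Its use of \cref{lazy} when comparing optima is harmless, since the optimal offline solution can be taken lazy.
\end{itemize}
Because $\varepsilon$ is arbitrary and the lower bounds involve no additive constants, no additive-term issues arise beyond the standard ones. This gives the corollary for every $n$-point metric.
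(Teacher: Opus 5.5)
Your proposal is correct and follows the paper's own argument. You chain \cref{single} with the MTS-Single-to-MPMD-Size theorem on the $2^{n-1}$-point metric $(\kv,D)$ to get $c_{\m}\le c_{\mathrm{size}}$, then plug in the universal deterministic bound of Borodin et al.\ and the universal $\Omega(\log N)$ randomized bound of Bubeck et al.\ with $N=2^{n-1}$. Your extra checks that the reductions preserve oblivious-adversary randomized ratios are sound. One small slip: \cref{lazy} is invoked in the proof of \cref{MSSIZE}, not \cref{single}, though your remark that optimal offline solutions may be taken lazy covers it either way.
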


This gives the lower bounds in Theorems~\ref{det} and~\ref{ran}.

\section{MPMD-Convex on Uniform Metrics}\label{MPMD-convex}

In this section, we study another form of delay functions that does not fall in the category of size-based delays. We use the continuous time model to study the online matching with uniform delays, where the same monotone continuous delay function $f:\mathbb{R}_{\geq 0}\rightarrow \mathbb{R}_{\geq 0}$ (we further assume that $f(0)=0$ without loss of generality) applies to every request $r$ in the following way: suppose the arrival time of 
a request $r$ is $a(r)$ and $t$ is the time when it is matched by the algorithm, then the total delay cost incurred by request $r$ is $f(t-a(r))$. In this section, we still use $\ell(r)$ to denote the location of request $r$ in the metric space, and the connection cost of matching two requests $r_{1}$ and $r_{2}$ at locations $\ell(r_{1})$ and $\ell(r_{2})$ respectively is still $d(\ell(r_{1}), \ell(r_{2}))$. The total cost of a solution consists of its total connection cost of matching request into pairs, and the sum of delay cost incurred by every request.

We study monotone convex polynomial delay functions $f$ such that $f'(0)>0$ on $n$-point uniform metrics where the distance between any two different points is $2\delta>0$. Equivalently, such a metric can be interpreted as a star metric where leaf nodes are points in the metric space, the distance between every leaf and the root is $\delta$, and the distance between two leaves is the length of the shortest path between them, which is $2\delta$. The previous study by Liu et al.~\cite{DBLP:conf/isaac/LiuPWW18} gives a tight $\Theta(n)$ deterministic bound for a class of convex delay functions (including all monotone polynomials) such that $f(0)=f'(0)=0$, and our study complements the landscape by the following result for functions that $f'(0)>0$:

\begin{theorem}
\label{thm-convex}
For any fixed convex monotone polynomial delay $f$ such that $f'(0)>0$, there is a constant-competitive deterministic algorithm for MPMD under delay function $f$ on all $n$-point uniform metrics with inter-point distance $\delta$ for constant $\delta$.
\end{theorem}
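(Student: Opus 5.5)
We follow the plan sketched in \cref{technique}: reduce to the linear-delay problem on the same uniform (star) metric via a \emph{$T$-impatient} algorithm. Fix a constant $T>0$, to be chosen depending only on $f$ and $\delta$. An algorithm is $T$-impatient if, whenever some request has been pending for time at least $T$, it is matched the instant another pending request exists anywhere in the metric. Since $f$ is a convex monotone polynomial with $f(0)=0$ and $f'(0)>0$, on $[0,T]$ we have $f'(0)\le f'(t)\le f'(T)$. Hence for any pending duration $s\le T$,
\[
f'(0)\,s\le f(s)\le f'(T)\,s,
\]
so on ``short'' waits $f$ is within the constant factor $\kappa_T=f'(T)/f'(0)$ of the linear delay $g(t)=t$.

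\textbf{Step 1 (linear-delay algorithm).} Design a deterministic $T$-impatient algorithm $\ALG$ that is $O(1)$-competitive for linear delay on uniform metrics. The natural candidate is a counter/moat-growing rule on the star. Each pending request grows a counter at unit rate. When the counter reaches $2\delta$ (the leaf-to-leaf distance), the request becomes ``ready'' and is matched with any other ready request; co-located requests are matched immediately. On top of this we add the impatience rule: a request whose age reaches $T$ is matched with an arbitrary other pending request, or with the next arriving request. We pick $T\ge 2\delta$, so impatient matches occur only after the request has already paid linear delay at least $2\delta$. Each such forced match then costs $2\delta\le$ its accumulated linear delay, and the standard charging or dual-fitting analysis of the linear case (as in \cite{DBLP:conf/soda/AzarCK17} for height-one trees) still yields $\ALG_g\le c\cdot\OPT_g$ for an absolute constant $c$.

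\textbf{Step 2 (splitting the algorithm's delay).} Write each request $r$'s delay under $\ALG$ as $f(\min(w_r,T))+\bigl(f(w_r)-f(T)\bigr)^+$, where $w_r$ is its waiting time. The first part is at most $\kappa_T\min(w_r,T)$, which is at most $\kappa_T$ times the linear delay. The connection cost is identical under $f$ and $g$. So this portion of $\ALG_f$ is bounded by $\kappa_T\ALG_g\le\kappa_T c\,\OPT_g$. Since $f(s)\ge f'(0)s$ for all $s$ by convexity, we get $\OPT_g\le \max(1,1/f'(0))\,\OPT_f$.

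\textbf{Step 3 (excess waits, the main obstacle).} Consider a request $r$ pending for $w_r>T$. During $[a(r)+T,a(r)+w_r)$, $T$-impatience forces $r$ to be the unique pending request under $\ALG$, and no request arrives in that window. Therefore the total number of arrivals up to any time in this window is odd, so $\OPT$ also has some pending request throughout this window. These maximal ``lonely'' windows are disjoint across different such $r$. Hence $\OPT_f$ is at least the sum over $r$ of $f(w_r-T)$, or, more carefully, the delay of whichever requests $\OPT$ holds across the window, which is at least $f$ of the window length since $f$ is superadditive (convex with $f(0)=0$). The key inequality is that for a polynomial $f$ with $f'(0)>0$ there is a constant $K=K(f,T)$ with
\[
f(x+T)-f(T)\le K\bigl(f(x)+x\bigr)\quad\text{for all }x\ge0.
\]
This holds because for large $x$ both sides have the same degree, and for small $x$ both are $\Theta(x)$. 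The term $x$ is absorbed by $\OPT_g\lesssim\OPT_f$ from Step 2.

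The delicate point is making the disjoint-window charge rigorous. $\OPT$'s pending request across the window may have arrived earlier and be shared with other accounting, but superadditivity lets us charge each window separately. Summing the pieces gives $\ALG_f=O(1)\cdot\OPT_f$, with constants depending only on $f$ and $\delta$.
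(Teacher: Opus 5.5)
\textbf{Steps 2 and 3.} These are essentially the paper's reduction (\cref{convex} and \cref{impatientreduction}), done directly for general $f$ instead of after normalising to $\frac{1}{k}t^k+t$. They are fine. Nothing arrives during a lonely window, so $\OPT$'s pending set only shrinks there. Hence one $\OPT$ request stays pending throughout the window and contributes at least $f(a+L)-f(a)\ge f(L)$.

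\textbf{The gap is Step 1.} This is where the paper does most of its work, and your one-line justification does not cover it.

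First, bounding a forced match's cost $2\delta$ by the impatient request's own accumulated delay charges that cost to $\ALG$, not to $\OPT$.

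More importantly, forced matches break the invariant behind the standard star analysis. In the per-point alignment argument, $\ALG$ changes the status of a point $v$ only when $z_v$ saturates and is reset; that is what gives $y_v\le 2x_v+x'_v$ per phase. If an impatient request is matched to a partner at $v$ whose counter is unsaturated, $v$'s alignment flips mid-phase. Each such ``carry'' then adds a full threshold to the bound on $y_v$ (the $\delta\cdot b_v$ term in \cref{upperbound4}), and $\OPT$ has not paid for it locally.

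The paper pays for carries by designing the $\READY$ transition so that reaching it certifies $\OPT\ge\delta$ on pairwise disjoint intervals (\cref{bring,bringup}). Time spent on the root counter is lonely time. Accumulating $2\delta$ of $\ROOT$ time while other counters grow but none saturates forces $\OPT$ to pay $2\delta$ in delay or a connection. A pure age trigger certifies nothing of this kind, so ``the standard analysis still yields $\ALG_g\le c\cdot\OPT_g$'' cannot simply be asserted.

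\textbf{How your route could be repaired.} It works, but only with an argument you have not given and a different choice of $T$. Take per-point counters with threshold $\theta$ that persist across co-located matches. Let $Y=\sum_v y_v$ and let $F$ be the number of forced matches. Then:
\begin{itemize}
\item $\ALG^t\le 2Y+L$, where $L\le\OPT^t$ is the lonely time;
\item $Y\le 2\OPT^t+O(\OPT^d)+\theta F$;
\item $F\le\ALG^t/T$, since each forced match involves a request that waited at least $T$.
\end{itemize}
Combining these gives $Y\le O(\OPT)+\frac{2\theta}{T}Y$. This closes only when $T>2\theta$, i.e.\ $T>4\delta$ for your threshold $2\delta$. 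Your condition $T\ge 2\delta$ does not suffice.

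\textbf{Per-request counters fail.} If ``each pending request grows a counter'' means a per-request counter, the algorithm is not competitive even for linear delay. Let requests arrive at $v_1,v_2$ at time $0$, and let $\OPT$ match them to each other immediately. The adversary then sends co-located pairs at both points, timed so that each $\ALG$ request there is cancelled at age just below the threshold, while each $\OPT$ request waits only $\varepsilon'$. No $\ALG$ request ever becomes ready or impatient, so the ratio is unbounded.
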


The contrast of this result compared with the $\Omega(n)$ deterministic lower bound of the case of continuous non-decreasing unbounded delays satisfying that $f(0)=f'(0)=0$ demonstrates that even the local property of the delay function at time $t=0$ significantly affects the solution and competitiveness of the online MPMD problem. Different from the linear delay case, here the distance $2\delta$ between different points in the metric space also affects the competitive ratio: our algorithm is constant competitive under a fixed function $f$ when $\delta$ is considered as a constant
(the same as the study by Liu et al.~\cite{DBLP:conf/isaac/LiuPWW18}), while the competitive ratio can also be regarded as a function of $\delta$, which provides another parameter that might be interesting to study for convex delays.

\subsection{Simpler Polynomials}
Without loss of generality, we study $f$ in the form $f(t)=\frac{1}{k}t^{k}+t$ for $k\geq 2$. To see why we can make this assumption, suppose that we have another polynomial $h(t)=\sum_{i=1}^{k}a_{i}t^i$ that is monotone on $\mathbb{R}_{\geq 0}$, such that $a_{k}> 0, a_{1}>0$. \footnote{We need $a_1> 0$ to have $h'(0)>0$.} The following claim proves that $h$ can be approximated by $f$ within constant factors. 

\begin{claim}\label{approx}
There exist constants $A,B>0$ such that $A\leq\frac{f(t)}{h(t)}\leq B$ for every $t \geq 0$.    
\end{claim}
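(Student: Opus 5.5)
Both $f(t)=\frac{1}{k}t^k+t$ and $h(t)=\sum_{i=1}^{k}a_it^i$ vanish at $0$, so the ratio $f(t)/h(t)$ is not even defined at $t=0$. I will therefore read the claim as a statement about $t>0$ (equivalently, extend the ratio to $t=0$ by its limit). The plan is to show that $g(t):=f(t)/h(t)$ extends to a continuous, strictly positive function on $[0,\infty]$ (the compactified half-line). A continuous positive function on a compact set attains a positive minimum $A$ and a finite maximum $B$. So the proof splits into three regimes: $t\to 0$, $t\to\infty$, and the compact middle range.

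\textbf{Near zero.} Divide numerator and denominator by $t$ to get
\[
g(t)=\frac{1+\frac1k t^{k-1}}{a_1+a_2t+\dots+a_kt^{k-1}} \qquad (t>0).
\]
The right-hand side is a rational function whose denominator equals $a_1>0$ at $t=0$. Hence it extends continuously to $t=0$ with $g(0)=1/a_1\in(0,\infty)$. This is exactly where the hypothesis $a_1>0$, i.e.\ $h'(0)>0$, is used.

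\textbf{Near infinity.} Divide by $t^k$ instead:
\[
g(t)=\frac{\frac1k+t^{1-k}}{a_k+a_{k-1}t^{-1}+\dots+a_1t^{1-k}}\;\longrightarrow\;\frac{1}{k\,a_k}\in(0,\infty)\quad\text{as } t\to\infty,
\]
using $a_k>0$. Note that the paper's normalised $f$ has the same degree $k$ as $h$, which is what makes this limit finite and positive.

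\textbf{Middle range.} On $(0,\infty)$, the numerator $f(t)$ is clearly positive. For the denominator, monotonicity of $h$ gives $h(t)\ge h(0)=0$; combined with $h'(0)=a_1>0$, $h$ is strictly positive just to the right of $0$ and therefore nonnegative and eventually positive, but we must rule out $h(t_0)=0$ at some $t_0>0$. If that happened, monotonicity would force $h\equiv 0$ on $[0,t_0]$, contradicting $h'(0)>0$. So $h>0$ on $(0,\infty)$, and $g$ is continuous and positive there.

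Combining the three pieces, $g$ is continuous and positive on the compact set $[0,\infty]$, which gives the constants $A=\min g>0$ and $B=\max g<\infty$. The only real subtlety is the positivity of $h$ on $(0,\infty)$, which rests on monotonicity together with $a_1>0$. An explicit alternative is to pick $\varepsilon$ and $M$ so that the two limits give two-sided bounds on $(0,\varepsilon]$ and $[M,\infty)$, and then bound $g$ on $[\varepsilon,M]$ by compactness; this yields the same constants and is the same argument in more concrete form.
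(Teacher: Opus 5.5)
Your proposal is correct and follows essentially the same route as the paper: extend $f/h$ to $t=0$ by its limit $1/a_1$, use the limit $1/(ka_k)$ at infinity, and get positive bounds from continuity and positivity on the compact range in between. The small differences are that you compute the limits by dividing through by $t$ and by $t^k$ rather than by L'H\^opital's rule, and that you justify $h>0$ on $(0,\infty)$ using monotonicity together with $h'(0)=a_1>0$; this fills in a step the paper only asserts, since $h(0)=0$ and monotonicity alone give just $h\ge 0$.
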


\begin{proof}
Consider the rational function $\frac{f(t)}{h(t)}$. Since $h(0)=0$ and $h$ is monotone on $\mathbb{R}_{\geq 0}$, it holds that $h(t)>0$ for every $t>0$. As a result, $\frac{f(t)}{h(t)}$ is well-defined for every $t\in(0,+\infty)$.   

Moreover, it holds that $\lim_{t\rightarrow0^{+}}\frac{f(t)}{h(t)}=\frac{1}{a_{1}}$ according to L'Hôpital's rule and $\lim_{t\rightarrow+\infty}\frac{f(t)}{h(t)}=\frac{1}{ka_{k}}$, and both values are strictly bigger than zero based on our assumption that $a_{k}$ and $a_{1}$ are both non-zero values and $h(t)$ is monotone on $\mathbb{R}_{\geq0}$. Based on the definition of the limit, there exists $T>0$ such that $0<\frac{1}{2ka_{k}}<\frac{f(t)}{h(t)}<\frac{2}{ka_{k}}$ for every $t>T$. Because $\frac{f(t)}{h(t)}$ is a continuous positive function on the closed interval $[0,T]$ (define $\frac{f(0)}{h(0)}=\frac{1}{a_1}$), it has positive upper and lower bounds on $[0,T]$. Combined with the positive upper and lower bounds on $(T,+\infty)$, we conclude that $\frac{f(t)}{h(t)}$ has positive upper and lower bounds on $\mathbb{R}_{\geq 0}$.
\end{proof}

As a result, if there is a competitive algorithm for delay function $f$, then the same algorithm can be applied to delay function $h$ and the lost factor of the competitive ratio is within a constant factor that only depends on $A$ and $B$, independent of both $n$ and $\delta$.

If we apply this idea of comparing the costs of the same solution under different delay functions to comparing the convex delay function $f(t)=\frac{1}{k}t^{k}+t$ with the classical linear delay function $g(t)=t$ on the same request sequence, the only difference that can lead to different costs and different optimal solutions are delay functions $f$ and $g$. In the following part, we focus on the linear delay function $g$, and prove that a competitive algorithm for $g$ with certain properties also works well for $f$. Given an algorithm $\mathcal{A}$ and some delay function $h$, we use $\cost_{\mathcal{A}}(h)$ to denote the cost of $\mathcal{A}$ when the delay function is set to $h$. As the total cost consists of connection cost and delay cost, we use $\cost^{d}_{\mathcal{A}}(h)$ to denote its connection cost and $\cost^{t}_{\mathcal{A}}(h)$ to denote its delay cost. Similarly for any solution $\SOL$, we use $\SOL(h)$ to denote its total cost under $h$, as well as $\SOL^d(h)$ and $\SOL^t(h)$ to denote its connection and delay costs respectively. These are the notations that we will use in computations in the next section.  

\subsection{$T$-Impatient Algorithms}

For an online algorithm $\mathcal{A}$ and a request $r$, define $t_{\mathcal{A}}(r)$ to be the time that $r$ is matched by $\mathcal{A}$. 

\begin{definition}[$T$-impatient algorithms]
    An online algorithm $\mathcal{A}$ is \emph{$T$-impatient} for a fixed value $T>0$ if for every request $r$ such that $t_{\mathcal{A}}(r)-a(r)>T$, $r$ is the only pending request in the whole metric space at every moment throughout the whole time interval $[a(r)+T,t_{\mathcal{A}}(r))$, 
    i.e.~all previous requests have been matched and no new requests arrive during the interval.
\end{definition}

 Intuitively, a $T$-impatient algorithm will only leave a request $r$ unmatched for more than $T$ time only if there is no other request it can match $r$ to. Equivalently, if a request $r$ has been pending for time $T$, then $\mathcal{A}$ is going to match $r$ with another request immediately as long as there is such a request available. The $T$-impatient property guarantees a lower bound of the optimal solution during the period when there is a request pending for longer than $T$. On the other hand, when no request is pending for longer than $T$, the delay increment can be approximated by the linear delay up to a constant factor as long as $f'(0)>0$. These two properties guarantee the competitiveness of a $T$-impatient algorithm under convex delays.

The next lemma is the key to guaranteeing that the cost of a $T$-impatient algorithm during the period when there is a request that has been pending beyond time $T$ is within a constant factor of the cost incurred by the optimal solution during the same period.
\begin{lemma}\label{convex}
    Given $f(t)=\frac{1}{k}t^{k}+t$ and a constant $T>0$, $\frac{f(x)-f(T)}{f(x-T)}< 2^{k}(T^{k-1}+1)$ for every $x>T$.
\end{lemma}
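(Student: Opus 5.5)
Substitute $y=x-T>0$ and bound the ratio
\[
\frac{f(y+T)-f(T)}{f(y)}=\frac{\frac1k\big((y+T)^k-T^k\big)+y}{\frac1k y^k+y}.
\]
The linear terms contribute exactly $y$ to the numerator, and $y\le f(y)$, so this part of the ratio is at most $1$. It remains to bound $\frac1k\big((y+T)^k-T^k\big)$ by a multiple of $f(y)$.

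Expand by the binomial theorem:
\[
(y+T)^k-T^k=\sum_{j=1}^{k}\binom{k}{j}y^jT^{k-j}.
\]
Split into two cases according to whether $y\le 1$ or $y>1$.

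\textbf{Case $y\le 1$.} For every $j\ge1$ we have $y^j\le y$. Then
\[
(y+T)^k-T^k\le y\sum_{j=1}^k\binom kj T^{k-j}\le y\,(1+T)^k\le y\,2^{k}\max(1,T)^k.
\]
Dividing by $k f(y)\ge k y$ gives a bound of order $2^k\max(1,T^{k})/k$.

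\textbf{Case $y>1$.} Now $y^j\le y^k$ for every $j\le k$. Then
\[
(y+T)^k-T^k\le y^k\sum_{j=1}^k\binom kj T^{k-j}\le y^k\,2^k\max(1,T)^{k-1}.
\]
The exponent is $k-1$ here because $j\ge 1$. Dividing by $k f(y)\ge y^k$ gives at most $2^k\max(1,T^{k-1})$.

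A cruder but cleaner route is the mean value theorem: $(y+T)^k-T^k=k\xi^{k-1}y$ for some $\xi\in(T,T+y)$. Hence $\frac1k\big((y+T)^k-T^k\big)=\xi^{k-1}y\le (T+y)^{k-1}y$. Using $(T+y)^{k-1}\le 2^{k-2}(T^{k-1}+y^{k-1})$ (convexity, valid for $k\ge2$) gives
\[
\tfrac1k\big((y+T)^k-T^k\big)\le 2^{k-2}\big(T^{k-1}y+y^{k}\big).
\]
Bound each term by $f(y)$: first, $T^{k-1}y\le T^{k-1}f(y)$; second, $y^k=k\cdot\frac{y^k}{k}\le k f(y)$. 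This yields $2^{k-2}(T^{k-1}+k)f(y)$.

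Adding the linear part, the total is at most $\big(1+2^{k-2}T^{k-1}+k2^{k-2}\big)f(y)$. This must be checked against $2^k(T^{k-1}+1)$. Since $k\le 2^{k-1}$ for $k\ge 2$, we get $1+k2^{k-2}\le 1+2^{2k-3}$, which can exceed $2^k$ for large $k$. So the MVT route may lose too much, and this constant-tracking is the main obstacle.

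To fix it, treat $y^k$ more carefully: bound $y^k$ by $y^k/k$ only up to a factor $k$, but compensate by using $\binom kj\le 2^k$ together with the case split above. That is, in Case $y>1$ divide by $\frac1k y^k$ directly, since the leading term $\frac1k y^k$ already carries the $\frac1k$. Written this way, the numerator's $\frac1k\sum_j\binom kj y^jT^{k-j}$ is compared against $\frac1k y^k$, and the $\frac1k$ factors cancel. Each term is then at most $\binom kj T^{k-j}$, and these sum to at most $2^k\max(1,T^{k-1})$ (the $j=k$ term is $1$). In Case $y\le1$ the same cancellation goes through using $f(y)\ge y$ with a $\frac1k$ to spare.

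Finally add the $+1$ from the linear part. Strictness comes from $2^k\max(1,T^{k-1})+1<2^k(T^{k-1}+1)$ whenever $k\ge2$, and from $\binom{k}{j}$ summing to $2^k-1$, excluding $j=0$.
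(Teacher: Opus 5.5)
Your final argument (the ``fix'' paragraph combined with the case split) is correct, but it takes a different route from the paper. The paper splits at $x=2T$, i.e.\ at $y=T$. For $x>2T$ it drops the linear terms via $\frac{a+c}{b+c}<\frac{a}{b}$, which is valid since $x^k-T^k>(x-T)^k$. Then $x-T>x/2$ gives a bound below $2^k$ that does not depend on $T$. For $T<x\le 2T$ it factors $x^k-T^k=(x-T)\sum_{j=0}^{k-1}x^{k-1-j}T^j$, divides numerator and denominator by $x-T$, and discards the denominator $\frac1k(x-T)^{k-1}+1>1$. What remains is bounded by $(2T)^{k-1}+1$.

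You instead split at the fixed scale $y=1$ and peel off the linear part as $y/f(y)<1$. You then compare the binomial expansion of $(y+T)^k-T^k$ termwise with $\frac1k y^k$ (when $y>1$) or with $y$ (when $y\le 1$). This gives the single bound $\sum_{j=1}^{k}\binom{k}{j}T^{k-j}+1$ in both regimes. The paper's threshold shows that the ratio is bounded independently of $T$ once $y>T$, so the $T^{k-1}$ dependence comes only from the window $y\le T$; it also ends with a smaller constant in front of $T^{k-1}$. Your version needs neither the mediant inequality nor the factorization and is purely termwise.

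Two steps must be repaired in the write-up.

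First, your initial estimate in the case $y\le 1$, through $(1+T)^k\le 2^k\max(1,T)^k$, is too weak by itself: $2^kT^k/k$ exceeds $2^k(T^{k-1}+1)$ as soon as $T\ge 2k$. What is needed, and what your fix paragraph implicitly uses, is that $j\ge 1$ forces $T^{k-j}\le\max(1,T^{k-1})$. Hence $\sum_{j=1}^{k}\binom{k}{j}T^{k-j}\le(2^k-1)\max(1,T^{k-1})$ in both cases.

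Second, your closing inequality $2^k\max(1,T^{k-1})+1<2^k(T^{k-1}+1)$ is false for small $T$: for $k=2$, $T=0.1$ it reads $5<4.4$. So you cannot use the cruder bound $2^k\max(1,T^{k-1})$ on the binomial sum. The correct chain is that the ratio is $<(2^k-1)\max(1,T^{k-1})+1\le 2^k\max(1,T^{k-1})\le 2^k(T^{k-1}+1)$, with strictness supplied by $y/f(y)<1$.

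The mean value theorem detour can simply be deleted.
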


\begin{proof}
Let $x > T > 0$. Then, $x^{k}-T^{k}>(x-T)^{k}$ and hence 
\[\frac{f(x)-f(T)}{f(x-T)}=\frac{x^{k}-T^{k} + (x - T)}{(x-T)^{k} + (x - T)} < \frac{x^{k}-T^{k}}{(x-T)^{k}}.\]
If $x>2T$, then we have
\begin{equation*}
    \frac{x^{k}-T^{k}}{(x-T)^{k}}<\frac{x^{k}-T^{k}}{(x/2)^{k}}=2^{k}-\left(\frac{2T}{x}\right)^{k}<2^{k}-1<2^{k}\left(T^{k-1}+1\right),
\end{equation*}
as desired. 
On the other hand, if $T<x\leq 2T$, then we factorize $x^k - T^k$ as follows 
\begin{align*}
 x^k-T^k=\sum_{j=0}^{k-1}(x^{k-j}T^j-x^{k-j-1}T^{j+1}) =(x-T)\left(\sum_{j=0}^{k-1}x^{k-1-j}T^j\right), 
\end{align*}
and so
\begin{align*}
    \frac{f(x)-f(T)}{f(x-T)}&=\frac{\frac{1}{k}(x^{k}-T^{k})+(x-T)}{\frac{1}{k}(x-T)^{k}+(x-T)}=
    \frac{\frac{1}{k}\sum_{j=0}^{k-1}x^{k-1-j}T^{j}+1}{\frac{1}{k}(x-T)^{k-1}+1}\\
    &<\frac{1}{k}\sum_{j=0}^{k-1}x^{k-1-j}T^{j}+1\leq \frac{1}{k}\sum_{j=0}^{k-1}(2T)^{k-1-j}T^{j}+1\\
    &\leq (2T)^{k-1}+1\leq 2^{k}(T^{k-1}+1).
\end{align*}
We conclude that the statement holds for every $x>T$.
\end{proof}
We now formally state the reduction from convex delay $f(t)=\frac{1}{k}t^k+t$ to linear delay $g(t)=t$ under any $T$-impatient algorithm.
\begin{lemma}\label{impatientreduction}
    On every $n$-point uniform metric, if a $T$-impatient algorithm $\mathcal{A}$ is $\gamma$-competitive ($\gamma\geq 1$) under delay function $g(t)=t$, then $\mathcal{A}$ is $2^{k+1}(T^{k-1}+1)\gamma$-competitive under delay function $f(t)=\frac{1}{k}t^{k}+t$.
\end{lemma}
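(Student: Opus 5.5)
We compare costs request by request. Let $\OPT_f$ be an optimal solution under $f$. Since $f(t)\geq t=g(t)$ for all $t\ge 0$, we have $\OPT_f(g)\le \OPT_f(f)$, and hence
\[\cost_{\mathcal{A}}(g)\le\gamma\,\OPT(g)\le\gamma\,\OPT_f(f).\]
The connection cost of $\mathcal{A}$ does not depend on the delay function. It is therefore enough to bound the delay cost $\cost^t_{\mathcal{A}}(f)=\sum_r f(t_{\mathcal{A}}(r)-a(r))$.

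Let $x_r=t_{\mathcal{A}}(r)-a(r)$ and split the requests into two groups.

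\emph{Short requests, $x_r\le T$.} Convexity of $f$ with $f(0)=0$ gives $f(x)\le \frac{f(T)}{T}\,x=(\tfrac1k T^{k-1}+1)\,x$ on $[0,T]$. Hence the $f$-delay of such a request is at most $(T^{k-1}+1)\,g(x_r)$.

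\emph{Long requests, $x_r>T$.} Write
\[f(x_r)=f(T)+\bigl(f(x_r)-f(T)\bigr).\]
The term $f(T)\le (T^{k-1}+1)\,T\le (T^{k-1}+1)\,g(x_r)$ is charged to the linear cost exactly as for short requests. For the excess term, \cref{convex} gives $f(x_r)-f(T)<2^k(T^{k-1}+1)\,f(x_r-T)$. It remains to charge $f(x_r-T)$ to $\OPT_f(f)$.

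By $T$-impatience, throughout the interval $I_r=[a(r)+T,t_{\mathcal{A}}(r))$ the request $r$ is the only pending request under $\mathcal{A}$. All earlier requests are matched and nothing arrives during $I_r$. So at time $a(r)+T$ an odd number of requests has arrived (counting $r$, the rest being matched in pairs), and this stays true throughout $I_r$. Consequently \emph{any} solution, including $\OPT_f$, has some request $r'$ pending throughout $I_r$, and $r'$ arrived before $a(r)+T$. That request waits at least $|I_r|=x_r-T$, so by monotonicity $\OPT_f$ pays at least $f(x_r-T)$ on $r'$.

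The intervals $I_r$ of different long requests are pairwise disjoint, since during $I_r$ the request $r$ is the unique pending request of $\mathcal{A}$. However, distinct intervals could be charged to the same $r'$ in $\OPT_f$. To handle this, note that $f$ is convex with $f(0)=0$, hence superadditive: $f(a)+f(b)\le f(a+b)$. If one $r'$ is pending across several disjoint intervals, its total wait is at least the sum of their lengths. Its delay therefore dominates the sum of the $f(|I_r|)$ over those intervals. This yields $\sum_{r\text{ long}} f(x_r-T)\le \OPT_f^t(f)$. I expect this charging step, namely the parity argument together with superadditivity, to be the main point requiring care.

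Combining the two groups,
\[\cost_{\mathcal{A}}(f)\le (T^{k-1}+1)\cost_{\mathcal{A}}(g)+2^k(T^{k-1}+1)\OPT_f(f)\le (T^{k-1}+1)(\gamma+2^k)\OPT_f(f).\]
Here the first term also absorbs the connection cost, since $T^{k-1}+1\ge1$. Using $\gamma\geq1$, we have $\gamma+2^k\le 2^{k+1}\gamma$, which gives the claimed ratio $2^{k+1}(T^{k-1}+1)\gamma$.
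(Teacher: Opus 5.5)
Your proof is correct and follows essentially the same route as the paper. You split each request's delay at age $T$, bound the first part linearly, bound the excess via \cref{convex}, and charge $f(x_r-T)$ to the comparison solution using the parity argument and the disjointness of the intervals that $T$-impatience guarantees. The only difference is in that last charge. The paper charges to the solution's delay incurred \emph{during} each interval, implicitly using that a convex $f$ increases by at least $f(L)$ over any window of length $L$. You instead charge to whole requests and use superadditivity to avoid double counting, which also gives the slightly sharper factor $(T^{k-1}+1)(\gamma+2^k)$.
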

\begin{proof}
    We divide the whole time axis into disjoint parts according to the execution of an algorithm $\mathcal{A}$:
\begin{equation*}
    \Gamma_{\text{within } T}=\{t\mid t-a(r)\leq T\text{ for every request }r\text{ pending at }t\text{, or there is no request pending at $t$}\}
\end{equation*}
and
\begin{equation*}
    \Gamma^{q}_{\text{beyond } T}=\{t\mid t-a(q)> T\text{ for request }q\text{ pending at }t\}
\end{equation*}
for every request $q$ such that $t_{\mA}(q)-a(q)>T$.

It is easy to see that for any two distinct requests $q_{1}$ and $q_{2}$ such that $t_{\mA}(q_{1})-a(q_{1})>T$ and $t_{\mA}(q_{2})-a(q_{2})>T$, $\Gamma^{q_{1}}_{\text{beyond } T}\cap\Gamma^{q_{2}}_{\text{beyond } T}=\varnothing$, otherwise the intersection of two left-open intervals must have positive length corresponding to a period where both $q_1$ and $q_2$ are unmatched, a contradiction to the $T$-impatient property. As a result, $(\cup_{q\text{: }t_{\mA}(q)-a(q)>T}\Gamma^{q}_{\text{beyond } T})\cup\Gamma_{\text{within } T}$ is indeed a disjoint and complete decomposition of the entire time axis under the execution of $\mA$.

Fix some other solution $\SOL$. We first compare the costs of $\mathcal{A}$ under $f$ and $g$ as well as the cost of $\SOL$. 
For the connection cost, it holds that
\begin{equation*}
    \cost^{d}_{\mA}(f)=\cost^{d}_{\mA}(g)
\end{equation*}
since the matched pairs of requests are the same.

We can bound the delay cost that is incurred during $\Gamma_{\text{within }T}$ based on the fact that the increasing rate of the delay of any request can be upper and lower bounded by positive constants throughout this time period. We use $\cost^t_\mathcal{A}(f)\mid_{\Gamma_{\text{within }}}$ and $\cost^t_\mathcal{A}(g)\mid_{\Gamma_{\text{within }}}$ to denote the delay cost of $\mathcal{A}$ during $\Gamma_{\text{within }T}$ under delay functions $f$ and $g$ respectively, and similarly notations for $\SOL$ and $\Gamma^{q}_{\text{beyond } T}$ work in the same way. For every request $r$, let $t_{\mA}'(r)=\min\{t_{\mA}(r),a(r)+T\}$, then
\begin{align*}
    \cost^{t}_{\mA}(f)\mid_{\Gamma_{\text{within } T}}&=\sum_{r}f(t'_{\mA}(r)-a(r))=\sum_{r}[\frac{1}{k}(t'_{\mA}(r)-a(r))^{k}+(t'_{\mA}(r)-a(r))]\\
        &=\sum_{r}(t'_{\mA}(r)-a(r))\left[\frac{1}{k}(t'_{\mA}(r)-a(r))^{k-1}+1\right]\\
        &=\sum_{r}\left[\frac{1}{k}(t'_{\mA}(r)-a(r))^{k-1}+1\right]g(t'_{\mA}(r)-a(r))\\
        &\leq\left(\frac{1}{k}T^{k-1}+1\right)\sum_{r}g(t'_{\mA}(r)-a(r))\\
        &=\left(\frac{1}{k}T^{k-1}+1\right)\cost^{t}_{\mA}(g)\mid_{\Gamma_{\text{within } T}}.
\end{align*}
The only remaining task is to consider $\cost^{t}_{\mA}(f)\mid_{\Gamma^{q}_{\text{beyond }T}}$ for each request $q$ such that $t_{\mA}(q)-a(q)>T$. By the $T$-impatient property of our algorithm $\mA$, no new request arrives during $\Gamma^{q}_{\text{beyond }T}$, and there is exactly one request pending throughout this time interval under the execution of $\mA$. As a result, the total number of requests that have appeared is an odd number, which implies that under $\SOL$, there must be at least one request pending throughout this whole time period. Suppose that $\Gamma^{q}_{\text{beyond }T}$ begins at $t_{1}(q)$ and ends at $t_{2}(q)$, then $t_{1}(q)-a(q)=T$ by  definition of $\Gamma^{q}_{\text{beyond }T}$, and so \[\cost^{t}_{\mA}(f)\mid_{\Gamma^{q}_{\text{beyond }T}}=f(t_{2}(q)-a(q))-f(t_{1}(q)-a(q))=f(T+t_{2}(q)-t_{1}(q))-f(T).\] Since there is at least one request in $\SOL$ that is continuously pending during the whole interval $[t_{1}(q), t_{2}(q))$, we have $\SOL^{t}(f)\mid_{\Gamma^{q}_{\text{beyond }T}} \geq f(t_{2}(q)-t_{1}(q))$. By \cref{convex}, we have \[f(T+t_{2}(q)-t_{1}(q))-f(T)\leq 2^{k}(T^{k-1}+1)f(t_{2}(q)-t_{1}(q)),\] and so \[\cost^{t}_{\mA}(f)\mid_{\Gamma^{q}_{\text{beyond }T}}\leq 2^{k}(T^{k-1}+1)\SOL^{t}(f)\mid_{\Gamma^{q}_{\text{beyond }T}}.\]

Combining the bounds on $\cost^{t}_{\mA}(f)\mid_{\Gamma_{\text{within } T}}$ and $\cost^{t}_{\mA}(f)\mid_{\Gamma^{q}_{\text{beyond }T}}$, we have
\begin{align*}
    \cost^{t}_{\mA}(f)&=\cost^{t}_{\mA}(f)\mid_{\Gamma_{\text{within } T}}+\sum_{q\text{: }t_{\mA}(q)-a(q)>T}\cost^{t}_{\mA}(f)\mid_{\Gamma^{q}_{\text{beyond }T}}\\
        &\leq (\frac{1}{k}T^{k-1}+1)\cost^{t}_{\mA}(g)\mid_{\Gamma_{\text{within } T}}+2^{k}(T^{k-1}+1)\sum_{q\text{: }t_{\mA}(q)-a(q)>T}\SOL^{t}(f)\mid_{\Gamma^{q}_{\text{beyond }T}}\\
        &\leq2^{k}(T^{k-1}+1)\cost^{t}_{\mA}(g)+2^{k}(T^{k-1}+1)\sum_{q\text{: }t_{\mA}(q)-a(q)>T}\SOL^{t}(f)\mid_{\Gamma^{q}_{\text{beyond }T}}.
\end{align*}
Thus,
\begin{align*}
    \cost_{\mA}(f)&=\cost^{d}_{\mA}(f)+\cost^{t}_{\mA}(f)\\
        &\leq\cost^{d}_{\mA}(g)+2^{k}(T^{k-1}+1)\cost^{t}_{\mA}(g)+2^{k}(T^{k-1}+1)\sum_{q\text{: }t_{\mA}(q)-a(q)>T}\SOL^{t}(f)\mid_{\Gamma^{q}_{\text{beyond }T}}\\
      \intertext{and by $\gamma$-competitiveness of $\mathcal{A}$ for $g$,}
        &\leq 2^{k}(T^{k-1}+1)\gamma \cdot\SOL(g)+2^{k}(T^{k-1}+1)\SOL(f)\\
        &\leq2^{k+1}(T^{k-1}+1)\gamma\cdot\SOL(f),
\end{align*}
where the last inequality holds because $\SOL(g)\leq \SOL(f)$, due to the fact that the connection costs are the same for every solution under $f$ and $g$, and the delay cost under $f$ is always at least that under $g$ since $f(t)\geq g(t)$ for every $t\geq 0$ (this is where the property that $f'(0)>0$ is vital, otherwise if $f'(0)=0$, then $\lim_{t\rightarrow}\frac{f(t)}{g(t)}=0$ and such an inequality cannot even hold for any constant factor).

As a result, given that $\mathcal{A}$ is $\gamma$-competitive under linear delay $g(t)=t$, the same algorithm is $2^{k+1}(T^{k-1}+1)\gamma$-competitive under convex delay $f(t)=\frac{1}{k}t^{k}+t$.
\end{proof}
With this reduction, now the goal is to find an $T$-impatient algorithm for some constant $T$ independent of $n$ that achieves competitive ratio $\gamma$ under delay function $g(t)=t$, where $\gamma$ is a constant independent of $n$ as well.

\begin{remark}\label{remarknew}
At this stage, it is natural to ask whether the previous algorithm that works well for the linear delay on a tree from \cite{DBLP:conf/soda/AzarCK17} (as uniform metrics are a special class of trees with height $1$ when we view the graph as a star) satisfies the $T$-impatient property. Unfortunately, this is not true and we do have to propose a new algorithm to be applied to convex delays. See Appendix for a brief discussion on how their algorithm runs on a uniform star metric, and why it does not satisfy the $T$-impatient property for any constant $T$ even when $\delta$ is a constant.

\begin{toappendix}
We prove the statement in \cref{remarknew} that the algorithm from~\cite{DBLP:conf/soda/AzarCK17} does not satisfy the desired $T$-impatient property for any constant $T>0$.

While being implemented on a uniform star metric where the distance between each leaf and the root is $\delta$, the algorithm assigns a counter $z_{v}$ to every leaf node $v\in V$ and initially the value of every counter is set to zero. The algorithm always matches two requests at the same location automatically, and there is either one or zero request pending at a given location at every moment, except the exact moment when a match is made. At any moment during the execution of the algorithm, $z_{v}$ is increasing if and only if there is one request at $v$ and $z_{v}<2\delta$. The algorithm matches two requests $r_{1}$ and $r_{2}$ such that $\ell({r_{1}})\neq \ell({r_{2}})$ when $z_{\ell({r_{1}})}=z_{\ell({r_{2}})}=2\delta$, and after this match both $z_{\ell({r_{1}})}$ and $z_{\ell({r_{2}})}$ are reset to zero.

Consider the following instance: suppose that the $n$ points in the uniform metric are $v_{1},\ldots,v_{n}$ respectively, then there is one request at $v_{1}$, one request at $v_{n}$, and two requests at each of $v_{2},\ldots,v_{n-1}$. We use $r_{i}$ and $r'_{i}$ to denote the two requests that arrive at $v_{i}$ for $2\leq i\leq n-1$, and $r_{1}$ and $r_{n}$ to denote the requests at $v_{1}$ and $v_{n}$ respectively. It holds that $a(r_{1})=0$ and $a(r_{n})=n-1$. For every $2\leq i\leq n-1$, $a(r_{i})=i-1, a(r_{i})=i-\varepsilon$ for some $\varepsilon>0$. If we run the aforementioned algorithm on this instance, then it matches $r_{i}$ with $r_{i}'$ for every $2\leq i\leq n-1$ at time $i-\varepsilon$, and matches $r_{1}$ with $r_{n}$ at time $n$. As a result, its match of request $r_{1}$ cannot satisfy the $T$-impatient property for any constant $T$ independent of $n$.
\end{toappendix}
\end{remark}

\subsection{Constructing a $T$-Impatient Algorithm}

In this section, we propose a $T$-impatient algorithm and prove that it is competitive under linear delay $g(t)=t$, and thus a competitive algorithm under $f(t)=\frac{1}{k}t^k+t$ according to \cref{impatientreduction}.

\subparagraph{High-level description.}
Recall that a uniform metric can be regarded as a star where each leaf corresponds to a point in the metric space, and the distance between each leaf and the root is $\delta$. We assign a counter to every vertex of the star graph, i.e. all leaves and the root.

During the execution of our algorithm (\cref{alg}), each request can be in one of  three states: $\LEAF$, $\ROOT$ and $\READY$. A request $r$ is in $\LEAF$ if and only if it is contributing to the increase of the leaf counter $z_{\ell(r)}$ ($\ell(r)$ is the location of $r$), in $\ROOT$ if it is neither in $\LEAF$ nor in $\READY$. If we want to view the dynamic of a request moving between states, then after a request $r$ arrives, initially it is in $\LEAF$ until $z_{\ell(r)}$ is saturated (if $z_{\ell(r)}$ is already saturated at arrival time $a(r)$, we can still say that the time length when $r$ is in $\LEAF$ is zero). Once $z_{\ell(r)}$ is saturated, $r$ transits to $\ROOT$, and will only transit from $\ROOT$ to $\READY$ if the condition of either Case 1 or Case 2 is satisfied. The state transition diagram $\LEAF\rightarrow\ROOT\rightarrow \READY$ is acyclic, which means that once a request $r$ transits to the next state, there is no way back to the previous state. Based on the fact that the counter $z_{\ell(r)}$ is saturated is equivalent to that $r$ is in $\ROOT$ or $\READY$ for a pending request $r$, the conditions to trigger a match between two requests at different points can also be expressed as: either both are in $\ROOT$, or one is in $\READY$. %
\begin{algorithm}[hbt!]
\caption{A Deterministic Algorithm for MPMD on $n$-point uniform metrics with linear delay $g(t)=t$}\label{alg}
\begin{algorithmic}
\Require For each leaf vertex $v$, $z_{v}\gets0$. For the root vertex $u$, $z_{u}\gets0$.
\Ensure
\State While there is a request $r$ pending at $\ell(r)$:
\begin{enumerate}
    \item if $z_{\ell(r)}<\delta$, then $r$ contributes to the increase of the counter $z_{\ell(r)}$ with unit rate and $r$ is in the state $\LEAF$, otherwise if $z_{\ell(r)}=\delta$ then $r$ is in $\ROOT$ until it transits to $\READY$;
    \item if $z_{\ell(r)}=\delta$ and $r$ is the only pending request, then then $r$ contributes to the increase of the root counter $z_{u}$;
    \item\label{Case 1} if the single request $r$ has contributed the total value $\delta$ to counter $z_{u}$, then $r$ transits from $\ROOT$ to $\READY$ (Case 1);
    \item\label{case2} if the total amount of time when $r$ is in $\ROOT$ while not contributing to the increase of $z_{u}$ accumulates to $2\delta$, then $r$ transits from $\ROOT$ to $\READY$ (Case 2).
\end{enumerate}

\MATCH
\State If there are two pending requests $r_{1}$ and $r_{2}$ such that $\ell(r_{1})=\ell(r_{2})$, match $r_{1}$ and $r_{2}$.
\State If there are two pending requests $r_{1}$ and $r_{2}$ such that $z_{\ell(r_{1})}=z_{\ell(r_{2})}=\delta$, match $r_{1}$ and $r_{2}$ and reset both $z_{\ell(r_{1})}$ and $z_{\ell(r_{2})}$ to $0$.
\State If there are two pending requests $r_{1}$ and $r_{2}$ such that $r_{1}$ is in $\READY$, match $r_{1}$ and $r_{2}$ and only reset $z_{\ell(r_{1})}$ to $0$.
\PRIORITY If multiple matches can occur at the same time, then the request that transited to state $\ROOT$ at the earliest time (as long as it exists) is prioritized.
\end{algorithmic}
\end{algorithm}

Note that the priority rule guarantees that if one request is in $\READY$, then it should be matched immediately as long as there is another pending request, which is vital in guaranteeing the $T$-impatient property.

\begin{claim}
    $\ALG$ is $4\delta$-impatient.
\end{claim}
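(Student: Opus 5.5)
We have to show: for any request $r$ with $t_{\ALG}(r)-a(r)>4\delta$, throughout $[a(r)+4\delta,t_{\ALG}(r))$ the request $r$ is the only pending request. The argument has two steps. First, every request reaches $\READY$ within $4\delta$ of arrival. Second, by the priority rule a $\READY$ request is matched as soon as any other request is pending or arrives.

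\textbf{Step 1: time spent in $\LEAF$.} While $r$ is pending and in $\LEAF$, the leaf counter $z_{\ell(r)}$ grows at unit rate. By the first matching rule, $r$ is the only pending request at $\ell(r)$ (co-located requests are matched immediately). The counter is reset only when a request at $\ell(r)$ is matched. So while $r$ is pending, the counter is never reset and increases steadily until it reaches $\delta$, meaning $r$ spends at most $\delta$ time in $\LEAF$.

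Once $z_{\ell(r)}=\delta$, the request is in $\ROOT$. Two invariants need checking: it never returns to $\LEAF$ (the counter is not reset while $r$ is pending), and it stays in $\ROOT$ until it becomes $\READY$ or is matched.

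\textbf{Step 2: time spent in $\ROOT$.} Partition the time $r$ spends in $\ROOT$ into two kinds of moments:
\begin{itemize}
\item moments when $r$ is the unique pending request, so it contributes to $z_u$;
\item moments when it is not, so its Case~2 clock runs.
\end{itemize}
By Case~1, the first kind accumulates at most $\delta$ before $r$ turns $\READY$. By Case~2, the second kind accumulates at most $2\delta$. Hence $r$ is in $\READY$ (or already matched) by time $a(r)+\delta+\delta+2\delta=a(r)+4\delta$.

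The delicate point is Case~1's phrase ``the single request $r$ has contributed the total value $\delta$''. I will read it as the cumulative contribution of $r$ itself, not the root counter's global value, which may be shared across requests. If it is instead interpreted via the global $z_u$, I would argue that $z_u$ only increases while $r$ is unique, so $r$'s own contribution still bounds the time. Either reading gives the $\delta$ bound on the first kind of moment.

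\textbf{Step 3: a $\READY$ request is matched immediately.} Suppose $r$ is $\READY$ at some time $t\ge a(r)+4\delta$ and another request $r'$ is pending at $t$. The third matching condition is then triggered, so a match occurs immediately. We must ensure $r$ itself gets matched, rather than $r'$ being matched to some third request.

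This is the main obstacle. The priority rule favors the request that entered $\ROOT$ earliest, and I plan to show that $r$ is that request, or at least that $r$ is never left unmatched while another pending request exists. For the first approach, I would argue by minimality: consider any other pending request that entered $\ROOT$ earlier than $r$. It has been in $\ROOT$ at least as long, so by Step~2 it is also $\READY$, and it would have been matched at the earlier moment when $r$ itself was pending alongside it. Failing that, I would use a parity argument: since matches remove requests in pairs, if $r$ remains pending after all immediate matches at time $t$, there can be no other pending request left at all. Any leftover $r'$ would trigger the $\READY$ rule with $r$, contradicting that the matching process has stabilized.

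Either way, at every moment of $[a(r)+4\delta,t_{\ALG}(r))$, $r$ is the only pending request, and no arrival occurs during that interval, because an arrival would be matched with $r$ at once. This is exactly $4\delta$-impatience.
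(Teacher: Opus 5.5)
Your proposal is correct and follows the paper's proof. The request $r$ spends at most $\delta$ in $\LEAF$ and at most $\delta+2\delta$ in $\ROOT$ (Cases~1 and~2), so it is $\READY$ (or already matched) by $a(r)+4\delta$; after that, the priority rule forces it to be matched as soon as any other request is present.

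The paper takes this last step directly from the note following \cref{alg}, whereas your Step~3 justifies it. Your minimality argument is sound: since $t\ge a(r)+4\delta\ge t_1(r)+3\delta$, any pending request that entered $\ROOT$ before $r$ would have become $\READY$ strictly earlier while $r$ was pending, and so would have been matched then. The parity fallback on its own would not exclude two simultaneous arrivals pairing off with each other, so it is the priority argument that yields ``no new requests arrive''.
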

\begin{proof}
It suffices to show that a request $r$ with arrival time $a(r)$ transits to state $\READY$ no later than $a(r)+4\delta$. It takes time at most $\delta$ to transit from $\LEAF$ to $\ROOT$ and at most $3\delta$ (the sum of Cases 1 and 2) to transit from $\ROOT$ to $\READY$ after the arrival of a request if it stays unmatched.
\end{proof}

\subparagraph{Overview of analysis.} 
Our analysis is similar to that in \cite{DBLP:conf/soda/AzarCK17,DBLP:conf/approx/AshlagiACCGKMWW17} in the sense that we analyze counters associated with different points separately, where the time axis can be divided into phases for each counter according to when it is saturated, and we compare the algorithm with another given solution based on \emph{alignment status} (to be defined later). The biggest difference is that their algorithms only match two requests on a uniform metric when both counters associated with these two requests are saturated, while our algorithm has another case that a request in $\READY$ is matched with another request immediately when it appears without waiting for another request's counter to also be saturated. This is the key property that makes \cref{alg} satisfy the desired $T$-impatient property. To deal with this case for the cost analysis, we prove that the total cost incurred by matches of this case provides a lower bound of the optimal cost as well, hence proving that our new algorithm is also competitive.

Throughout the analysis, we use $\ALG$ to denote both the algorithm that we propose and its cost, and $\SOL$ to denote a fixed offline solution and its cost that we are comparing $\ALG$ with, and the costs are always incurred under delay function $g$. The superscripts $d$ and $t$ that represent connection and delay costs respectively work in the same way as before.

We first note that our algorithm is smart (defined in \cref{section 3.1}) as well, which will be used for the alignment status analysis that we are going to introduce.
\begin{observation}\label{smart-con}
 There is at most one request pending at the same point in the metric space at every moment except when a match occurs.
\end{observation}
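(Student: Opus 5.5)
We would prove \cref{smart-con} by induction on the sequence of events in the execution of \cref{alg}, showing that the invariant ``every point $v\in V$ holds at most one pending request'' holds at all times except at the instants where a match is being performed. Since the algorithm evolves continuously between discrete events, the invariant can only be broken at an event. The events are: a request arrives, a request changes state ($\LEAF\to\ROOT$ or $\ROOT\to\READY$), or a match is made. State changes and matches never create new pending requests. So it suffices to check what happens when a request arrives.

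Suppose the invariant holds just before time $t$, and at time $t$ one or more requests arrive at a point $v$. Then $v$ may temporarily hold two (or, with simultaneous arrivals, more) pending requests. The first rule of the matching conditions, which matches co-located requests, is triggered at this same instant. We would pair the co-located requests off one pair at a time until at most one request remains at $v$. If several points receive arrivals simultaneously, we apply the same argument to each point independently. The only remaining point to verify is that the priority rule cannot postpone such a same-location match to a later moment. All matches that are triggered at a given instant are executed at that instant, and the priority rule only decides their order. Moreover, any match with a request at another point only lowers the number of pending requests at $v$. Hence, once the instant has passed, every point again carries at most one pending request.

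The main subtlety, rather than a real obstacle, is exactly this interplay with the priority rule. We must argue that the priority rule selects among simultaneous matches but never leaves a triggered co-location match pending over a positive-length interval. With that settled, the invariant holds at every moment outside match instants. This is the same smart property as in \cref{section 3.1}, now realised directly by the algorithm rather than through the transformation of \cref{smart}.
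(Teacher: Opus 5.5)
Your proof is correct and takes essentially the same approach as the paper. The paper gives no separate proof: it treats the observation as immediate from the first matching rule of \cref{alg}, under which co-located pending requests are matched at once. Your event-by-event induction just makes that explicit, including the check that the priority rule cannot leave a co-located pair pending over a positive-length interval.
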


We can also assume that \cref{smart-con} holds for every offline solution $\SOL$, because linear delay is a special case of the size-based delay and \cref{smart} holds for this case as well. Our delay function is continuous, which means that the instantaneous delay increment at the discrete moments when a match occurs is negligible. As a result, if we want to compare the status of pending requests under $\ALG$ with another solution $\SOL$, at every moment there are four cases at every point $v\in V$: there is one request pending at $v$ under both $\ALG$ and $\SOL$, there is no request pending at $v$ under both $\ALG$ and $\SOL$, there is one request pending at $v$ under $\ALG$ but no request pending at $v$ under $\SOL$, there is one request pending at $v$ under $\SOL$ but no request pending at $v$ under $\ALG$. We say that a point $v\in V$ is \textit{aligned} if it is one of the first two cases among the four, otherwise $v$ is \textit{misaligned}. 
\begin{observation}\label{align}
    The alignment status of a point $v\in V$ changes if and only if exactly one of $\ALG$ and $\SOL$ matches $v$ with another point.
\end{observation}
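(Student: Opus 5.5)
The statement is \cref{align}: the alignment status of $v$ changes if and only if exactly one of $\ALG$ and $\SOL$ matches $v$ with another point. I will prove it by tracking, for each point $v$, the parity of the number of pending requests at $v$ under each solution. Both $\ALG$ and $\SOL$ may be assumed smart: for $\ALG$ this is \cref{smart-con}, and for $\SOL$ it follows from \cref{smart}, since linear delay is size-based. So at every moment outside the instants of matches, the number of requests pending at $v$ under a solution equals the parity of
\[
(\text{number of arrivals at } v) + (\text{number of times } v \text{ has been matched with another point}),
\]
exactly as in \cref{matchandarrival}, i.e.\ the $v$-entry of $R\oplus M$. Write $P_\ALG(v)$ and $P_\SOL(v)$ for these Boolean values. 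By definition, $v$ is aligned exactly when $P_\ALG(v)\oplus P_\SOL(v)=0$.

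The key step is to note that the arrival term $R(v)$ is common to both solutions and cancels. Hence
\[
P_\ALG(v)\oplus P_\SOL(v)=M^{\ALG}(v)\oplus M^{\SOL}(v),
\]
and alignment depends only on the two match-parity vectors at $v$.

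Now consider each type of event and its effect on $M^{\ALG}(v)\oplus M^{\SOL}(v)$.
\begin{itemize}
\item A new arrival at $v$ flips both $P$'s, or, once the automatic co-located match is performed, leaves both unchanged. Either way their XOR is unchanged.
\item A co-located match at $v$ does not change $M$ by our convention, since only matches between different points are counted.
\item A match of $v$ with another point flips $M(v)$ for the solution that performs it.
\end{itemize}
So the XOR flips precisely when exactly one solution matches $v$ with another point at that moment. If both do so simultaneously, the XOR is flipped twice and stays unchanged.

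The only delicate point is the bookkeeping at instants of matches and simultaneous arrivals, where momentarily two requests may be pending at $v$. I will handle this the way \cref{matchandarrival} does: statuses are only compared at non-match moments, and co-located pairs are treated as cancelled immediately. Since arrivals affect both solutions identically, this convention does not distort the comparison, and the observation follows.
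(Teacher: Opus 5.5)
Your proof is correct and is essentially the argument the paper leaves implicit: with both solutions smart, each point carries a $0/1$ pending indicator, arrivals at $v$ act identically on $\ALG$ and $\SOL$, and only a match of $v$ with a different point toggles one side. Your $R\oplus M$ encoding, the paper's own device from the MPMD-Size reduction, makes this precise by cancelling the common arrival term $R(v)$. One small slip: an arrival at $v$ always flips each solution's indicator, including the $1\to 0$ case via the co-located match, so it never ``leaves both unchanged''; this is harmless, because the cancellation of $R(v)$ already shows the XOR is unaffected.
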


We say that a request $r$ \textit{carries} $v$ if $\ALG$ makes a match between $r$ that is in $\READY$ and a request at $v\neq\ell(r)$. For such a request $r$, let $t_{1}(r)$ be the moment when it starts to be in $\ROOT$ and $t_{2}(r)$ be the moment when it starts to be in $\READY$.

\begin{claim}\label{nonoverlapping}
    Suppose that two different requests $r$ and $r'$ both have reached $\READY$ status under \cref{alg}, then $[t_{1}(r),t_{2}(r))\cap[t_{1}(r'),t_{2}(r'))=\varnothing$.
\end{claim}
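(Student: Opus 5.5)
We argue by contradiction: suppose two distinct requests $r$ and $r'$ that both reach $\READY$ have overlapping intervals $[t_{1}(r),t_{2}(r))$ and $[t_{1}(r'),t_{2}(r'))$. Without loss of generality $t_{1}(r)\le t_{1}(r')$, so $t_{1}(r')\in[t_{1}(r),t_{2}(r))$. At time $t_{1}(r')$ both $r$ and $r'$ are pending and in state $\ROOT$. Indeed, $r$ entered $\ROOT$ at $t_{1}(r)$, has not yet become $\READY$, and being in $\ROOT$ means it has not been matched. Also $r'$ has just entered $\ROOT$.

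The first step is to check that $\ell(r)\neq\ell(r')$. If the two locations were equal, then by \cref{smart-con} and the first matching condition the two requests would be matched together immediately. That cannot happen here, because we assumed both eventually reach $\READY$, and a matched request is no longer pending. So the locations differ. Next, a request in $\ROOT$ or $\READY$ at location $v$ satisfies $z_{v}=\delta$. Hence $z_{\ell(r)}=z_{\ell(r')}=\delta$ at time $t_{1}(r')$, and the second matching rule triggers at once: two pending requests whose counters are both saturated are matched. This rule could fail to match $r$ with $r'$ only if one of them is matched to some other request at that same moment, as determined by the priority rule.

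The main obstacle is handling these simultaneous-match and priority subtleties. Under the priority rule, the request that entered $\ROOT$ earliest, which is $r$ or some even earlier request, is served first. I would examine the possibilities case by case. In each case the outcome is that $r$ or $r'$ is matched at time $t_{1}(r')$, before $t_{2}(r)$ is reached. This contradicts the assumption that both requests are still pending and later reach $\READY$. Concretely, if $r$ is matched to some third request $r''$, then $r$ never reaches $\READY$ after $t_{1}(r')$, which contradicts the assumption on $r$ and the fact that $t_{1}(r')<t_{2}(r)$. If $r'$ is matched to someone else, the contradiction arises in the same way for $r'$.

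One subtlety needs extra care. Could $r$ remain in $\ROOT$ while another request at $\ell(r')$ is matched with it, so that $r'$ is the request that survives? This case is excluded because any match involving $r$ removes $r$. I would also remark, as a sanity check, that while some request is in $\ROOT$, any other request reaching saturation is immediately matched. So at most one pending request is in $\ROOT$ at any time, and this gives the disjointness directly. The half-open endpoints take care of the boundary case $t_{2}(r)=t_{1}(r')$.
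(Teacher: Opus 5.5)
Your proposal is correct and follows the same route as the paper. You order the requests so that $t_{1}(r)\le t_{1}(r')$ and note that at $t_{1}(r')$ both are pending in $\ROOT$ with saturated counters. The matching rule then forces one of them to be matched at that instant, contradicting that both later reach $\READY$; your extra remarks (distinct locations, the priority rule, half-open endpoints, at most one pending request in $\ROOT$) just make explicit what the paper's short proof leaves implicit.
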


\begin{proof}
    Without loss of generality, assume that $t_{1}(r)\leq t_{1}(r')$. At the time $t_{1}(r')$ when $r'$ reaches $\ROOT$, if $t_{2}(r)>t_{1}(r')$, i.e., $r$ is still unmatched in $\ROOT$ status, then $r$ and $r'$ should have been matched immediately at $t_1(r')$ if no other request matches either of them, a contradiction to that $r, r'$ both have reached $\READY$ at some later time.
\end{proof}
\begin{lemma}\label{bring}
    Suppose that a request $r$ carries some point, then $\SOL\mid_{[t_{1}(r), t_{2}(r))}\geq \delta$.
\end{lemma}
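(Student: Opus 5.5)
We need to show that during $[t_1(r),t_2(r))$ the solution $\SOL$ pays at least $\delta$, counting both its delay cost in the window and its connection cost for matches made in the window. Since $r$ carries some point, $r$ reached $\READY$ at $t_2(r)$. By the definition of \cref{alg}, this happened through Case 1 or Case 2, so the argument splits along these two cases.

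\textbf{Case 2.} Here $r$ has spent total time $2\delta$ in $\ROOT$ while not contributing to $z_u$. This can only happen at moments when $r$ is pending and is \emph{not} the only pending request, so during that time there is another pending request $r''$ under $\ALG$. We then need to explain why $r$ and $r''$ were not matched. Any other pending request must be at a different point, by \cref{smart-con}. It must also be in $\LEAF$: if it were in $\ROOT$, it would have been matched with $r$ at once, because both leaf counters are saturated. So throughout this time of length $2\delta$ there are at least two pending requests under $\ALG$, at distinct points. I then want to compare with $\SOL$ through alignment. At each such moment, either $\SOL$ also has a pending request (and so pays delay at rate at least $1$), or the points carrying $\ALG$'s pending requests are misaligned. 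In the latter case $\SOL$ has already matched those requests, at connection cost $2\delta$ per match. The clean lower bound I will aim for is a parity argument: the number of requests that have arrived and are unmatched in $\ALG$ at each moment is at least $2$. If $\SOL$ has zero pending requests during some subinterval, then $\SOL$ must have matched the requests $\ALG$ holds. I will pick the earliest of those requests and argue that either $\SOL$ matched it inside the window (cost $2\delta\ge\delta$), or before the window. In the second case, $\SOL$ must still have some pending request in the window, because arrivals in $\ALG$'s pending set during the window force either pending time or matches in $\SOL$.

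\textbf{Case 1.} Here $r$ alone contributed $\delta$ to $z_u$. So for total time $\delta$ inside the window, $r$ is the unique pending request in the whole metric, and hence the total number of arrived requests is odd at each of those moments. Therefore $\SOL$ has at least one pending request at every such moment. This gives $\SOL$ delay at least $\delta$ in the window.

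\textbf{Main obstacle.} The hard step is Case 2: turning ``$\ALG$ has at least two pending requests, and the others are in $\LEAF$'' into $\SOL$ paying at least $\delta$ within the window. My first attempt is to note that each other request $r''$ is in $\LEAF$ only for at most $\delta$ minus its counter value, and then charge to $\SOL$ via misalignment of $\ell(r'')$. Concretely, I split $[t_1(r),t_2(r))$ into the times when $\SOL$ has a pending request (which pays delay) and the times when it has none. For the latter, every pending point of $\ALG$ is misaligned. By \cref{align}, each such misalignment must have been created by a $\SOL$ match or by an $\ALG$ match. If it was created by an $\ALG$ match inside the window, then $r$ would have been the priority partner, which is a contradiction. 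Failing that, I'll bound by $\min(2\delta,\text{time})$, which gives at least $\delta$.
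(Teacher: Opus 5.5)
Your Case 1 is correct and is the same as the paper's argument. Case 2, however, has a genuine gap. It sits exactly where cost must be produced \emph{inside} $[t_1(r),t_2(r))$.

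Knowing that a point where $\ALG$ holds a request is misaligned gives nothing. It only tells you that $\SOL$ once matched there, and that match and its $2\delta$ connection cost may lie entirely before $t_1(r)$. Your assertion that ``arrivals in $\ALG$'s pending set during the window force either pending time or matches in $\SOL$'' is unsupported, since the window may contain no arrivals at all. The step ``bound by $\min(2\delta,\text{time})$'' is not an argument: nothing you write limits the total length of the moments at which $\SOL$ has no pending request. Your remark that $r''$ stays in $\LEAF$ for at most $\delta$ minus its counter value is the seed of the right idea. But it is a per-request statement, and you never turn it into a bound.

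The missing argument has two ingredients.
\begin{itemize}
\item[(i)] If $\SOL$ makes a cross-point match inside the window, it pays $2\delta$ there. Otherwise, since $\ALG$ makes none either (your priority remark), the alignment status of every point is frozen for the whole window.
\item[(ii)] For each $v\neq\ell(r)$, every $\ALG$ request at $v$ during the window is in $\LEAF$, so $\ALG$ has a request at $v$ exactly while $z_v$ grows. The counter $z_v$ is never reset inside the window, and its saturation would immediately trigger a match with $r$. So this time totals some $x_v\le\delta$, while $\mathbf{t}=t_2(r)-t_1(r)\ge 2\delta$.
\end{itemize}

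The paper then charges point by point. An aligned $v$ forces a $\SOL$ request at $v$ for time $x_v$, and a misaligned one forces it for time $\mathbf{t}-x_v\ge x_v$. Finally, $\sum_{v\neq\ell(r)}x_v\ge 2\delta$, because whenever $r$ is in $\ROOT$ without feeding $z_u$, some other leaf counter is growing.

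Your own split (moments when $\SOL$ has a pending request versus moments when it has none) can also be completed with (i) and (ii) plus a parity argument:
\begin{itemize}
\item The frozen misaligned set $S$ has even size, because both solutions are smart and their numbers of pending requests have the parity of the number of arrivals.
\item At any moment when $\SOL$ has nothing pending, $\ALG$'s pending set is exactly $S$, and it contains $\ell(r)$.
\item Hence $S$ contains a fixed $v\neq\ell(r)$ at which $\ALG$ has a request at all such moments. Those moments therefore total at most $x_v\le\delta$.
\item This leaves at least $\mathbf{t}-\delta\ge\delta$ of $\SOL$ delay in the window.
\end{itemize}
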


\begin{proof}
    We discuss different cases by how $r$ transits from states ROOT to READY. By the design of $\ALG$, now we cover cases 1 and 2 separately.

 \subparagraph{Case 1.}Every moment when $r$ is contributing to the increase of the root counter $z_{u}$, $r$ is the only request that is pending, which means that the total number of arrivals of requests in the input sequence so far is an odd number. Therefore, under any solution $\SOL$, there is at least one request pending and hence the delay is increasing with rate at least $1$ at every moment when $z_{u}$ is increasing. By the condition of Case 1 that $r$ has contributed $\delta$ to the increase of $z_{u}$, the delay cost of $\SOL$ during this period must be at least $\delta$ as well.
 \subparagraph{Case 2.} It is easy to see that $\ALG$ does not match any two requests at different points during $(t_{1}(r),t_{2}(r))$, since every such match must involve another request that has just reached state $\ROOT$, while the priority rule would have matched $r$ instead. If $\SOL$ makes a match between two requests at different points, then cost of at least $2\delta$ is incurred.

Otherwise, by \cref{align}, the alignment of every point stays unchanged throughout $(t_{1}(r),t_{2}(r))$. Let $\mathbf{t}=t_{2}(r)-t_{1}(r)$ denote the length of this interval, then $\mathbf{t}\geq 2\delta$ by the condition of Case 2 that $r$ transits into $\READY$. For all other points in the metric than $\ell(r)$ denoted by $v_{1},\ldots,v_{n-1}$ respectively, let $x_{1},\ldots,x_{n-1}$ denote the increased counter value of $v_{1},\ldots,v_{n-1}$ during this period respectively, then $x_{i}\leq \delta$ for every $1\leq i\leq n-1$, otherwise some counter $z_{v_{i}}$ should have been saturated since no counter is reset in this period and $r$ should have been matched according to the priority rule. Moreover, $x_{1}+\ldots+x_{n-1}\geq 2\delta$, since the total time when there is at least one counter increasing among $v_{1},\ldots,v_{n-1}$ has reached $2\delta$ for $r$ to transit into $\READY$. By the definition of the alignment status, for every $v_{i}$, either it is aligned and therefore there is a request pending at it throughout the total time $x_{i}$ when $z_{v_{i}}$ is increasing, or it is misaligned and therefore there is a request pending at it throughout the total time $\mathbf{t}-x_{i}$ when $z_{v_{i}}$ is not increasing. In either way, the delay cost incurred by a request pending at $v_{i}$ is at least $\min\{x_{i},\mathbf{t}-x_{i}\}$, which is equal to $x_{i}$ by the fact that $x_{i}\leq \delta$ and $\mathbf{t}\geq 2\delta$. As a result, the total delay time incurred at all points in $v_{1},\ldots,v_{n-1}$ is at least $x_{1}+\ldots+x_{n-1}\geq 2\delta$.

    Therefore, the cost incurred by $\SOL$ is at least $\delta$ during $[t_{1}(r), t_{2}(r))$ for both cases.
\end{proof}

The following corollary is a direct result of combining \cref{nonoverlapping} and \cref{bring}.
\begin{corollary}\label{bringup}
    Let $R=\{r\mid r\text{ carries some point during the execution of }\ALG \}$, then
    \begin{equation*}
        \SOL\geq\sum_{r\in R}\delta|R|.
    \end{equation*}
\end{corollary}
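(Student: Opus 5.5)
The plan is to add up the per-request lower bounds of \cref{bring} over all carrying requests. We must make sure no part of $\SOL$'s cost is counted for more than one of them. (The statement as typed, $\sum_{r\in R}\delta|R|$, is presumably a typo for $\sum_{r\in R}\delta=\delta|R|$; that is the bound I will prove.)

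\textbf{Step 1: the windows are disjoint.} For each $r\in R$, the request $r$ was matched while in $\READY$. So it reached $\READY$, and the interval $I_r:=[t_1(r),t_2(r))$ is well defined. By \cref{nonoverlapping}, the intervals $\{I_r\}_{r\in R}$ are pairwise disjoint.

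\textbf{Step 2: localise $\SOL$'s cost in time.} Write $\SOL\mid_I$ for the part of $\SOL$'s cost attributed to the time window $I$. Delay cost is attributed to the moments at which the delay accrues. Connection cost is attributed to the moment the match is made. With this convention $\SOL\mid_I$ is additive over disjoint windows, so
\[
\SOL\;\geq\;\sum_{r\in R}\SOL\mid_{I_r}.
\]

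\textbf{Step 3: check that \cref{bring} uses this convention.} This is the step I expect to need the most care. In Case 1 the bound comes from delay accrued at moments inside $I_r$, so it is local to $I_r$. In Case 2 there are two subcases. Either $\SOL$ makes a cross-point match whose cost $2\delta$ is incurred inside $I_r$. Or the bound comes from delay incurred at the points $v_i$ during $I_r$, which is again local. A match occurring exactly at a boundary point is not double counted, because the intervals are half-open. So each bound charges only cost that $\SOL$ incurs within $I_r$.

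\textbf{Step 4: conclude.} Applying \cref{bring} gives $\SOL\mid_{I_r}\geq\delta$ for each $r\in R$. Combining this with the inequality of Step 2 yields $\SOL\geq\sum_{r\in R}\delta=\delta|R|$.
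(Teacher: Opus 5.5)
Your proposal is correct and is exactly the paper's argument: the windows $[t_1(r),t_2(r))$ are pairwise disjoint by \cref{nonoverlapping}, each contains at least $\delta$ of $\SOL$'s cost by \cref{bring}, and summing gives $\SOL\geq\delta|R|$ (you are right that the displayed $\sum_{r\in R}\delta|R|$ is a typo). Your Steps 2--3 make explicit two points the paper leaves implicit when it calls this a direct corollary: under linear delay, $\SOL$'s cost is additive over disjoint time windows, and \cref{bring} only charges cost incurred inside $[t_1(r),t_2(r))$.
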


Now we start to analyze the cost of $\ALG$ under the linear delay $g(t)=t$ based on the increments of all leaf counters $z_{v}$'s and $z_{u}$. For each leaf counter $z_{v}$, let $y_{v}$ denote the total increments of the value of $z_{v}$ throughout the execution of $\ALG$ without any reset. Similarly, let $y_{u}$ denote the final value of the root counter $z_{u}$ since the value of $z_{u}$ is never reset to zero.

\begin{lemma}\label{upperbound1}
    $\ALG^{t}\leq 2\sum_{\text{leaf }v}y_{v}+y_{u}$.
\end{lemma}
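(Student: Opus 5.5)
The plan is a charging argument. Under linear delay, $\ALG^t$ is the integral over time of the number of pending requests. At each moment we charge this rate to the counters that are currently increasing. By \cref{smart-con}, each point holds at most one pending request, so every pending request is, at each moment, in exactly one of the states $\LEAF$, $\ROOT$, $\READY$. We classify moments by the states of the pending requests.

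\textbf{Pending requests in $\LEAF$.} A request in $\LEAF$ increases $z_{\ell(r)}$ at unit rate, and it is the only pending request at $\ell(r)$. Its delay during this time therefore equals the increment of $z_{\ell(r)}$, and over the whole execution these $\LEAF$ delays sum to at most $\sum_v y_v$.

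\textbf{Pending requests in $\ROOT$ or $\READY$.} By the second match condition, at most one pending request can be in $\ROOT$ at any moment. By the priority rule, a $\READY$ request is matched immediately whenever another request is pending. So at any moment there is at most one pending request outside $\LEAF$; call it $r$. The time $r$ spends outside $\LEAF$ splits into two kinds of moments.

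\emph{Case (a): $r$ is the only pending request.} Then $r$ is in $\ROOT$ and increases $z_u$ at unit rate. One also has to rule out $r$ sitting in $\READY$ with nothing else pending, and I expect this is the delicate point. Such a request only becomes $\READY$ after contributing $\delta$ to $z_u$ (Case 1), and any further time it waits alone also increases $z_u$ under step 2. I will read step 2 as applying to a lone request whose counter is saturated, so that it covers $\READY$ requests too. Under that reading these moments are charged to $y_u$.

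\emph{Case (b): some other request is pending.} If $r$ is $\READY$, it is matched instantly, so this contributes zero time. Otherwise $r$ is in $\ROOT$ and the other pending requests are all in $\LEAF$, so at least one leaf counter is increasing at unit rate. The delay rate of $r$ is then at most the total rate of increase of the leaf counters.

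\textbf{Combining.} Integrating these bounds gives
\[
\ALG^t \le \sum_v y_v + \Big(\sum_v y_v + y_u\Big) = 2\sum_{\text{leaf } v} y_v + y_u .
\]
The main obstacle is the careful state accounting: proving that at most one request is outside $\LEAF$ at a time, and that whenever such a request waits, some counter increases at unit rate. Both facts follow from the match conditions together with the priority rule, and I would verify each by contradiction with those rules.
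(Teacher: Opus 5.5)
Your proposal is correct and is essentially the paper's argument. The paper splits time by whether $z_u$ is increasing (then exactly one request is pending and its delay is charged to $y_u$); otherwise it maps each pending request to an increasing leaf counter with at most two preimages, namely the $\LEAF$ request at that point and the unique $\ROOT$ request, which is your state-based charging reorganised. Your reading of step 2 as covering a lone $\READY$ request is the literal text of \cref{alg}, and it is exactly what the paper relies on implicitly when it asserts that no request is in $\READY$ while $z_u$ is not increasing.
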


\begin{proof}
    Whenever $z_{u}$ is increasing, it is the only counter that is increasing and there is exactly one request that is pending, so the total delay cost incurred while $z_{u}$ is increasing is equal to the final value of $y_{u}$.

    Whenever $z_{u}$ is not increasing, let $P_{t}$ be the set of pending requests at every such moment $t$ and $Z_t$ be the set of leaf counters that are increasing. We want to build a mapping
    $\mathcal{H}_{t}: P_{t}\rightarrow Z_t$, such that $|\mathcal{H}_{t}^{-1}(z)|\leq 2$ for every $z\in Z_t$, i.e. every unit of total leaf counter value increment can be charged to the delay cost of no more than two units. When $z_{u}$ is not increasing, no request is in state $\READY$. 
    For every request $r\in P_{t}$, if $r$ is in $\LEAF$, then $z_{\ell(r)}$ is increasing and define $\mathcal{H}_{t}(r)$ to be $z_{\ell(r)}$; if $r$ is in $\ROOT$, then there exists some other request $r'$ with $\ell(r')\neq\ell(r)$ in state $\LEAF$ whose counter $z_{\ell(r')}$ is increasing, and in this case we define $\mathcal{H}_{t}(r)$ to be $z_{\ell(r')}$. To see why $|\mathcal{H}_{t}^{-1}(z)|\leq 2$ for every $z\in Z_t$, there is at most one pending $r$ at the same location as $z$ by \cref{smart-con}, and there is at most one request that is in $\ROOT$ as well, otherwise two requests in $\ROOT$ would have been matched by $\ALG$.

    The inequality follows by combining the aforementioned arguments for two parts of the time axis together.
\end{proof}

\begin{lemma}\label{upperbound2}
    $\ALG^d\leq 2\sum_{\text{leaf }v}y_{v}$.
\end{lemma}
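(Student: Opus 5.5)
We prove $\ALG^d\leq 2\sum_{\text{leaf }v}y_{v}$ by charging each match made by \cref{alg} between requests at different points (cost $2\delta$) to counter increments. Matches between co-located requests cost nothing, so we ignore them.

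There are two kinds of matches with positive connection cost. In the first kind, both requests $r_1,r_2$ have saturated leaf counters, $z_{\ell(r_1)}=z_{\ell(r_2)}=\delta$, and both counters are reset. The resets consume $2\delta$ of accumulated increment, which covers the cost $2\delta$. In the second kind, a request $r_1$ in $\READY$ is matched with some $r_2$, and only $z_{\ell(r_1)}$ is reset. Since $r_1$ had to pass through $\ROOT$ before reaching $\READY$, $z_{\ell(r_1)}$ was saturated at value $\delta$ when it is reset. So this reset consumes $\delta$ of increment, against a cost of $2\delta$, a factor $2$. In both kinds of match, each reset consumes a distinct saturated phase of the corresponding counter.

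Every counter value lost at a reset was earned as increment, and each unit of increment of $z_v$ is consumed by at most one reset. Hence the total value consumed by all resets is at most $\sum_{\text{leaf }v}y_v$. Let $N_1$ and $N_2$ be the numbers of matches of the first and second kind. Then
\begin{align*}
2\delta N_1+\delta N_2\leq \sum_{\text{leaf }v}y_v,
\end{align*}
and therefore
\begin{align*}
\ALG^d=2\delta(N_1+N_2)\leq 2\left(2\delta N_1+\delta N_2\right)-2\delta N_1\leq 2\sum_{\text{leaf }v}y_v .
\end{align*}

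The step that needs care is checking that every $\READY$-match resets a counter that is genuinely at value $\delta$. Suppose $r_1$ is in $\READY$ or $\ROOT$ while $z_{\ell(r_1)}=\delta$. Could $z_{\ell(r_1)}$ have been reset to $0$ in the meantime by a match involving a different request at the same location? No: by \cref{smart-con}, any request arriving at $\ell(r_1)$ would be matched with $r_1$ immediately, free of charge. After such a co-located match the counter is not reset, and the next request at that location starts in $\ROOT$. That next request can then later be $\READY$-matched with the counter still saturated, and this match is also covered by a full phase of value $\delta$.

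If instead the final match resets the counter, only the increments from that saturated phase are used, so the charging stays injective. One more point to verify is that the accounting charges only to phases that were fully saturated; the argument above shows this is the case for every positive-cost match.
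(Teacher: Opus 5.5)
Your proof is correct and uses the same charging argument as the paper: every cross-location match resets at least one leaf counter sitting at value $\delta$, so each $2\delta$ of connection cost is charged to $\delta$ of counter increment. You also count both resets in the both-saturated case and check explicitly that a $\READY$-match resets a counter still at $\delta$; these fill in details the paper leaves implicit and do not change the route.
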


\begin{proof}
For every match made by $\ALG$ between two requests $r_{1}$ and $r_{2}$ at different locations, at least one counter between $z_{\ell(r_{1})}$ and $z_{\ell(r_{2})}$ is saturated at the moment when the match happens, and the saturated counter is reset to zero simultaneously with the match. Therefore, every $2\delta$ match cost can be charged to $\delta$ of the leaf counter value.
\end{proof}

Now we need to relate the final value $y_{u}$ of the root counter and $y_{v}$'s of leaf counters to the cost of any $\SOL$. Given a solution $\SOL$, for every leaf $v$, let $x_{v}$ denote the delay cost incurred by requests at $v$, $x'_{v}$ denote the connection cost incurred by matching $v$ with another point (i.e. $x'_v=k\delta$ if $v$ is matched with another point for $k$ times by $\SOL$, and the total connection cost is just the sum among all leaf points). We introduce an additional term based on $\ALG$ rather than $\SOL$, which is $b_{v}$, to denote the number of times that $v$ is \emph{carried} by another request under the execution of $\ALG$.
\begin{lemma}\label{upperbound3}
    $y_{u}\leq \SOL^{t}$.
\end{lemma}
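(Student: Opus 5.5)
The plan is a parity argument, mirroring Case~1 in the proof of \cref{bring}. By the rules of \cref{alg}, the root counter $z_{u}$ grows at unit rate exactly during those moments when there is precisely one pending request in the whole metric, that request is in $\ROOT$ (so $z_{\ell(r)}=\delta$), and it has not yet accumulated $\delta$ of root contribution. The counter $z_{u}$ is never reset. Hence
\[
y_{u}=\int \mathbf{1}[z_{u}\text{ is increasing at time }t]\,dt .
\]
It therefore suffices to show that at every such moment $\SOL$ has at least one pending request. Under linear delay $g$, $\SOL$'s delay then grows at rate at least $1$ at all those moments, and integrating gives $\SOL^{t}\geq y_{u}$.

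The key step is the parity claim. Fix a moment $t$ at which $z_{u}$ is increasing, and let $N_{t}$ be the number of requests that have arrived up to $t$. Every match removes exactly two requests, so the number of pending requests under any solution is congruent to $N_{t}$ modulo $2$.
\begin{itemize}
\item Under $\ALG$ there is exactly one pending request at $t$, so $N_{t}$ is odd.
\item Consequently the number of requests pending under $\SOL$ at $t$ is also odd, and in particular at least $1$.
\end{itemize}
I will handle the discrete event times separately. At an arrival or match instant the pending count may momentarily differ, but these instants form a measure-zero set and do not affect the integral. This is the same convention the paper uses before \cref{align}, where instantaneous delay at match moments is negligible.

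The only point needing care is double counting: different moments where $z_{u}$ increases must not charge the same unit of $\SOL$'s delay. This is not an issue here. The charge is pointwise in time: at each such moment we only use that $\SOL$'s total instantaneous delay rate is at least $1$, and distinct moments correspond to disjoint pieces of $\SOL^{t}$. So no disjointness argument such as \cref{nonoverlapping} is needed, and integrating over the set of times where $z_{u}$ increases yields $y_{u}\leq \SOL^{t}$ directly.

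I expect no substantial obstacle. The main thing to verify carefully is that $z_{u}$ increases only when $\ALG$ has exactly one pending request, which is item~2 of \cref{alg}.
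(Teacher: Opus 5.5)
Your proposal is correct and follows essentially the same parity argument as the paper. Whenever $z_{u}$ increases, $\ALG$ has exactly one pending request, so the number of arrivals so far is odd. Hence $\SOL$ also has an odd, and thus positive, number of pending requests, and its delay grows at unit rate synchronously. Your remarks on measure-zero event times and pointwise charging are harmless refinements.
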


\begin{proof}
    At every moment when $y_{u}$ is increasing, there is exactly one request pending, which means that the total amount of requests that have arrived so far is an odd number. As a result, under any solution $\SOL$, there must also be an odd number of requests pending at the moment, and the delay cost is increasing at least at a unit rate synchronously.
\end{proof}

\begin{lemma}\label{upperbound4}
    $y_{v}\leq 2x_{v}+x_{v}'+\delta \cdot b_{v}$ for every leaf $v$.
\end{lemma}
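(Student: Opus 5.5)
We prove $y_v\leq 2x_v+x'_v+\delta\cdot b_v$ by splitting the increments of $z_v$ into phases and charging each phase. A \emph{phase} of $z_v$ is a maximal interval between consecutive resets of $z_v$ to $0$; the last phase may end without a reset. In a completed phase $z_v$ rises by exactly $\delta$, and in the final phase by at most $\delta$. A reset of $z_v$ happens in one of two ways: $v$ is matched with another saturated counter, or a request at $v$ in $\READY$ is matched. When $v$ is carried by some $r$, $z_v$ is \emph{not} reset. So $b_v$ cannot pay for a phase directly. Instead, a carry event is an extra change in alignment status at $v$ caused by $\ALG$, and we pay for that with $\delta$.

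The core is the standard alignment argument of \cite{DBLP:conf/soda/AzarCK17}, adapted to single counters. During a phase, while $z_v$ is increasing there is a request pending at $v$ under $\ALG$. Consider a time $t$ in the phase at which $z_v$ increases.
\begin{itemize}
\item If $v$ is aligned at $t$, then $\SOL$ also has a pending request at $v$. Such moments contribute directly to $x_v$.
\item If $v$ is misaligned at $t$, then by \cref{align} the status must have flipped. Either $\SOL$ matched $v$ (paid by $x'_v$, which gives $\delta$ per match, up to a factor), or $\ALG$ matched $v$.
\end{itemize}
$\ALG$ matches $v$ at a phase boundary, which resets $z_v$, or when $v$ is carried, which is paid by $\delta b_v$. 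So within a phase, if no flip other than the boundary occurs, $v$ stays in one status throughout. If that status is aligned, $\SOL$'s delay at $v$ covers the whole increment $\delta$.

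The misaligned case needs care. Here $\SOL$ has no request at $v$ while $\ALG$ has one. The status only changes when $\SOL$ matches $v$ (cost $\delta$, counted in $x'_v$) or at the end of the phase, when $\ALG$'s reset match flips it back. We charge the misaligned increment of this phase to the next status change: the flip caused by $\ALG$'s match at the end of the phase makes $v$ aligned, or misaligned in the opposite way. In the opposite case $\SOL$ has a pending request at $v$ while $\ALG$ has none, so $\SOL$ accumulates delay $x_v$ until either $\SOL$ matches $v$ (paid by $x'_v$) or a new arrival at $v$ realigns it. This pattern is where the factor $2$ on $x_v$ arises.

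The cleanest route is a potential argument. Define
\[
\Phi_v=\begin{cases}\delta-z_v & \text{if } v \text{ is misaligned},\\ z_v & \text{otherwise},\end{cases}
\]
or a similar variant. We then check in each event that
\[
\Delta y_v+\Delta\Phi_v\leq 2\,\Delta x_v+\Delta x'_v+\delta\,\Delta b_v .
\]
The events to check are:
\begin{itemize}
\item continuous increase of $z_v$ while aligned or misaligned;
\item $\SOL$ delay at $v$ while $z_v$ is idle;
\item $\SOL$ matches $v$, where the status flip changes $\Phi_v$ by at most $\delta$;
\item $\ALG$ resets $z_v$ after saturation;
\item $\ALG$ matches $v$ via a carry, where $z_v$ is not reset, the status flips, and $\Phi_v$ jumps by at most $\delta$, paid by $\delta b_v$;
\item arrivals at $v$, which do not change alignment.
\end{itemize}
The step I expect to be the main obstacle is the $\ALG$ saturation-reset event in the misaligned state. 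Here $z_v$ drops from $\delta$ to $0$ and the status flips, and the potential must absorb this without any paid quantity. I would first try to tune $\Phi_v$ with coefficients, for example $2\delta-z_v$ when misaligned, so that the reset is potential-neutral. I would then verify that the idle-time $\SOL$ delay, with coefficient $2$, covers the extra potential.
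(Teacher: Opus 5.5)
Your potential-function route is viable, but the proposal stops where the proof has to happen. The per-event inequality is announced (``we then check in each event'') rather than checked. The saturation reset is left as an open obstacle. And the retuning you suggest would break the argument. With $\Phi_v=2\delta-z_v$ on misaligned points:
\begin{itemize}
\item an $\ALG$ reset at an aligned $v$ moves $\Phi_v$ from $z_v=\delta$ to $2\delta-0=2\delta$, with nothing paid;
\item a $\SOL$ match at an aligned $v$ with $z_v=0$ moves $\Phi_v$ from $0$ to $2\delta$, while only $\Delta x'_v=\delta$ is paid.
\end{itemize}

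No retuning is needed. Your first potential, $\Phi_v=z_v$ if $v$ is aligned and $\Phi_v=\delta-z_v$ if misaligned, passes every event, including the one you worried about. A reset of $z_v$ only happens at $z_v=\delta$: the request at $v$ being matched is in $\ROOT$ or $\READY$, and by smartness its counter cannot have been reset in between. If $v$ was misaligned, $\Phi_v$ goes from $\delta-\delta=0$ to $z_v=0$. If $v$ was aligned, it goes from $z_v=\delta$ to $\delta-0=\delta$. So $\Delta\Phi_v=0$ in both cases. The remaining checks are:
\begin{itemize}
\item an increase $dz$ while aligned gives $\Delta y_v+\Delta\Phi_v=2\,dz\le 2\,\Delta x_v$, since $\ALG$ has a request at $v$ and hence so does $\SOL$;
\item an increase while misaligned gives $dz-dz=0$;
\item a $\SOL$ match of $v$ to another point, or a carry of $v$, flips the status with $z_v$ fixed, so $|\Delta\Phi_v|=|\delta-2z_v|\le\delta$, paid by $\Delta x'_v=\delta$ or $\delta\,\Delta b_v=\delta$ respectively;
\item arrivals at $v$ flip both parities and change nothing.
\end{itemize}
Since $\Phi_v=0$ at the start and $\Phi_v=z_v\ge 0$ at the end (everything is matched, so $v$ is aligned), summing gives the lemma.

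Drop two statements from your informal paragraph. First, an arrival at $v$ never realigns $v$, as your own event list says. Second, from the state where only $\ALG$ has a request at $v$, an $\ALG$ match can only make $v$ aligned, not ``misaligned in the opposite way''.

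Completed this way, your proof differs from the paper's. The paper uses a two-valued potential $\phi_v\in\{0,\delta\}$ and argues once per reset-to-reset phase. It splits into cases by how many times $\ALG$ and $\SOL$ match $v$ to another point within the phase, with extra $\ALG$ matches being carries. Your continuous potential replaces that case analysis with local checks. It also handles the final incomplete phase uniformly, by discarding the nonnegative terminal value of $\Phi_v$. The paper's one-line treatment of that phase claims $\Delta y_v+\Delta\phi_v\le 0$. That does not hold as stated when $v$ is aligned at the start of the phase and $z_v$ rises during it, although the needed inequality still follows there via $2\Delta x_v$.
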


\begin{proof}
    For each leaf $v$, we divide the whole time axis into phases, where a phase ends immediately after $z_{v}$ is reset to zero, and at the same time the next phase begins. If we only consider how the states change between the beginnings of two consecutive phases, it is easy to see that $\Delta y_{v}=\delta$ for every but the last phase. We define a potential $\phi_{v}$ such that $\phi_{v}=0$ if $v$ is aligned and $\phi_{v}=\delta$ if $v$ is misaligned. Our goal is to prove that
  \begin{equation*}
        \Delta y_{v}+\Delta \phi_{v}\leq 2\Delta x_{v}+\Delta x_{v}'+\delta\cdot\Delta b_{v}
    \end{equation*}
 holds for every phase, and the result follows by summing over all phases, since both at the beginning and at the end it holds that $\phi_{v}=0$ as there is no pending request under both $\ALG$ and $\SOL$.

    For every phase but the last, it holds that $\ALG$ matches a request at $v$ with another at a different point once at the end of the phase, which flips the alignment status and changes the value of $\phi_{v}$ according to \cref{align}. However, it is possible that $\ALG$ matches $v$ with another point for more than once in total, and this can only happen if $v$ is carried by another request in $\READY$. Now we discuss different cases according to how the alignment status of $v$ changes throughout the phase, i.e. how $\ALG$ and $\SOL$ matches $v$ with a different point.

 \subparagraph{Case 1.}If $\SOL$ does not make such a match and $\ALG$ only makes one at the end of the phase, then the alignment status of $v$ stays the same until the end of the phase. If $v$ stays aligned in this period, then at every moment when $y_{v}$ is increasing along with $z_{v}$, there must be one request pending at $v$ that is incurring delay cost as well. Since $\Delta y_{v}=\delta$, it holds that $\Delta x_{v}\geq \delta$. At the end of the phase when $\ALG$ performs a match that involves a request $v$ with another not at $v$, $v$ becomes misaligned, and $\Delta\phi_{v}=\delta$ according to its definition, the inequality thus holds. If $v$ stays misaligned before the end of the phase, then $\Delta\phi_{v}=-\delta$ and the inequality holds as well.
\subparagraph{Case 2.}If $\SOL$ does not make such a match yet $\ALG$ makes more than one match between $v$ and another point before the end of the phase, then $v$ must have been carried by another request, and $b_{v}\geq 1$. If $v$ is carried only once during the phase, then $\Delta\phi_{v}=0$ as the alignment status of $v$ is flipped twice in total and $\delta=\Delta y_{v}+\Delta \phi_{v}=\delta \cdot \Delta b_{v}\leq 2\Delta x_{v}+\Delta x_{v}'+\delta\cdot\Delta b_{v}$. If $v$ is carried at least twice, then $\Delta b_{v}\geq 2$ and $\Delta y_{v}+\Delta \phi_{v}\leq 2\delta$ by the definitions of those amounts, and the inequality holds as well.
\subparagraph{Case 3.}If both $\ALG$ and $\SOL$ match $v$ with another point exactly once during the phase, then $\Delta x'_{v}=\delta$ and $\Delta \phi_{v}=0$, and the inequality follows. If $\SOL$ matches $v$ with another point exactly once but $\ALG$ makes at least two such matches, then $\Delta x'_v=\delta$ and $\Delta b_v\geq 1$, and the inequality holds because $\Delta y_v+\Delta\phi_v\leq 2\delta$. If $\SOL$ matches $v$ with another point at least twice during the phase, then $\Delta x'_{v}\geq 2\delta$ and the inequality holds as well.

    Now we only have the last incomplete phase remained. At the end of it, $\phi_{v}=0$, so $\Delta \phi_{v}$ is either $0$ or $-\delta$, while $\Delta y_{v}\leq \delta$, which implies that $\Delta y_{v}+\Delta\phi_{v}\leq 0$, and the inequality holds. This completes the proof.
\end{proof}

\begin{theorem}
    \cref{alg} achieves competitive ratio $13$ under delay function $g(t)=t$ and $13\cdot2^{k+1}((4\delta)^{k-1}+1)$ under delay function $f(t)=\frac{1}{k}t^{k}+t$ for every $k\geq 2$.
\end{theorem}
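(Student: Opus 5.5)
The plan is to combine the upper bounds on $\ALG$ from \cref{upperbound1,upperbound2} with the lower bounds on $\SOL$ from \cref{upperbound3,upperbound4} and \cref{bringup}. The resulting $13$-competitiveness under $g$ then transfers to $f$ through \cref{impatientreduction} with $T=4\delta$.

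First, add \cref{upperbound1,upperbound2}:
\[
\ALG = \ALG^t+\ALG^d \le 4\sum_{\text{leaf } v} y_v + y_u .
\]
Apply \cref{upperbound4} to each leaf and \cref{upperbound3} to $y_u$. This gives
\[
\ALG \le 8\sum_v x_v + 4\sum_v x'_v + 4\delta\sum_v b_v + \SOL^t .
\]
Here $\sum_v x_v = \SOL^t$, since every request sits at some leaf. Also $\sum_v x'_v = \SOL^d$, since each match of $\SOL$ between distinct points costs $2\delta$ and contributes $\delta$ to each endpoint's $x'$. Hence the first, second and last terms together are at most $9\SOL^t + 4\SOL^d \le 9\SOL$.

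Next I bound the carried term $4\delta\sum_v b_v$. Each carry event is a match between a $\READY$ request $r$ and a request at another point $v$. Each such request $r$ is matched exactly once, so it carries at most one point, and therefore $\sum_v b_v \le |R|$. By \cref{bringup}, read as $\SOL \ge \delta |R|$, we get $4\delta\sum_v b_v \le 4\SOL$. Altogether $\ALG \le 13\,\SOL$, and taking $\SOL = \OPT$ gives ratio $13$ under $g$.

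For the convex delay, the earlier claim shows $\ALG$ is $4\delta$-impatient. \cref{impatientreduction} with $T=4\delta$ and $\gamma = 13$ then gives the ratio $13\cdot 2^{k+1}((4\delta)^{k-1}+1)$.

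The step I expect to need the most care is the bookkeeping around carries. One must make sure $b_v$ counts each carry event once and charge it to the correct request. One must also check that the disjointness in \cref{nonoverlapping} really yields $\SOL \ge \delta|R|$, which is the intended reading of the slightly garbled \cref{bringup}. A second subtle point is the convention that $x'_v$ equals $\delta$ per match, which is needed so that $\sum_v x'_v = \SOL^d$ exactly and not $\SOL^d/2$. Should the intended convention differ, the constant would change accordingly, but the structure of the argument stays the same.
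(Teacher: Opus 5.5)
Your proof is correct and follows the paper's proof step for step. You sum \cref{upperbound1,upperbound2} to get $\ALG\le 4\sum_v y_v+y_u$, then apply \cref{upperbound3,upperbound4} to reach $9\,\SOL^t+4\,\SOL^d+4\delta\sum_v b_v$, bound the last term by $4\,\SOL$ via \cref{bringup}, and transfer to $f$ through \cref{impatientreduction} with $T=4\delta$ and $\gamma=13$. Your remark that each $\READY$ request is matched exactly once, so $\sum_v b_v\le|R|$, fills in a step the paper leaves implicit when it cites the garbled corollary.
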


\begin{proof}
    From Lemmas~\ref{upperbound1} and~\ref{upperbound2}, the algorithmic cost $\ALG=\ALG^{t}+\ALG^{d}$ is at most $4\sum_{\text{leaf }v}y_{v}+y_{u}$. By Lemmas~\ref{upperbound3} and~\ref{upperbound4}, this is at most $\SOL^{t}+\sum_{\text{leaf }v}(8x_{v}+4x_{v}'+4\delta \cdot b_{v})=9\cdot \SOL^{t}+4\cdot \SOL^{d}+4\delta \cdot \sum_{\text{leaf }v}b_{v}$, where $b_{v}$ is the number of times that $v$ is carried by another request. By \cref{bringup}, $\delta\cdot \sum_{\text{leaf }v}b_{v}\leq \SOL$. All combined together, we have $\ALG\leq 13\cdot\SOL$, and this is the competitive ratio that $\ALG$ achieves under the linear delay function. According to \cref{impatientreduction}, it achieves competitive ratio $13\cdot2^{k+1}((4\delta)^{k-1}+t)$ under the delay function $f(t)=\frac{1}{k}t^{k}+1$ for $k>1$.
\end{proof}

\begin{remark}\label{morefunctions}
In this section, we use functions of the form $f(t)=\frac{1}{k}t^{k}+t$ for every $k>1$ as representatives as all monotone convex polynomials, and show that the competitive ratio can be preserved up to a constant factor independent of $n$ and $\delta$ because of the approximation properties as \cref{approx}. In fact, the $O(1)$-competitive ratio can be achieved by the same algorithm for a broader class of non-negative, non-decreasing convex functions where $f'(0)>0$, as long as they hold similar properties to \cref{convex} that $\frac{f(x)-f(T)}{f(x-T)}$ is upper bounded by a positive value depending on $T$ when $x\rightarrow \infty$ for any $T>0$. This covers every non-negative, non-decreasing and smooth convex function $f$ that $f'(0)>0$ whose leading term is $\Theta (x^{\alpha})$ for any $\alpha>1$, as well as $\Theta(e^{x})$. However, exceptions do exist. For example, $f(x)=e^{x^{2}}-1$ does not satisfy this property.
\end{remark}

\section{Discussions and Open Problems}

In this paper, we improved both upper and lower bounds of MPMD-Size on every $n$-point metric space. We show that the problem is equivalent to MTS on another $2^{n-1}$-point metric space, and the results are consistent with the state-of-the-art results of MTS. The deterministic competitive ratio is settled up to a constant as the $\Theta(N)$ bound of MTS holds for every $N$-point metric. However, there is still a gap between the upper and lower bounds for randomized algorithms. This is because the randomized competitive ratio of MTS depends on metric structures: as the randomized competitive ratio of $O(\log N)$ can be achieved on some $N$-point metrics like uniform metrics \cite{DBLP:conf/stoc/BorodinLS87} and weighted stars \cite{DBLP:journals/siamcomp/BubeckCLL21}, there also exist $N$-point metrics where the lower bound of $\Omega(\log ^{2}N)$ holds \cite{DBLP:conf/stoc/BubeckCR23}. As a result, to make further improvements, we need to study the structure of the metric space $(\kv, D)$. Given a metric space $(V,d)$, the definition of $(\kv, D)$ is naturally from the Boolean status of every point in $v$, with some similarities to the earthmover distance, yet this structure has not been well studied, and may be of independent interest.

Our study of MPMD-Convex on uniform metrics covers the case $f'(0)>0$ that has not been considered before, and very surprisingly, such local property at $t=0$ totally changes the competitiveness compared with the case $f'(0)=0$. As the majority of smooth convex delay functions have been covered on uniform metrics, the natural open problem is to design an algorithm for convex delays on more general metrics. There are still some open problems on uniform metrics as well: when we consider randomization, there is only a lower bound of $\Omega(\log n)$ for unbounded, non-decreasing and continuous functions that $f(0)=f'(0)=0$, while it is open whether there is a randomized algorithm that can achieve competitive ratio of $O(\log n)$ in this case. As for the case $f'(0)>0$, we note that although \cref{alg} achieves constant competitive ratio when the distance between points is a fixed constant $2\delta$, the competitive ratio depends on $\delta$ and tends to infinity when $\delta$ is sufficiently large. This is different from the linear delay case where the scale of the metric does not affect the competitive ratio. As a result, it might be interesting to study whether it is possible to design an algorithm whose competitive ratio is also independent of $\delta$.

Another natural research direction is to study the bipartite case of MPMD-Size and MPMD-Convex, where every request has either positive or negative polarity and only requests of opposite polarities can be matched. The online min-cost bipartite perfect matching with delay problem has been studied for linear~\cite{DBLP:conf/approx/AshlagiACCGKMWW17} and concave delays~\cite{DBLP:conf/soda/AzarRV21}, yet there is no result for convex or size-based delay. While the proofs and results by Deryckere and Umboh~\cite{DBLP:conf/approx/DeryckereU23} can be easily adapted for the bipartite matching problem by treating every even-sized subset whose sum of polarities is zero as an MTS state, our method for MPMD-Size does not easily extend to the bipartite case as it is not sufficient to express the states with Boolean vectors because multiple requests of the same polarity can accumulate at the same location.

\bibliography{Reference}

\end{document}